  \providecommand\BibTeX{{%
    \normalfont B\kern-0.5em{\scshape i\kern-0.25em b}\kern-0.8em\TeX}}}
\newcommand{\qw}[1][-1]{\ar @{-} [0,#1]}
\newcommand{\qwx}[1][-1]{\ar @{-} [#1,0]}
\newcommand{\gate}[1]{*+<.6em>{#1} \POS ="i","i"+UR;"i"+UL **\dir{-};"i"+DL **\dir{-};"i"+DR **\dir{-};"i"+UR **\dir{-},"i" \qw}
\newcommand{\meter}{*=<1.8em,1.4em>{\xy ="j","j"-<.778em,.322em>;{"j"+<.778em,-.322em> \ellipse ur,_{}},"j"-<0em,.4em>;p+<.5em,.9em> **\dir{-},"j"+<2.2em,2.2em>*{},"j"-<2.2em,2.2em>*{} \endxy} \POS ="i","i"+UR;"i"+UL **\dir{-};"i"+DL **\dir{-};"i"+DR **\dir{-};"i"+UR **\dir{-},"i" \qw}
\newcommand{\control}{*!<0em,.025em>-=-<.2em>{\bullet}}
\newcommand{\ctrl}[1]{\control \qwx[#1] \qw}
\newcommand{\multigate}[2]{*+<1em,.9em>{\hphantom{#2}} \POS [0,0]="i",[0,0].[#1,0]="e",!C *{#2},"e"+UR;"e"+UL **\dir{-};"e"+DL **\dir{-};"e"+DR **\dir{-};"e"+UR **\dir{-},"i" \qw}
\newcommand{\ghost}[1]{*+<1em,.9em>{\hphantom{#1}} \qw}
\newcommand{\lstick}[1]{*!R!<.5em,0em>=<0em>{#1}}
\newcommand{\Qcircuit}{\xymatrix @*=<0em>}
\DeclarePairedDelimiter{\ceil}{\lceil}{\rceil}
\newcommand{\HilD}{\mathcal{H}^D}
\newcommand{\Hild}{\mathcal{H}^d}
\newcommand{\Hildout}{\mathcal{H}^{d}_{out}}
\newcommand{\Hildperp}{\mathcal{H}^{d^{\perp}}}
\newcommand{\Hildperpo}{\mathcal{H}^{d^{\perp}}_{out}}
\newcommand{\E}{\mathcal{E}}
\newcommand{\T}{\mathcal{T}}
\newcommand{\Ti}{\mathcal{T}_{ideal}}
\newcommand{\A}{\mathcal{A}}
\newcommand{\kc}{\ket{\phi^c_i}}
\newcommand{\kr}{\ket{\phi^r_i}}
\newcommand{\krm}{\ket{\phi^r_i}^{\otimes M}}
\newcommand{\kp}{\ket{\phi^p_i}}
\newcommand{\kb}{\ket{\phi^b_i}}
\newcommand{\ke}{\ket{\phi^{\mathcal{E}}_i}}
\newcommand{\negl}{negl}
\newcommand{\nonnegl}{non\text{-}\negl}
\newcommand{\Sin}{S_{in}}
\newcommand{\Sout}{S_{out}}
\newcommand{\cpp}{hrv-id}
\newcommand{\qp}{lrv-id}
\newcommand{\cps}{hrv-id-swap}
\newcommand{\cpg}{hrv-id-gswap}
\newtheorem{theorem}{Theorem}
\newtheorem{definition}{Definition}
\begin{document}

\title{ Client-Server Identification Protocols with Quantum PUF}

\author{Mina Doosti}
\email{m.doosti@sms.ed.ac.uk}
\orcid{0000-0003-0920-335X}
\author{Niraj Kumar}
\email{nkumar@exseed.ed.ac.uk}
\author{Mahshid Delavar}
\email{mahshid.delavar@gmail.com}
\affiliation{%
  \institution{School of Informatics, University of Edinburgh}
    \streetaddress{10 Crichton St.}
    \city{Edinburgh}
    \country{United Kingdom}
}

\author{Elham Kashefi}
\email{ekashefi@exseed.ed.ac.uk}
\affiliation{%
  \institution{School of Informatics, University of Edinburgh}
    \streetaddress{10 Crichton St.}
    \city{Edinburgh}
    \country{United Kingdom;}
  \institution{CNRS, LIP6, Sorbonne Universit\'{e}, Paris}
  \streetaddress{4 place Jussieu}
  \city{Paris}
  \country{France}}

\begin{abstract}
Recently, major progress has been made towards the realisation of quantum internet to enable a broad range of classically intractable applications. These applications such as delegated quantum computation require running a secure identification protocol between a low-resource and a high-resource party to provide secure communication. In this work, we propose two identification protocols based on the emerging hardware secure solutions, the quantum Physical Unclonable Functions (qPUFs). The first protocol allows a low-resource party to prove its identity to a high resource party and in the second protocol, it is vice-versa. Unlike existing identification protocols based on Quantum Read-out PUFs which rely on the security against a specific family of attacks, our protocols provide provable exponential security against any Quantum Polynomial-Time adversary with resource-efficient parties. We provide a comprehensive comparison between the two proposed protocols in terms of resources such as quantum memory and computing ability required in both parties as well as the communication overhead between them.
\end{abstract}

\begin{CCSXML}
<ccs2012>
   <concept>
       <concept_id>10002978.10002979</concept_id>
       <concept_desc>Security and privacy~Cryptography</concept_desc>
       <concept_significance>500</concept_significance>
       </concept>
   <concept>
       <concept_id>10010583.10010717</concept_id>
       <concept_desc>Hardware~Hardware validation</concept_desc>
       <concept_significance>500</concept_significance>
       </concept>
   <concept>
       <concept_id>10003033.10003039.10003040</concept_id>
       <concept_desc>Networks~Network protocol design</concept_desc>
       <concept_significance>500</concept_significance>
       </concept>
 </ccs2012>
\end{CCSXML}

\ccsdesc[500]{Security and privacy~Cryptography}
\ccsdesc[500]{Hardware~Hardware validation}
\ccsdesc[500]{Networks~Network protocol design}
\keywords{Identification, Entity authentication, Hardware security, Quantum cryptography, Network protocols.}

\maketitle

\section{Introduction} 
The recent advances in developing the quantum internet have enabled a broad range of applications from simple secure communication all the way to delegated quantum computation, with no counterparts in classical networks  \cite{broadbent2016quantum,fitzsimons2017private,wehner2018quantum,QuantumPZoo, pirandola2019advances, diamanti2019demonstrating, kumar2019practically, unruh2013everlasting}.

For most of such applications, a key security feature is the ability of secure authentication which provides a central role in performing secure communications over untrusted channels \cite{alagic2017quantum, dulek2019secure, boneh2013quantum}. Amongst different types of required security features, including confidentiality and authentication of data, mutual entity authentication is a crucial, yet most neglected, aspect \cite{kang2018controlled}. Entity authentication also referred to as \emph{identification}, is a method to prove the identity of one party called \emph{prover} to another party called \emph{verifier}. 
The focus of this work is to propose resource-efficient solutions for the purpose of mutual entity authentication between two parties in a quantum network by exploring the advantages of quantum communication. We consider both complementary scenarios where either the trusted verifier or a potentially malicious prover has limited resources in the identification protocol. 
To better motivate the two scenarios, consider the quantum cloud service platforms that are commercially available today \cite{arute2019quantum, cross2018ibm, computing2019pyquil, bergholm2018pennylane, blinov2021comparison}. In the first setting, a client with a low quantum resource (such as the one defined in \cite{broadbent2009universal}) wishes to identify a high-resource quantum centre that they perhaps have had a previous contract with, before proceeding to access their platform and load its sensitive data. In the complimentary setting, the quantum cloud provider wishes to verify the identity of its customer possessing low quantum resources before providing them with access. This asymmetry between the verifier and the prover calls for `party resource-specific' identification protocols which exploit this asymmetry to enhance the efficiency.

Among the recent works, Physical Unclonable Functions (PUFs) have emerged as cost-efficient, low-resource hardware tokens to achieve entity authentication \cite{delvaux2017security,herder2014physical, vskoric2012quantum, nikolopoulos2017continuous}. 
A PUF device utilises the random physical disorders that occur during the manufacturing process to provide security features. This randomness provides the desired high min-entropy and unpredictability features, and hence the PUF does not rely on extra cryptographic properties in the device \cite{herder2014physical,armknecht2016towards}. Assessing information from a PUF involves querying the device with a `challenge' (for example an electrical signal or an optical pulse) and obtaining a recognizable `response'. This response should be robust for a particular PUF device but highly variable for different PUFs in a way that for an adversary, each device seems to output a completely random response.  An example of a PUF is an optical glass slab with an inhomogeneous refractive index such that shining a laser-pulse with a fixed frequency and angle of incidence, results in the output pulse with fixed (or very less divergent) frequency. However, another glass slab with a slight difference in the distribution of index of refraction results in the output pulse with different characteristics for the same incident light \cite{pappu2002physical}. This uniqueness in the challenge-response pair for a particular PUF is the core feature in realising entity authentication and other cryptographic functionalities. 
Other hardware realisations of PUF include SRAM PUF, Ring Oscillator PUF and Arbiter PUF \cite{guajardo2007fpga,gassend2002silicon,suh2007physical}. However, recent cryptanalysis has shown that conventional PUF hardware devices do not provide rigorous security guarantees as anticipated and the unpredictability feature is compromised by modelling attacks \cite{ruhrmair2010modeling,ganji2016strong,ruhrmair2010modeling,khalafalla2019pufs}.

Some of these security issues are overcome with the recently proposed PUFs that utilise the properties of quantum mechanics \cite{arapinis2021quantum,vskoric2010quantum,goorden2014quantum,nikolopoulos2017continuous, gianfelici2020theoretical}.  Referred to as quantum PUF, or qPUF, these are completely positive trace-preserving operations that are accessed via sets of unique challenge-response pairs which are quantum states. Implementations of these devices include optical qPUF \cite{nikolopoulos2017continuous,goorden2014quantum}. One major advantage of qPUFs compared to previous PUF proposals is that apart from the high-min entropy of the qPUF device, the challenges and responses also exhibit high-min entropy due to the unclonability property in quantum mechanics \cite{wootters1982single}. This extra feature is non-existent in previous PUFs since the challenges and responses being classical states, can be perfectly cloned.  Hence it serves as a great motivation to study qPUF resource and security performance in achieving various cryptographic functionalities. Our current work provides two proposals for achieving identification using qPUFs.  \par
 
Intending to perform low-cost secure identification of the prover by the verifier using qPUF, we give a categorisation of the resources into three major segments. First is the `memory resource' which quantifies the type and amount of resources that a party possesses. It can either be a classical memory that we label as low cost or a quantum memory which is high cost since such a memory tends to be highly fragile and dissipative to the environment \cite{lvovsky2009optical}. Second is the `computing ability' resource which indicates the kind of operations a given party has the ability to perform. We denote a party with high computing ability as the one that can perform any bounded quantum polynomial quantum circuit operations \cite{watrous2003complexity}, and a low ability party as the one who is restricted to generation and measurement of quantum states in certain basis. And the third resource is the type and number of `communication rounds' required between the parties to establish identification. Often it is not possible to devise an identification scheme which minimises all the three types of resources for both the involved parties without compromising the underlying security. Hence, in this work, we propose two qPUF based identification schemes which achieve similar security guarantees but are vastly different in terms of the resource requirement for the involved parties. This allows the flexibility to deploy either of these schemes depending on individual constraints. \par

Our first proposal is a secure qPUF-based device identification protocol which requires the prover to only have access to the valid qPUF device without the requirement for any quantum memory or quantum computational resource, while the verifier is required to possess a local quantum database and the ability to perform quantum operations. This covers the scenario presented before where a quantum cloud provider wants to identify its customer. This type of qPUF-based identification protocols has been previously studied with different qPUF formalism \cite{vskoric2012quantum,nikolopoulos2017continuous}. In our work, we follow the formal definitions of a qPUF as proposed in \cite{arapinis2021quantum} which assumes that a qPUF is modelled by an unknown unitary operation of exponential size i.e. none of the involved parties, with polynomial resources, have a complete description of the device. This property of qPUF necessitates the use of a quantum distinguishing test in the protocol since the resulting response stats of the qPUF device are unknown states \cite{montanaro2013survey,buhrman2001quantum, chabaud2018optimal}.  This is in contrast with the previous quantum identification proposals, where some knowledge of the quantum operation was implicitly assumed to be known the parties, thus not necessitating the use of quantum distinguishability tests. However, this extra information allows proving the security against an only specific type of adversarial attacks. Our work generalises to provide exponentially high security against any quantum polynomial-time (QPT) adversary.

Our second proposal is a qPUF based protocol where the prover has a high computational resource, while, the verifier runs a purely classical algorithm, hence does not require to perform quantum operations. The verifier is however required to possess a local quantum database. This protocol can enable an almost classical client, to identify a quantum server in a quantum network. This protocol has a major advantage compared to the previous protocol that requires only one-way quantum communication. Construction of this protocol has taken inspiration from the ideas of blind quantum computing \cite{broadbent2009universal} to introduce the idea of randomly placing trap quantum states in-between the valid states. This, coupled with the unknown property of qPUF device provides provable security against any QPT adversary. \\

\noindent\textbf{Related Works:}
The idea of taking advantage of quantum communication between the verifier and the prover in PUF-based identification protocols was first introduced by Skoric in \cite{vskoric2010quantum}.  He defined the concept of \emph{quantum read-out of PUF (QR-PUF)} and designed an identification protocol based on it. The security of this protocol has been proved against special kinds of attacks including intercept-resend \cite{vskoric2010quantum,vskoric2012quantum}, Challenge Estimation \cite{vskoric2016security} and Quantum Cloning \cite{yao2016quantum} attacks. The practical realization of this protocol was shown by Goorden et al. \cite{goorden2014quantum}. 
In another work, Nikolopoulos and Diamanti introduced a different setup for QR-PUF-based identification protocol in which classical data is encoded to the continuous quadrature components of the quantized electromagnetic field of the probe \cite{nikolopoulos2017continuous}. 
The security of this scheme has also been proved in \cite{nikolopoulos2018continuous,fladung2019intercept} against a bounded adversary who can only prepare and measure the quantum states.
The common feature of the mentioned protocols \cite{vskoric2010quantum,nikolopoulos2017continuous} is full or partial knowledge of the verifier from the unitary modelling the QR-PUF. Recently, Arapinis et al. \cite{arapinis2021quantum} have introduced a novel notion of PUF, called qPUF. According to their definition, unlike the QR-PUFs and the same as classical PUFs, no one even the manufacturer and the verifier has no knowledge about the unitary of qPUF. This requirement leads to provable security of qPUFs against forgery attacks. Due to the considerable security features of qPUFs, we propose our identification protocols based on this kind of PUFs.
The main advantage of our proposals over the previous ones is their provable security against the most general form of attacks considering a QPT adversary. 
The other related works in the context of quantum related PUFs are \cite{gianfelici2020theoretical} and \cite{young2019quantum} where the former presents a theoretical framework for QR-PUF and the later is a different type of PUF based on quantum mechanics laws.

\section{Preliminaries} \label{sec:prelims}

This section presents the different ingredients required to construct a secure qPUF-based authentication scheme.

\subsection{Quantum Physical Unclonable Functions}

A quantum PUF, or qPUF, is a secure hardware cryptographic device which utilises the property of quantum mechanics \cite{arapinis2021quantum}. Similar to a classical PUF \cite{armknecht2016towards}, a qPUF is assessed via challenge and response pairs (CRP). However, in contrast to a classical PUF where the CRPs are classical states, the qPUF CRPs are quantum states. 

A qPUF manufacturing process involves a quantum generation algorithm, `QGen', which takes as an input a security parameter $\lambda$ and generates a PUF with a unique identifier \textbf{id},

\begin{equation}
    \text{qPUF}_{\textbf{id}} \leftarrow \text{qGen}(\lambda)
\end{equation}

Next we define the mapping provided by $\text{qPUF}_{\textbf{id}}$ which takes any input quantum state $\rho_{in} \in \mathcal{H}^{d_{in}}$ to the output state $\rho_{out} \in \mathcal{H}^{d_{out}}$. Here $\mathcal{H}^{d_{in}}$ and $\mathcal{H}^{d_{out}}$ are the input and output Hilbert spaces respectively corresponding to  the mapping that $\text{qPUF}_{\textbf{id}}$ provides as illustrated in Fig.~\ref{fig:puf}. This process is captured by the `qEval' algorithm which takes as an input a unique $\text{qPUF}_{\textbf{id}}$ device and the state $\rho_{in}$ and produces the state $\rho_{out}$,

\begin{equation}
    \rho_{out} \leftarrow \text{qEval}(\text{qPUF}_{\textbf{id}}, \rho_{in})
\end{equation}

A qPUF is labelled secure if it satisfies a few necessary requirements. The first property, \textbf{robustness}, ensures that if the qPUF is queried separately with two input quantum states $\rho_{in}$ and $\sigma_{in}$ that are $\delta_r$-indistinguishable to each other, then the output quantum states $\rho_{out}$ and $\sigma_{out}$ must also be $\delta_r$-indistinguishable. More formally we have:\\

\noindent\textbf{$\delta_r$-Robustness} On any two input states $\rho_{in}$ and $\sigma_{in}$ that are $\delta_r$-indistinguishable, the corresponding output quantum states $\rho_{out}$ and $\sigma_{out}$ are also $\delta_r$-indistinguishable with overwhelming probability,
\begin{equation}
 \mathrm{Pr}[\delta_r\le F(\rho_{out}, \sigma_{out})\le 1] \geqslant 1 - \negl(\lambda).
\end{equation}


where $\negl(\lambda)$ is a negligible quantity dependent on the desired security parameter, and the probability is taken over the choice of all the states. We say two quantum states $\rho$ and $\sigma$ are $\delta$-indistinguishable if $\delta \leqslant F(\rho, \sigma) \leqslant 1$, where $F(\rho, \sigma) = \text{Tr}\sqrt{\sqrt{\rho}\sigma\sqrt{\rho}}$ is the fidelity distance measure between the quantum states. Alternatively, other distance measures such as trace norm, euclidean norm (any shatten-p norm) can also be used to define security requirements for qPUF. 

The second property, \textbf{collision resistance}, ensures that if the same qPUF is queried separately with two input quantum states $\rho_{in}$ and $\sigma_{in}$ that are $\delta_c$-distinguishable, then the output states $\rho_{out}$ and $\sigma_{out}$ must also be $\delta_c$-distinguishable with an overwhelmingly high probability, more preceisely,\\

\noindent\textbf{$\delta_c$-Collision-Resistance (Strong)} For any qPUF on any two input states $\rho_{in}$ and $\sigma_{in}$ that are $\delta_c$-distinguishable, the corresponding output states $\rho_{out}$ and $\sigma_{out}$ are also $\delta_c$-distinguishable with overwhelming probability,
\begin{equation}
\mathrm{Pr}[0\le F(\rho_{out}, \sigma_{out})\le 1-\delta_c] \geqslant 1 - \negl(\lambda).
\end{equation}


We say two quantum states $\rho$ and $\sigma$ are $\delta$-distinguishable if $0 \leqslant F(\rho, \sigma) \leqslant  1 - \delta$. The parameters $\delta_r$ and $\delta_c$ are determined by the security parameter $\lambda$. The properties defined above are crucial for the correctness of secure systems composed of qPUFs. Also for qPUFs, the condition $\delta_c \leqslant 1 - \delta_r$ must be satisfied to characterise a desired qPUF.

All the above properties can be satisfied by a unitary map i.e. if $\text{qPUF}_{\textbf{id}}^{\dagger}\text{qPUF}_{\textbf{id}} = \mathbf{I}$, where $\mathbf{I}$ is an identity matrix. As a consequence, here we consider the qPUF construction to be a unitary matrix $U \in \mathbb{C}^{D \times D}$, where $D = d_{in} = d_{out}$. \footnote{Other CPTP maps that attach an ancilla such that $d_{out} > d_{in}$ also satisfy all the properties. We do not consider such maps for the construction of PUFs. This could however be an interesting line of extension of PUFs.}

\begin{figure}[ht!]
\includegraphics[scale=0.45]{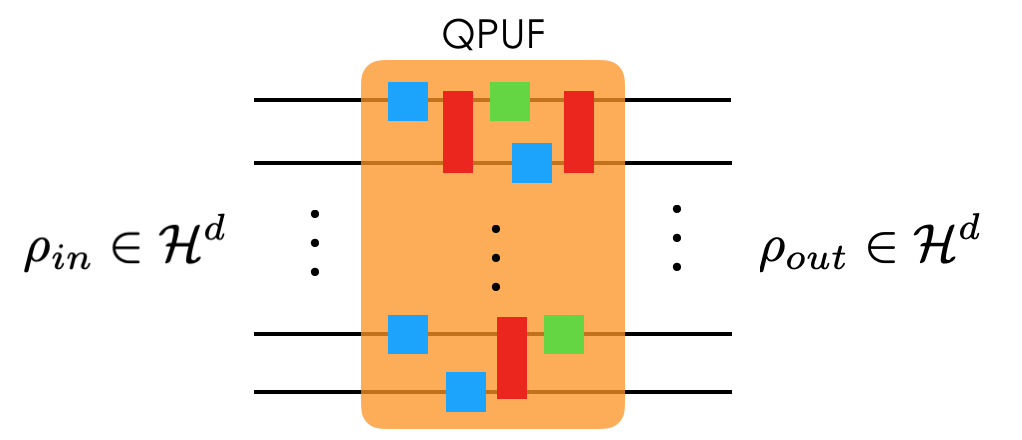}
\centering
\caption{Illustration of qPUF as a unitary operation with input and output quantum states in $\mathcal{H}^d$. The blue and green boxes are single-qubit gates, while red boxes are two-qubit gates. These are the building blocks for the qPUF construction.}
\label{fig:puf}
\end{figure}

A crucial security feature of the qPUF device is the unforgeability property. It states that estimating the response of the device with high enough fidelity when a challenge is picked uniformly at random from the Haar measure states is exponentially unlikely without possessing the device. Formally this means that for a challenge state $\rho_{in} \in \mathcal{H}^{D}$,
\begin{equation}
    \text{Pr}[F(\sigma, \rho_{out}) \geqslant 1 - \mu| \rho_{in} \in \mathcal{H}^{D}] \leqslant \negl(\log D)
\end{equation}
where $\sigma$ is the optimal response generated to a given challenge $\rho_{in}$, $\rho_{out}$ is the response generate by qPUF device on the given challenge and $\mu = \negl(\log D)$. We note that the security parameter of qPUFs is related to their dimension. For instance, for an n-qubit qPUF, with a unitary of size $D=2^n \times 2^n$, the security parameter is usually selected to be $n$.
%
\subsection{Quantum Adversarial Model and Security Definitions}
Strong notions of the security of quantum cryptographic proposals require cryptanalysis against adversaries which also possess quantum capabilities of varying degree \cite{boneh2011random, mosca2018cybersecurity, song2014note}.
The strongest such notion is achieved by assuming no restrictions on the adversary's computational power and resources. This security model, also known as security against \emph{unbounded adversary}, is usually too strong to be achieved by most cryptographic primitives such as qPUFs.
It has been shown in~\cite{arapinis2021quantum}, that unitary qPUFs cannot remain secure against an unbounded adversary. Thus the standard security model that we also use in this paper is the notion of security against efficient quantum adversaries or in other words quantum polynomial time (QPT) adversaries. We define such an adversary attack in the context of qPUFs. A QPT adversary with query access to the qPUF is defined as an adversary that can query the qPUF oracle with polynomially many (in the security parameter) challenges and has polynomial sized quantum register to store the quantum CRPs. The QPT adversary is also allowed to run any efficient quantum algorithm in the class BQP. This is the quantum analogue of \textit{Chosen Message Attack (CMA)} model in the classical cryptography, where the adversary is allowed for querying the primitive with messages of his choice in an adaptive way. The security of most qPUF-based cryptographic protocols relies on the unforgeability property of qPUF which is described previously.

Here we follow the same definitions of \emph{existential} and \emph{selective} unforgeability defined in~\cite{arapinis2021quantum} and restate them as follows:

\begin{enumerate}
    \item \emph{Existential unforgeability}: A qPUF satisfies \emph{existential unforgeability} if having access to a register $S$ containing a polynomial number pairs of challenges ,selected by the adversary, and their respective responses from qPUF, the probability that any QPT adversary $\A_{\text{QPT}}$ chooses a quantum challenge $\rho_{in}$ which is $\mu$-distinguishable from all challenges $S$, and successfully generates a response $\sigma$  which is $\epsilon$-indistinguishable from the valid qPUF's response $\rho_{out}$, is bounded by a negligible function of the security parameter. In other words, no QPT adversary can generate even a single \emph{valid} new quantum challenge-response pair with non-negligible probability,
     \begin{equation}
     \begin{split}
         \text{Pr}\big[(\rho_{in}, \sigma) \wedge F(\sigma, \rho_{out}) \geqslant 1 - \epsilon\big| F(\rho_{in},\rho) \leqslant 1 - \mu \hspace{1mm}, \forall \rho \in S_{in}\big] \leqslant \negl(\lambda)
         \end{split}
     \end{equation} 
    where $S_{in}$ is the set of all challenges in the $S$ register. 

    \item \emph{Selective unforgeability}:  A qPUF satisfies \emph{selective unforgeability} if having access to a register $S$ containing a polynomial number pairs of challenges ,selected by the adversary, and their respective responses from qPUF, the average probability that any QPT adversary $\A_{\text{QPT}}$ receives a quantum challenge chosen uniformly at random $\rho_{in}$, 
    and successfully generates a response $\sigma$ which is $\epsilon$-indistinguishable from the valid qPUF's response $\rho_{out}$, is bounded by a negligible function of the security parameter. In other words, no QPT adversary can generate \emph{valid} quantum responses for randomly selected challenges, on average with non-negligible probability,
     \begin{equation}
     \begin{split}
         \underset{\rho_{in} \in \HilD}{\text{Pr}}\big[F(\sigma, \rho_{out}) \geqslant 1 - \epsilon \big] \leqslant \negl(\lambda)
         \end{split}
     \end{equation} 
      where $\HilD$ is the Hilbert space from which the challenges are being picked uniformly according to the Haar measure.
\end{enumerate}
Note that in both the attack models, we allow for the possibility of adaptive kinds of attacks from the adversary \cite{armknecht2016towards}.
The results in~\cite{arapinis2021quantum} shows that a unitary qPUF cannot satisfy existential unforgeability against QPT adversaries. This is due to the existence of a quantum emulation based algorithm which states that picking a new challenge $\rho_{in}$ in the subspace spanned by the challenges in $S$ register such that $\rho_{in}$ is $\mu$-distinguishable from all the challenges in $S$, it is efficiently possible to output a response state $\sigma$ such that $F(\sigma, \rho_{out}) \approx 1$.   qPUFs however do satisfy 
selective unforgeability \cite{arapinis2021quantum}. Their result states that the success probability of any QPT adversary to output the response of a Haar random challenge state $\rho \in \HilD$ with non-negligible fidelity is bounded by:

\[\text{Pr}_{success} = \underset{\mathcal{A}_{QPT}}{\max}\hspace{1mm} \underset{\rho\in\HilD}{\text{Pr}}[F(\sigma, U\rho U^{\dagger}) \geqslant \delta] \leqslant \frac{d+1}{D}\]

where $\Sin$ is the set of challenges in the $S$ register and $d$ is the dimension of the challenge subspace known to the $\mathcal{A}_{QPT}$ via the $\Sin$ register. $D$ is the size of the qPUF unitary and $\delta$ is a negligible function in poly$(\log(D))$.
In our work, we assume the qPUF is an unknown unitary transformation. This assumption allows us to use the qPUF as a selectively unforgeable device according to the above definition. We restate the proof of qPUF unforgeability in the Appendix~\ref{sec:sel-unf}. Moreover, another quantum toolkit that we use for our protocol is Equality testing of the quantum state, which is to test whether two \emph{unknown} quantum states are the same. This is a well-studied topic and we describe the optimal quantum protocols for Equality testing in the Appendix~\ref{sec:itest}.
%
\subsection{General Description of device-based identification protocol}\label{sec:protocol}

An identification protocol, also called a device-authentication protocol, is run between a verifier and a prover. A verifier's task is to check the identity of the prover by identifying whether the prover is the correct owner of a valid device. Our setting assumes that the verifier and the prover having a valid device behave honestly. The security is provided against an adversary who has had limited access to the valid device in the past and currently does not possess the valid device. Based on the limited knowledge that the adversary has, their objective is to successfully impersonate themselves as the valid owner of the device. Prior to providing the details of the construction of device identification protocols using qPUF, we describe a common structure in these protocols. Any such protocol consists of three sequential phases: \emph{setup phase} (or enrollment phase), \emph{identification phase} and \emph{verification phase} \cite{nikolopoulos2017continuous,vskoric2010quantum,pappu2002physical}.

\begin{enumerate}
    \item \emph{Setup phase}: A setup phase is the beginning phase of the protocol. Here the verifier has the valid device (In this case a PUF/qPUF) and locally prepares a database consisting of multiple challenge and response pairs of this device. The challenges and responses, namely Challenge-Response pairs (CRPs) are stored in the verifier's local database. For protocols we define over the next sections, we assume that the verifier's quantum capabilities are restricted to quantum polynomial time. Hence the size of verifier's database can only be polynomial while the device itself is of exponential size. Once the local database is generated, the device is physically transferred to the prover over a public channel.
    
    \item \emph{Identification phase}: The setup phase is followed by the identification phase where the verifier sends one or multiple challenges, usually chosen at random, to the prover from the CRP database. The challenge(s) is sent over a public (quantum) channel to the prover.

    The prover who has the valid device obtains the responses of the received challenges by interacting them with the device and produces the response. Then the prover sends either the response directly, or sends some classical or quantum information related to the response to the verifier. We note that qPUF-based identification protocols would mostly differ in this phase by varying the number of challenges sent to the prover and the type of information received by the verifier. 
    
    \item \textit{Verification phase}: In the verification phase, the verifier runs a quantum or classical verification algorithm on the information received from the prover. We denote that the verifier correctly identifies the prover if the verification algorithm outputs 1. Otherwise, it aborts. 
\end{enumerate}

The \textbf{Correctness} or \textbf{Completeness} of an identification protocol is defined as the success probability of an honest prover over $n$ rounds of identification, in the absence of any adversary or noise, should be 1. The \textbf{Soundness} of an identification protocol insures that the success probability of any adversary (depending on the adversarial model) in passing the verification phase over the $n$ rounds of identification, should be negligible in the security parameter. 

\section{qPUF identification protocol with high-resource verifier}
\label{sec:cp}
An identification protocol is run between a verifier and a prover where the verifier is tasked with correctly identifying the prover who owns the device. Our setting assumes that the verifier and device owner behave honestly. The security is provided against an adversary who has limited access to the device only in the pre-protocol phase and her objective is to be identified as the valid device owner. We propose the construction of two identification protocols using qPUFs which provide exponential security against any QPT adversary. qPUF is described by an unknown unitary transformation in $\mathcal{H}^{D}$ whose construction is defined in Supplementary material.

The first qPUF-based device identification protocol we propose is the quantum analogue of the standard PUF-based identification scheme between the verifier (Alice) and the prover (Bob)~\cite{ruhrmair2014pufs,delavar2017puf} as shown in Figure~\ref{fig:qid-p1}. Prior to detailing the protocol, we list its salient features,
\begin{itemize}
    \item The prover is not required to have quantum memory as well as computing ability resource\footnote{Here we note that the prover applies the qPUF transformation on the challenge states by interacting them with the device. Nevertheless, we do not consider this as the computing ability of the prover and by no computing ability we refer to the fact that the prover does not need to run any extra quantum computations.}, whereas the verifier is required to have high quantum memory and high computing ability resource (restricted to QPT memory and computation).
    \item The protocol requires a 2-way quantum communication link between the prover and verifier.
    \item The protocol has a quantum verification phase i.e. the prover sends information in quantum states to the verifier who then performs a verification test to certify if the device is valid.
    \item The protocol provides perfect completeness and an exponentially-high security guarantee against any adversary with QPT resources. 
\end{itemize}

\begin{figure}[t]
\includegraphics[scale=0.55]{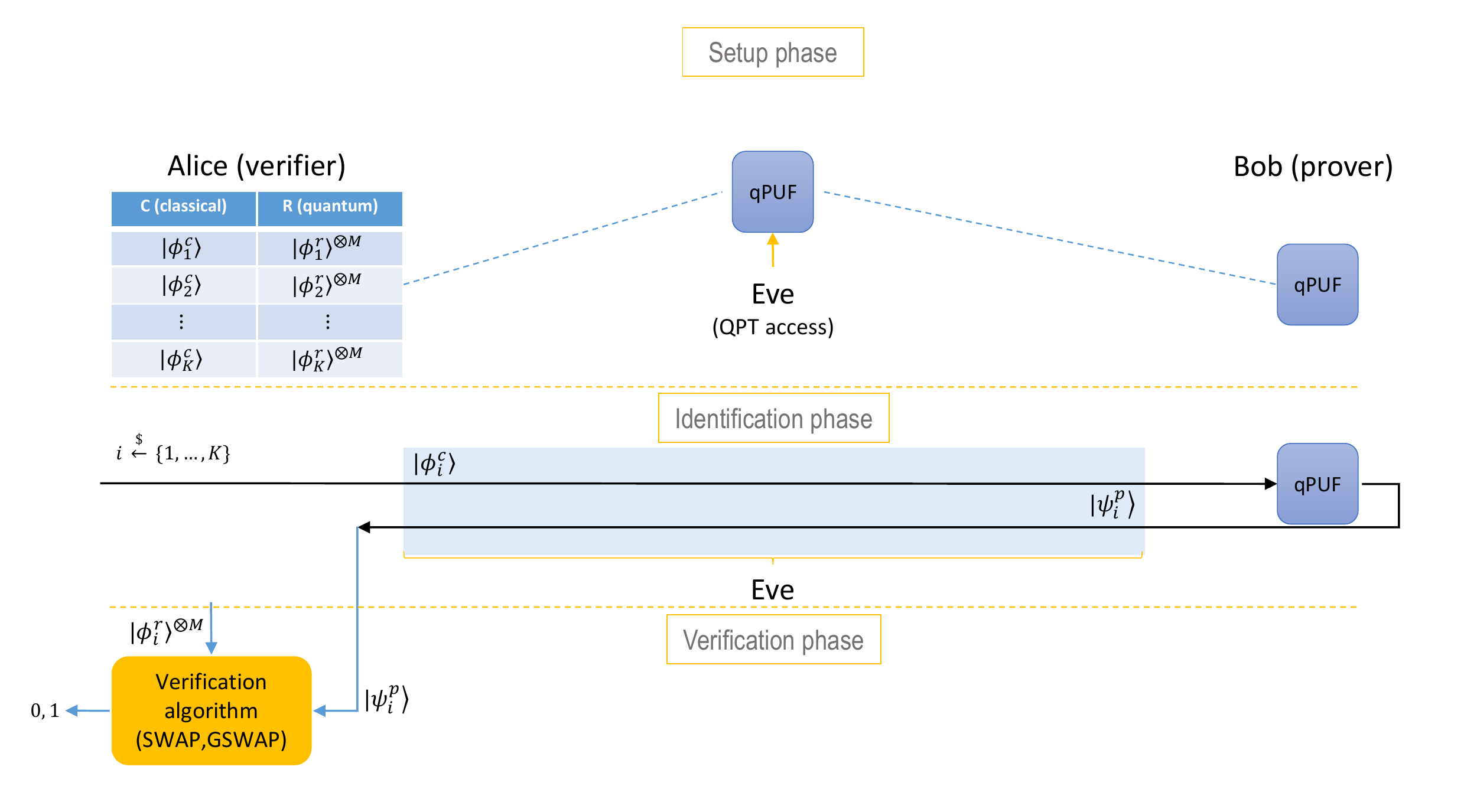}
\centering
\caption{qPUF-based identification protocol with high-resource verification between Alice(verifier) and Bob(prover) (\cpp). The protocol is divided into three sequential phases, \emph{setup phase}, \emph{identification phase}, and \emph{verification phase}. The protocol is analysed in presence of a QPT adversary Eve which can gain information about the device during the \emph{setup phase} and \emph{identification phase}. In the last phase, Alice runs a quantum verification algorithm and outputs a classical bit `1' if Bob's device is correctly identified. Otherwise, she outputs `0'. }
\label{fig:qid-p1}
\end{figure}

\subsection{Protocol description}
This protocol, referred as \cpp, is run between the Alice, the verifier, and Bob, the prover and it is divided into three sequential phases,
\begin{enumerate}
    \item \emph{Setup phase}:
            \begin{enumerate}
                \item Alice has the qPUF device. 
                \item She randomly picks $K \in \mathcal{O}(\text{poly} \log D)$ classical strings $\phi_i \in \{0,1\}^{\log D}$.
                \item Alice selects and applies a Haar-random state generator operation denoted by the channel $\E$ to locally create the corresponding quantum states in $\HilD$: $\phi_i \overset{\E}{\rightarrow} \kc,\hspace{2mm} \forall i \in [K]$.
                \item She queries the qPUF individually with each challenge $ \kc$ a total of $M$ number of times to obtain $M$ copies of the response state $\kr$ and stores them in her local database $S \equiv \{\kc, \krm\}_{i=1}^{K}$. 
                \item Alice publicly transfers the qPUF to Bob.
            \end{enumerate}
            
            To be able to investigate the security in a strong and general setting, we do not assume the qPUF's transition of being secure, in the sense that any QPT adversary Eve is allowed to query the qPUF during transition an $\mathcal{O}(\text{poly} \log D)$ number of times and thus build its local database. Due to the conditions on the selective unforgeability of the qPUF (Appendix~\ref{sec:sel-unf}), it is important that Alice picks her challenges $\kc \in S$ at random from a distribution over the Hilbert space $\HilD$. This, in turn, implies that the encoding unitary operation $\E$ is a haar random unitary \cite{arapinis2021quantum}. We note that an alternate efficient simulation of $\E$ was proposed by \cite{alagic2020efficient}.
    \item \emph{Identification phase}:
            \begin{enumerate}
                \item Alice uniformly selects a challenge labelled ($i \xleftarrow{\$} [K]$), and sends the state $\kc$ over a public quantum channel to Bob.
                \item Bob generates the output $\kp$ by querying the challenge received from Alice to the qPUF device.
                \item The output state $\kp$ is sent to Alice over a public quantum channel.
                \item This procedure is repeated with the same or different states a total of $R \leq K$ times. 
            \end{enumerate}

    \item \emph{Verification phase}:
            \begin{enumerate}
                \item Alice runs a quantum equality test algorithm on the received response from Bob and the $M$ copies of the correct response that she has in the database. This algorithm is run for all the $R$ CRP pairs.
                \item She outputs `1' implying successful identification if the test algorithm returns `1' on all CRPs. Otherwise, she outputs `0'. 
            \end{enumerate}
    Sections~\ref{sec:swapver} and \ref{sec:gswapver} describe the quantum verification algorithm run by Alice.          
\end{enumerate}

For this protocol, we define the security in terms of completeness and soundness properties. Completeness of {\cpp} protocol is the probability that Alice outputs `1' in the verification phase in absence of an adversary Eve. This implies that the verification algorithm must output `1' for all the $R$ rounds of the protocol with a probability that differs negligibly in the security parameter from 1,
\begin{equation}
\small
     \text{Pr}[\text{Ver accept}_{\text{H}}] =  \text{Pr}\big[\prod_{i=1}^{R}\texttt{(qVer}(\kp, \kr) = 1)\big] = 1 - \negl(\lambda)
     \label{eq:complete1}
\end{equation}
where the subscript H denotes the honest device holder.
%
Soundness of the protocol is defined as the probability that a QPT Eve passes the verification test of Alice.  We say the {\cpp} is sound (or secure) if this probability is negligible in the security parameter:
\begin{equation}
\small
    \text{Pr}[\text{Ver accept}_{\text{Eve}}] =  \text{Pr}\big[\prod_{i=1}^{R}\texttt{(qVer}(\rho_i, \kr) = 1)\big] =  \negl(\lambda)
    \label{eq:gensound1}
\end{equation}
where $\rho_{i}$ is the state sent by Alice in the $i$-th round.

Since our protocol is based on qPUF as defined in \cite{arapinis2021quantum}, Alice has no knowledge about the unitary of qPUF except the database $S$ she can obtain by querying. Consequently, her responses in $S$ are unknown quantum states. This calls for quantum equality test based verification algorithms to enable her to validate the received states. We investigate the optimal one-sided error test, the SWAP test \cite{buhrman2001quantum} and its generalised version (GSWAP)\cite{chabaud2018optimal} as two well-studied and practical quantum equality tests. These tests are described in Supplementary Material. 

\subsection{Verification with SWAP test} \label{sec:swapver}

The first proposal for Alice's \texttt{qVer} algorithm is the SWAP test and the identification protocol using this test is called {\cps}.
Its single run inputs one copy of each received state and Alice's response state and produces a binary outcome to probabilistically determine the equality between two states. A single run, however, does not provide a low enough test error rate. To obtain an exponentially low rate, the test is repeated $M$ number of times for the same challenge state where $M$ is proportional to the inverse-log of the desired error probability. The error can be further lowered by choosing $N \leq K$ distinct challenge states such that the test is run for  $R = N\times M$ number of times and the prover is successfully identified only if he passes all the runs. In the next two theorems, we show that SWAP based test algorithm provides us with the desired completeness and soundness properties required in the protocol.

\begin{theorem}[\textbf{SWAP Completeness}]\label{th:swap-comp} In absence of Eve, the probability that Bob's response state generated from the valid qPUF $\kp = qPUF(\kc)$ passes all the $R$ {\normalfont SWAP} test runs is,
{\normalfont 
\begin{equation}
\small
    \text{Pr}[\text{Ver accept}_{\text{H}}] =  \text{Pr}\big[\prod_{i=1}^{R}\text{(SWAP}(\kp, \kr) = 1)\big] = 1
\end{equation}}
\end{theorem}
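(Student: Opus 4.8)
The plan is to reduce the completeness claim to the standard single-run analysis of the SWAP test combined with the unitarity of the qPUF. First I would recall the behaviour of the SWAP test on two pure states $\ket{\psi}$ and $\ket{\phi}$: with an ancilla prepared in $\ket{0}$, the test applies a Hadamard to the ancilla, a controlled-SWAP on the two input registers, a second Hadamard, and then measures the ancilla, reporting outcome $1$ (``equal'') precisely when the ancilla reads $\ket{0}$. A direct calculation of the amplitudes gives that this outcome occurs with probability $\tfrac{1}{2}\big(1 + |\mbraket{\psi}{\phi}|^2\big)$.

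Next I would invoke the fact established in the preliminaries that the qPUF is modelled by a unitary $U \in \mathbb{C}^{D \times D}$. In the honest execution, Alice's stored response is $\kr = U\kc$, obtained during the \emph{setup phase} by querying the device on the challenge $\kc$; in the \emph{identification phase} the honest Bob applies the very same device to the very same challenge $\kc$, producing $\kp = U\kc$. Because $U$ acts deterministically on pure states, the two responses coincide exactly, $\kp = \kr$, and hence $|\mbraket{\phi^r_i}{\phi^p_i}|^2 = \mbraket{\phi^r_i}{\phi^r_i} = 1$.

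Substituting this overlap into the SWAP-test formula, each individual run outputs $1$ with probability $\tfrac{1}{2}(1+1) = 1$, i.e. it accepts deterministically; this holds equally whether the run uses a fresh copy among Alice's $M$ stored responses or a repetition on one of the $N$ distinct challenges, since all compared states equal $U\kc$. Finally, since each of the $R = N \times M$ runs passes with certainty, the overall acceptance probability is a product of $R$ factors of $1$, which equals $1$, establishing the claim.

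The argument carries essentially no technical obstacle; the only point deserving care is why the probability is \emph{exactly} $1$ rather than $1 - \negl(\lambda)$. This is a consequence of restricting to the noiseless, adversary-free setting: there the unitarity of the qPUF forces exact equality $\kp = \kr$ of the compared states, rather than the mere $\delta_r$-indistinguishability guaranteed by robustness in the presence of noise, so no negligible slack is incurred.
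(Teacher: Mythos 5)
Your proposal is correct and follows essentially the same route as the paper's proof: establish $\kp = \kr$ in the honest execution, conclude each SWAP run accepts with probability $\tfrac{1}{2}(1+F^2)=1$, and multiply over the $R$ independent runs. The only cosmetic difference is that you justify the product step by noting each factor is exactly $1$ (so correlations are irrelevant), whereas the paper explicitly appeals to the independence of the unentangled honest states; both are valid.
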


\begin{proof}
When Alice receives Bob's response $\kp$ which is generated from the valid qPUF device for all the $i \in [R]$ copies of the challenge state, then $\kp = \kr$. This implies that $F(\kp,\kr) = 1$ for all $i \in [R]$. From Eq~\ref{eq:swapaccept}, we see that,
\begin{equation}
\text{Pr}\big[(\text{SWAP}(\kp, \kr) = 1] = \frac{1}{2} + \frac{1}{2}F^2(\kp,\kr) = 1, \hspace{2mm} \forall i \in [R]  
\end{equation}
Since in the honest setting, the states received from Bob over $R$ rounds are all valid qPUF pure states which are unentangled to each other, hence the SWAP tests for all the $R$ rounds are independent tests. This implies that,
\begin{equation}
        \text{Pr}[\text{Ver accept}_{\text{H}}] =  \text{Pr}\big[\prod_{i=1}^{R}\text{(SWAP}(\kp, \kr) = 1)\big] = \prod_{i=1}^{R}\text{Pr}\big[\text{SWAP}(\kp, \kr) = 1\big] = 1
\end{equation}
This completes the proof.
\end{proof}

To characterise the soundness, we bound Eve's success probability in passing the verification test i.e. the probability that the state $\rho^R$ she sends to Alice passes all the $R$ runs of the SWAP test.
Even though the test runs are independent, if a generalised entangled state $\rho^R$ is sent by Eve, her success probability across the runs may no longer be the product of success probability of individual test runs. This implies that Eve's strategy might result in a higher success probability in some rounds based on the results of previous rounds. However, we show that since the $N$ distinct challenges being picked by Alice are all uniformly random, Eve does not gain anything by entangling the states across rounds corresponding to different challenges. To this end, we assume Eve can achieve optimal success probability by sending the state $ \bigotimes_{i=1}^{N}\rho_i^M$, where $\rho_i^M$ is a generalised state sent to $M$ runs of the SWAP test corresponding to the same challenge $\kc$.
Across these $j \in [M]$ runs corresponding to $\kc$, the state received by Alice is $\rho_{i,j} = \text{Tr}_{\{1\cdots M/j\}}(\rho_i^M)$, where $\rho_{i,j}$ is obtained by tracing out the M-1 instances $\{1,\cdots M/j \}$. Let $\rho_i^{\text{max}}$ be the Eve's response state corresponding to challenge $\kc$, with the highest fidelity with the correct response, i.e.
\begin{equation}
\small
   F(\rho_i^{\text{max}}, \kr) = \sqrt{\bra{\phi^r_i}\rho_i^{\text{max}}\kr} \geqslant \sqrt{\bra{\phi^r_i}\rho_{i,j}\kr \hspace{2mm}} \forall{j}\in M
    \label{eq:maxrho}
\end{equation}
Since the SWAP test success probability is directly proportional to the fidelity between the two input states, this implies that Eve can maximise her success probability by sending $M$ unentangled states $\rho_i^{\text{max}}$ to Alice instead of the generalised state $\rho_{i}^M$. The above Equation~\ref{eq:maxrho} can be used to bound Eve's success probability in passing Alice's verification test,
\begin{equation}
    \begin{split}
        \text{Pr}[\text{Ver accept}_{\text{Eve}}] &=  \text{Pr}\big[\prod_{i=1}^{R}\text{(SWAP}(\rho_i, \kr) = 1)\big] = \prod_{i=1}^{N}\text{Pr}\big[\prod_{j=1}^{M}\text{(SWAP}(\rho_{i,j}, \kr) = 1)\big]\\
        &\leqslant \prod_{i=1}^{N}\prod_{j=1}^{M}\text{Pr}\big[\text{SWAP}(\rho_i^{\text{max}}, \kr) = 1\big]\\
        &\leqslant \prod_{i=1}^{N}\Big(\frac{1}{2} + \frac{1}{2}F_i^2 \Big)^{M} = \epsilon
    \end{split}
    \label{eq:swapEve}
\end{equation}

where $\rho_i = \text{Tr}_{\{1\cdots R/i\}}(\rho^R)$, and $F_i =  F(\rho_i^{\text{max}}, \kr)$.
Now using the property that the qPUF device exhibits selective unforgeability against any QPT adversary Eve \cite{arapinis2021quantum}, we bound her success probability using the following theorem. 
\begin{theorem}[\textbf{SWAP Soundness}]\label{th:swap-sound} Let qPUF be a selectively unforgeable unitary PUF over $\HilD$ as defined in \cite{arapinis2021quantum}. The success probability of Eve to pass the {\normalfont SWAP}-test based verification of the {\cps} protocol is at most $\epsilon$, given that there are $N$ different CRPs, each with $M$ copies. The $\epsilon$ is bounded as follows:
{\normalfont
     \begin{equation}
     \text{Pr}[\text{Ver accept}_{\text{Eve}}] \leqslant \epsilon \approx \mathcal{O}(\frac{1}{2^{NM}})  \end{equation}
     }
\end{theorem}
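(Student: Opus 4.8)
The plan is to start from the bound already established in Equation~\ref{eq:swapEve}, namely $\text{Pr}[\text{Ver accept}_{\text{Eve}}] \leqslant \prod_{i=1}^{N}\left(\frac{1}{2} + \frac{1}{2}F_i^2\right)^{M}$, where $F_i = F(\rho_i^{\text{max}}, \kr)$ is the fidelity between Eve's optimal forged response for the $i$-th distinct challenge and the genuine qPUF response. The entire task then reduces to controlling each factor $\frac{1}{2} + \frac{1}{2}F_i^2$, and the only handle we have on the $F_i$ is the selective unforgeability of the qPUF. So the first move is to invoke that property.

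By selective unforgeability (restated in Appendix~\ref{sec:sel-unf}), for a single challenge drawn uniformly at random from $\HilD$ and for Eve's best QPT strategy given her polynomial-size database spanning an effective subspace of dimension $d$, one has $\text{Pr}_{\rho}[F_i \geqslant \delta] \leqslant \frac{d+1}{D}$ with $\delta = \negl(\log D)$. The next step is to lift this single-challenge statement to all $N$ distinct challenges simultaneously. Since Alice picks the $N$ challenges independently and uniformly, a union bound gives that the probability that $F_i \geqslant \delta$ for at least one $i$ is at most $N \cdot \frac{d+1}{D}$. As $N \in \mathcal{O}(\text{poly}\log D)$ and $d \in \mathcal{O}(\text{poly}\log D)$ while $D = 2^n$ is exponential in the security parameter, this quantity is negligible. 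Hence, with overwhelming probability over Alice's random choices, every forged response obeys $F_i < \delta$ at once.

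Conditioned on this overwhelming-probability event, each factor satisfies $\frac{1}{2} + \frac{1}{2}F_i^2 < \frac{1}{2}(1 + \delta^2)$, so that $\epsilon < \left(\frac{1}{2}(1+\delta^2)\right)^{NM} = (1+\delta^2)^{NM}/2^{NM}$. The final step is to check that the correction factor is harmless: since $\delta^2$ is negligible and $NM = R$ is polynomial, one has $(1+\delta^2)^{NM} \leqslant e^{NM\delta^2} = 1 + \negl(\lambda)$, which collapses the bound to $\epsilon \approx \mathcal{O}(1/2^{NM})$, exactly as claimed.

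I expect the main obstacle to be the careful handling of the probabilistic nature of the unforgeability guarantee: the bound $\frac{d+1}{D}$ is only a statement about the distribution of $F_i$ over random challenges, not a deterministic cap on any individual $F_i$. The crux is therefore to argue rigorously that the rare ``good-for-Eve'' challenges (on which she could achieve fidelity close to $1$, forcing a factor near $1$ instead of $\frac{1}{2}$) contribute nothing, which is precisely what the union bound over the $N$ independent uniformly-random challenges accomplishes. Once that event is fixed, the remaining algebra controlling $(1+\delta^2)^{NM}$ is routine.
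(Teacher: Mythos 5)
Your proposal is correct and follows essentially the same route as the paper's proof: both start from the product bound of Eq.~\ref{eq:swapEve} and apply the selective-unforgeability bound $\Pr[F_i^2 \geqslant \delta] \leqslant (d+1)/D$ to each of the $N$ challenges before multiplying out the factors to obtain $\mathcal{O}(2^{-NM})$. Your write-up is in fact slightly more careful than the paper's, since you make explicit both the union bound over the $N$ independently chosen challenges and the control of the $(1+\delta^2)^{NM}$ correction factor, whereas the paper simply asserts that $F_i^2 \rightarrow 0$ with overwhelming probability for all $i$ and substitutes.
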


\begin{proof}
From Eq~\ref{eq:swapEve}, we see that the optimal strategy of Eve is to produce the response states $\rho_i^{\text{max}}$ which maximises the fidelity $F_i$ for each CRP $(\kc, \krm)$. Arapinis et al. \cite{arapinis2021quantum} provided an upper bound on the fidelity when Eve has polynomial access to the qPUF. This property also referred to as the selective unforgeability property of qPUF (Appendix~\ref{sec:sel-unf}), states that the fidelity-square $F_i^2$ is bounded as,
\begin{equation}
    \text{Pr}[F_i^2 \geqslant \delta] \leqslant \frac{d + 1}{D} 
    \label{eq:selunfor}
\end{equation}
for any $\delta > 0$. Here $d = poly(\lambda) = poly\log(D)$ is the dimension of subspace that Eve has learnt from $\HilD$. For $D = 2^d$, this implies that the maximum fidelity state that Eve can create on average is non-orthogonal to the valid response state $\kr$ with a negligible probability $\approx \mathcal{O}(2^{-d})$. Hence $F_i^2 = \delta \rightarrow 0$ with overwhelming probability. This bound holds   true for all distinct CRPs labelled by $i \in [N]$.

Thus from Eq~\ref{eq:swapEve} and \ref{eq:selunfor}, the probability that Eve passes Alice's SWAP based verification test is,
\begin{equation}
\begin{split}
    \text{Pr}[\text{Ver accept}_{\text{Eve}}] &\leqslant \prod_{i=1}^{N}\Big(\frac{1}{2} + \frac{1}{2}F_i^2 \Big)^{M}\\ 
    &\leqslant  \prod_{i=1}^{N}\Big(\frac{1}{2} + \frac{1}{2}\delta \Big)^{M} \\
    &\approx \mathcal{O}(\frac{1}{2^{NM}}) = \negl(\lambda) 
\end{split}
\end{equation}

Note that here we also take into account the adaptive strategy of the adversary. That is even by assuming the previous rounds are added as extra states to Eve's learning phase, the dimension of the subspace $d$ will remain polynomial in $\lambda$. This completes the proof.
\end{proof}

The bound indicated above shows that one can achieve an exponentially secure qPUF-based identification using SWAP test based verification protocol with just a single challenge state i.e. $N = 1$ and repeated for $M$ instances. However, non-ideal cases would make identification with different challenge states necessary. Hence we provide a general recipe involving multiple distinct challenges each running for multiple instances. Our protocol requires $R = N \times M$ number of rounds and uses $T = 2R$ number of communicated states. 

\subsection{Verification with GSWAP test} \label{sec:gswapver}

The second proposal for Alice's \texttt{qVer} algorithm is the GSWAP test (Appendix~\ref{sec:test} Equation~\ref{eq:gswap}) and the identification protocol using this test is called {\cpg}. Its single run requires one copy of the received state and $M$ copies of Alice's response state and produces a binary outcome to determine the equality between two states with a polynomial one-sided error i.e. $\propto 1/M$. To boost the security to exponentially low error with a polynomial number of copies, Alice first runs the challenge phase with $R = N \subset K$ distinct challenge states, then uses the GSAWP test as \texttt{qVer} algorithm to test the equality. To this end, she consumes $N$ received response states and $N\times M$ numbers of valid response states in her database.
In the next two theorems, we show that GSWAP based test algorithm provides us with the desired completeness and soundness properties required in the protocol.

\begin{theorem}[\textbf{GSWAP Completeness}]\label{th:gswap-comp} In absence of Eve, the probability that Bob's response state generated from the valid qPUF $\kp = qPUF(\kc)$ passes all the $R = N$ test runs is,
{\normalfont 
\begin{equation}
\small \text{Pr}[\text{Ver accept}_{\text{H}}] =  \text{Pr}\big[\prod_{i=1}^{N}\text{(GSWAP}(\kp, \krm) = 1)\big] = 1    
\end{equation}
 }
\end{theorem}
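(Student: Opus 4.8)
The plan is to mirror the structure of the SWAP completeness argument (Theorem~\ref{th:swap-comp}), now exploiting that the GSWAP test is a one-sided-error equality test, so that perfect completeness becomes immediate once the honest responses are seen to coincide with the reference copies.

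First I would note that in the honest setting Bob obtains each response directly from the valid qPUF, so for every round $i$ his output satisfies $\kp = \text{qPUF}(\kc) = \kr$. Hence the single received copy $\kp$ is identical to each of the $M$ database copies composing $\krm$, and the full input to the $i$-th GSWAP test is the fully symmetric product state $\ket{\phi^r_i}^{\otimes(M+1)}$.

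Next I would invoke the acceptance rule of the GSWAP test from Appendix~\ref{sec:test}, Equation~\ref{eq:gswap}: the test accepts by projecting its $M+1$ input copies onto the symmetric subspace of $(\HilD)^{\otimes(M+1)}$. Since any product of identical pure states lies entirely within the symmetric subspace, this projector acts as the identity on $\ket{\phi^r_i}^{\otimes(M+1)}$, so the $i$-th test accepts deterministically,
\begin{equation}
\text{Pr}\big[\text{GSWAP}(\kr, \krm) = 1\big] = 1, \quad \forall i \in [N].
\end{equation}
This is exactly the one-sided-error (perfect completeness) property; the false-accept probability $\propto 1/M$ only arises for non-identical inputs and plays no role in the honest case.

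Finally I would establish independence across the $N$ rounds: the $N$ distinct challenges yield unentangled pure response states, so the $N$ GSWAP tests act on mutually unentangled registers and are independent events. The joint acceptance probability therefore factorises into a product of terms each equal to $1$, giving $\text{Pr}[\text{Ver accept}_{\text{H}}] = \prod_{i=1}^{N}\text{Pr}\big[\text{GSWAP}(\kp,\krm)=1\big] = 1$. The only step requiring care is reading off from the Appendix that the GSWAP ``accept'' outcome is precisely the symmetric-subspace projector whose range contains every fully symmetric product state, but once the test is expressed in this form the result follows with no genuine obstacle.
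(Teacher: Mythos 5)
Your proposal is correct and follows essentially the same route as the paper: honest responses satisfy $\kp=\kr$, each GSWAP instance then accepts with certainty, and independence of the unentangled rounds gives the product equal to $1$. The only difference is that you justify deterministic acceptance via the symmetric-subspace-projector view of GSWAP, whereas the paper simply substitutes $F(\kp,\kr)=1$ into Eq.~\ref{eq:gswap} to get $\frac{1}{M+1}+\frac{M}{M+1}F^2=1$; both are valid, and the latter is the more direct reading of what the appendix actually states.
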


\begin{proof}
When Alice receives Bob's response $\kp$ which is generated from the valid qPUF device for all the $i \in [R]$ copies of the challenge state, then $\kp = \kr$. This implies that $F(\kp,\kr) = 1$ for all $i \in [R]$. From Eq~\ref{eq:gswap}, we see that,
\begin{equation}
\text{Pr}\big[(\text{GSWAP}(\kp, \krm) = 1] = \frac{1}{M+1} + \frac{M}{M+1}F^2(\kp,\kr) = 1, \hspace{2mm} \forall i \in [N]  
\end{equation}
Since in the honest setting, the states received from Bob over $R$ rounds are all valid qPUF pure states which are unentangled to each other, hence the GSWAP tests for all the $R$ rounds are independent tests. This implies that,
\begin{equation}
        \text{Pr}[\text{Ver accept}_{\text{H}}] =  \text{Pr}\big[\prod_{i=1}^{N}\text{(GSWAP}(\kp, \kr) = 1)\big] = \prod_{i=1}^{N}\text{Pr}\big[\text{GSWAP}(\kp, \kr) = 1\big] = 1
\end{equation}
This completes the proof.
\end{proof}
To characterise the soundness, we bound Eve's success probability in simultaneously passing the $N$ runs of GSWAP test when she sends the generalised entangled state $\rho^N$ to Alice. Similar to the argument provided in the SWAP test soundness, Eve does not gain anything by entangling the states across different test runs. Thus Eve's probability in passing the verification test by sending the state $ \bigotimes_{i=1}^{N}\rho_i$ is the same as that for a generalised state $\rho^N$, where $\rho_i$ is the state sent to the instance of GSWAP test corresponding to the same challenge $\kc$. As a result, Eve's optimal success probability can be expressed as a product of individual GSWAP instance success probability,
\begin{equation}
    \begin{split}
        \text{Pr}[\text{Ver accept}_{\text{Eve}}] &=  \text{Pr}\big[\prod_{i=1}^{N}\text{(GSWAP}(\rho_i, \krm) = 1)\big] = \prod_{i=1}^{N}\text{Pr}\big[\text{GSWAP}(\rho_i, \krm) = 1\big]\\
        &\leqslant \prod_{i=1}^{N}\Big(\frac{1}{M+1} + \frac{M}{M+1}F_i^2 \Big) = \epsilon
    \end{split}
    \label{eq:gswapEve}
\end{equation}

where $F_i =  F(\rho_i, \kr)$ is the fidelity between Eve's state and the valid qPUF response state for the $i$-th round.

\begin{theorem}[\textbf{GSWAP Soundness}]\label{th:gswap-sound} Let qPUF be a selectively unforgeable unitary PUF over $\HilD$ as defined in \cite{arapinis2021quantum}. The success probability of Eve to pass the {\normalfont GSWAP}-test based verification of the {\cpg} protocol is at most $\epsilon$, given that there are $N$ different CRPs, each with $M$ copies. The $\epsilon$ is bounded as follows:
{\normalfont 
\begin{equation}
    \text{Pr}[\text{Ver accept}_{\text{Eve}}] \leqslant \epsilon \approx \mathcal{O}\big(\frac{1}{(M+1)^{N}}\big)
\end{equation}
}
\end{theorem}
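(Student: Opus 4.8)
The plan is to mirror the structure of the SWAP Soundness proof (Theorem~\ref{th:swap-sound}), since the combinatorial skeleton is identical and only the single-instance test statistic changes. The natural starting point is Equation~\ref{eq:gswapEve}, which has already reduced Eve's overall acceptance probability to a product over the $N$ distinct challenges of the per-instance GSWAP success probabilities $\frac{1}{M+1} + \frac{M}{M+1}F_i^2$, where $F_i = F(\rho_i, \kr)$ is the fidelity between Eve's forged state and the true response for the $i$-th challenge. The factorisation into a product is justified by the earlier observation that, because Alice's $N$ challenges are drawn uniformly at random, Eve gains nothing by entangling her responses across rounds corresponding to distinct challenges.

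First I would invoke the selective unforgeability property of the qPUF from \cite{arapinis2021quantum}, restated in Equation~\ref{eq:selunfor}: for any $\delta > 0$, $\text{Pr}[F_i^2 \geqslant \delta] \leqslant \frac{d+1}{D}$, where $d = \text{poly}(\lambda)$ is the dimension of the subspace Eve has learnt and $D$ is the dimension of $\HilD$. Since $d$ is polynomial while $D$ is exponential in the security parameter, this forces $F_i^2 \to 0$ with overwhelming probability for every one of the $i \in [N]$ challenges. Substituting this bound into Equation~\ref{eq:gswapEve}, each factor collapses to approximately $\frac{1}{M+1}$, so the product over the $N$ challenges yields
\begin{equation}
    \text{Pr}[\text{Ver accept}_{\text{Eve}}] \leqslant \prod_{i=1}^{N}\Big(\frac{1}{M+1} + \frac{M}{M+1}\delta\Big) \approx \mathcal{O}\big(\tfrac{1}{(M+1)^{N}}\big),
\end{equation}
which is the claimed bound. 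I would then confirm that the adaptive setting is covered: even if Eve folds the states from earlier rounds into her learning phase, the dimension $d$ of the learnt subspace stays polynomial in $\lambda$, so Equation~\ref{eq:selunfor} continues to apply unchanged and the bound is not degraded.

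The main obstacle is ensuring the reduction to the product form in Equation~\ref{eq:gswapEve} is airtight --- that is, that Eve truly cannot exploit cross-round entanglement to correlate her GSWAP outcomes in a way that beats the independent strategy. This is the only conceptually non-routine point; once the product structure is granted, the remaining steps are a direct application of selective unforgeability, exactly as in the SWAP case. I would flag one genuine structural difference from the SWAP analysis: here the residual per-instance acceptance probability is $\frac{1}{M+1}$ rather than $\frac{1}{2}$, so the exponential suppression comes from the number of distinct challenges $N$, with the copy count $M$ entering only polynomially inside each factor rather than as an exponent. This is precisely the source of the different final form $\mathcal{O}(1/(M+1)^{N})$ compared with the SWAP bound $\mathcal{O}(1/2^{NM})$, and I would make the distinction explicit so the reader sees why the trade-off between $N$ and $M$ differs between the two verification tests.
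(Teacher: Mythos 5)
Your proposal is correct and follows essentially the same route as the paper's own proof: it starts from the product decomposition in Eq.~\ref{eq:gswapEve}, applies the selective unforgeability bound to force each $F_i^2 \to 0$ with overwhelming probability, and collapses each factor to $\frac{1}{M+1}$, with the same closing remark on adaptive adversaries. The added observation contrasting the roles of $M$ and $N$ in the SWAP versus GSWAP bounds is accurate and consistent with the paper's resource comparison in Section~\ref{sec:comparison}.
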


\begin{proof}
From Eq~\ref{eq:gswapEve}, we see that the optimal strategy of Eve is to produce the response states $\rho_i$ which maximises the fidelity $F_i$ for each CRP $(\kc, \krm)$. We utilise the same selective unforgeability result (Appendix~\ref{sec:sel-unf}) to bound the fidelity-square $F_i^2$ with which Eve can produce the states $\rho_i$,
\begin{equation}
    \text{Pr}[F_i^2 \geqslant \delta] \leqslant \frac{d + 1}{D} 
    \label{eq:gselunfor}
\end{equation}
for any $\delta > 0$. Here $d = poly(\lambda) = poly\log(D)$ is the dimension of subspace that Eve has learnt from $\HilD$. For $D = 2^d$, this implies that the maximum fidelity state that Eve can create on average is non-orthogonal to the valid response state $\kr$ with a negligible probability $\approx \mathcal{O}(2^{-d})$. Hence $F_i^2 = \delta \rightarrow 0$ with overwhelming probability. This bound holds   true for all distinct CRPs labelled by $i \in [N]$.

Thus from Eq~\ref{eq:gswapEve} and \ref{eq:gselunfor}, the probability that Eve passes Alice's SWAP based verification test is,
\begin{equation}
\begin{split}
    \text{Pr}[\text{Ver accept}_{\text{Eve}}] &\leqslant \prod_{i=1}^{N}\Big(\frac{1}{M+1} + \frac{M}{M+1}F_i^2 \Big)\\ 
    &\leqslant  \prod_{i=1}^{N}\Big(\frac{1}{M+1} + \frac{M}{M+1}\delta \Big) \approx \mathcal{O}\big(\frac{1}{(M+1)^{N}}\big) = \negl(\lambda) 
\end{split}
\end{equation}

Note that here we have also taken into account the adaptive strategy of Eve since our security is analysed for the most general attack strategy. This completes the proof.
\end{proof}

The recent bound shows that to achieve an exponentially secure qPUF based identification using GSWAP based verification protocol with only a polynomial sized register $S$, the protocol needs to be repeated for multiple $N$ instances. Our protocol requires $R = N$ number of communication rounds and uses $T = 2R$ number of communicated states. 

\section{qPUF identification protocol with low-resource verifier}
\label{sec:qp}
Our second protocol enables a weak verifier to identify a quantum server prover in the network. We achieve this by delegating the equality testing to the prover thus effectively removing the quantum computational requirement on the verifier. While this might look like it could facilitate a malicious Eve to fool the weak verifier easily, we demonstrate due to the unforgeability of qPUF that the security is not affected. Before describing the details, we list the salient features of our protocol,
\begin{itemize}
    \item The protocol requires the prover to have some quantum memory and high-recourse computing ability, whereas the verifier is just required to have quantum memory and no computing ability resource during the identification and verification phase\footnote{The state preparation phase happens in the setup phase of the protocol and it is a common property of all qPUF-based protocols. Hence here we are mostly interested in the computing ability in the verification phase, which is the key difference between such protocols due to the fact that verifying quantum states is a challenging task.} (restricted to QPT memory and computation).
    \item The protocol requires a 1-way quantum communication link directed from the verifier to the prover. The prover to the verifier directed link is a classical channel.
    \item The protocol has a classical verification phase i.e. the prover locally performs the verification test and sends the classical information to the verifier.
    \item The protocol provides perfect completeness and an exponentially-high security guarantee against any adversary with QPT resources. 
\end{itemize}

\begin{figure}[ht!]
\includegraphics[scale=0.5]{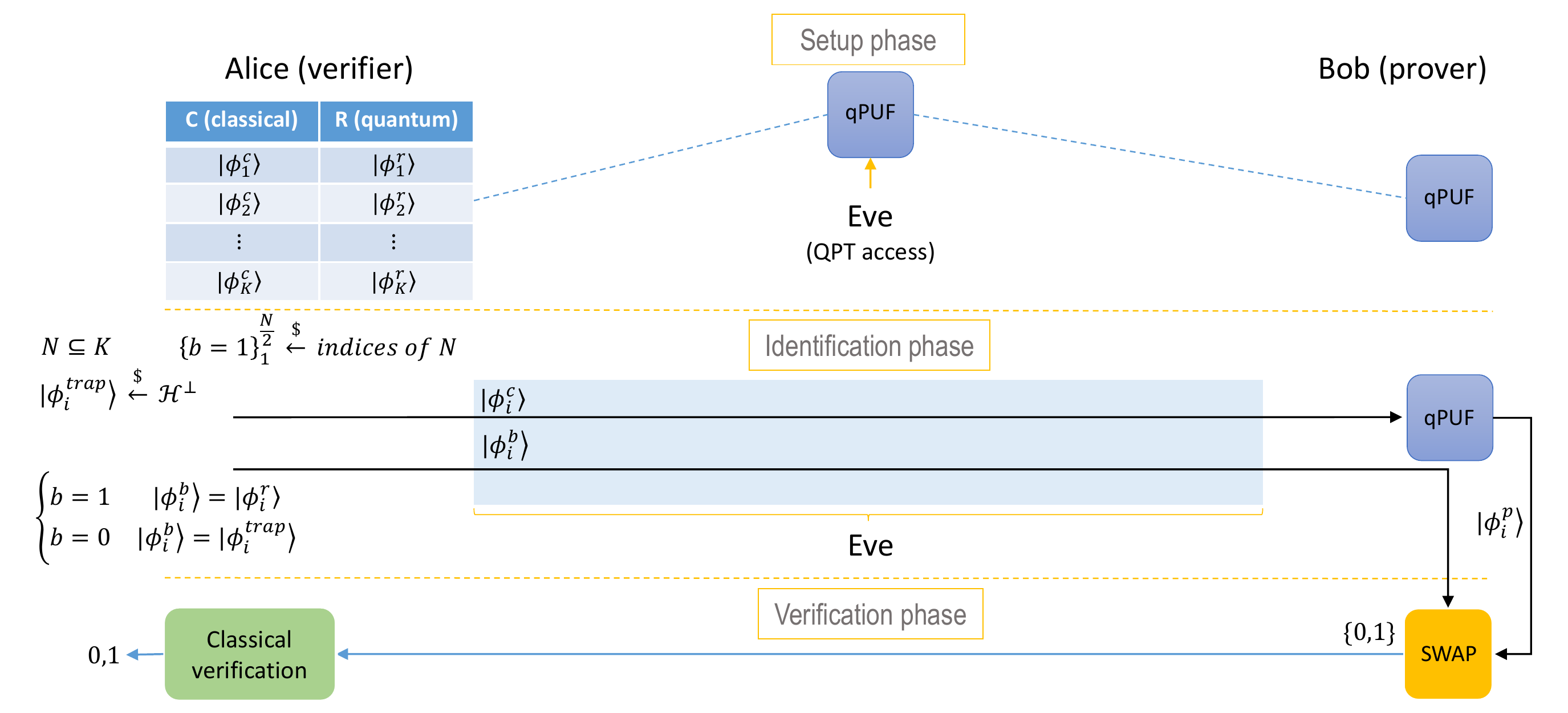}
\centering
\caption{qPUF-based identification protocol with low-resource verification between Alice (verifier) and Bob (prover) (\qp). The protocol is divided into three sequential phases, \emph{setup phase}, \emph{identification phase} and \emph{verification phase}. In the identification phase, Alice randomly picks a subset $N \subseteq K$ of challenges which are sent to Bob. Further, to correctly identify Bob, she employs a trap based scheme where she sends either the correct response state of the challenges or the trap states which are states orthogonal to the valid response states. Bob performs the SWAP-test based verification algorithm and sends the classical bits back to Alice. Alice performs a check on the received bits and outputs a classical bit `1' if Bob's device is correctly identified. Otherwise, she outputs `0'}  
\label{fig:qid-p2}
\end{figure}
\subsection{Protocol description}

This protocol is run between Alice, the verifier, and Bob, the prover in three sequential phases,
\begin{enumerate}
    \item \emph{Setup phase}:
            \begin{enumerate}
                \item Alice has the qPUF device. 
                \item Alice randomly picks $K \in \mathcal{O}(\text{poly} \log D)$ classical strings $\phi_i \in \{0,1\}^{\log D}$.
                \item Alice selects and applies a Haar-random state generator operation denoted by the channel $\E$ to locally create the corresponding quantum states in $\HilD$: $\phi_i \overset{\E}{\rightarrow} \kc,\hspace{2mm} \forall i \in [K]$.
                \item Alice queries the qPUF individually with each quantum challenge $\kc$ to obtain the response state $\kr$.
                \item Alice creates states $\ket{\phi_i^{\perp}}$ orthogonal to $\kc$ and queries the qPUF device with them to obtain the trap states labelled as $\ket{\phi_i^{\text{trap}}}$. The unitary property of qPUF device ensures that $\langle \phi_i^{\text{trap}}|\phi_i^r\rangle = 0$. 
                \item She creates a local database $S \equiv \{\kc,\{ \kr, \ket{\phi_i^{\text{trap}}}\}\}$ for all $i \in [K]$. Thus the $S$ registers stores the challenge state $\kp$ and the corresponding valid response state and the trap state which is orthogonal to the response state.
                \item Alice publicly transfers the qPUF to Bob.
            \end{enumerate}
            The transition is non-secure and Eve is allowed $\mathcal{O}(\text{poly} \log D)$ query access to the qPUF to build her database. 
    \item \emph{Identification phase}:
            \begin{enumerate}
                \item Alice randomly selects a subset $N \subseteq K$ different challenges $\kc$ and sends them over a public channel to Bob.
                \item She randomly selects $N/2$ positions, marks them $b = 1$ and sends the valid response states $\ket{\phi_i^1} = \kr$ to Bob. On the remaining $N/2$ positions, marked as $b = 0$, she sends the trap states $\ket{\phi_i^0} = \ket{\phi_i^{\text{trap}}}$.
            \end{enumerate}
    \item \emph{Verification phase}:
            \begin{enumerate}
                \item Bob queries the qPUF device with the challenge states received from Alice to generate the response states $\kp$ for all $i \in [N]$. 
                \item He performs a quantum equality test algorithm by performing a SWAP test between $\kp$ and the response state $\ket{\phi_i^b}$ received from Alice. This algorithm is repeated for all the $N$ distinct challenges.   
                \item Bob labels the outcome of $N$ instances of the SWAP test algorithm by $s_i \in \{0,1\}$ and sends them over a classical channel to Alice.
                \item Alice runs a classical verification algorithm \texttt{cVer($s_1,...,s_N$)} and outputs `1' implying that Bob's qPUF device has been successfully identified. She outputs `0' otherwise. 
            \end{enumerate}
\end{enumerate}

Figure~\ref{fig:qid-p2} describes the q-PUF based identification protocol with low-resource verification denoted as {\qp}. 
For the {\qp} protocol, completeness is the probability that Alice's verification algorithm $\texttt{cVer}$ returns an outcome `1' in absence of Eve. Ideally we require completeness to differ negligibly from 1,
\begin{equation}
     \text{Pr}[\text{Ver accept}_{\text{H}}] =  \text{Pr}\big[\texttt{cVer}(S_N) = 1\big] = 1 - \negl(\lambda)
     \label{eq:complete2}
\end{equation}
where $\lambda$ is the security parameter. 

Soundness of the protocol is the probability that \texttt{cVer} returns an outcome `1' in presence of Eve. For security, we require the soundness to be negligible in $\lambda$,
\begin{equation}
    \text{Pr}[\text{Ver accept}_{\text{Eve}}] =  \text{Pr}\big[\texttt{cVer}(S_N) = 1\big] =  \negl(\lambda)
    \label{eq:gensound2}
\end{equation}
We investigate the security of our protocol when Bob uses the SWAP test and Alice uses the classical verification algorithm $\texttt{cVer}$. We remark that Bob can alternatively use GSWAP testing to generate the outcomes, however, this would require Alice to send multiple copies of the same challenge state to Bob, thus incurring higher resources on Alice's side.

\subsection{\texttt{cVer} algorithm}
The main ingredient of verification is the \texttt{cVer} classical test algorithm employed by Alice to certify whether Bob's device has been identified.  As described in Algorithm~\ref{alg:cver}, 
\texttt{cVer} receives an $N$-bit binary string $S_N$ as input. The algorithm is divided into two tests. \texttt{test1} first checks whether in the $N/2$ positions marked as $b = 1$, i.e. the positions where Alice had sent a valid qPUF response state to Bob, if the corresponding bits in $S_N$ are all 0.

If this test succeeds, then the algorithm proceeds to \texttt{test2} which is a test on the positions where Alice had sent the trap states to Bob. If on these positions, the expected number of bits in $S_N$ which are 0 lie between $\{\kappa\frac{N}{2} - \delta_{er}, \kappa\frac{N}{2} + \delta_{er}\}$, then \texttt{cVer} algorithm outputs `1' indicating that the device has been identified. Here $\kappa\frac{N}{2}$ is the expected number of bits in $b=1$ positions with outcome `0' that Bob would obtain after the Equality test algorithm measurement, in absence of any adversary Eve.In our case when Bob uses SWAP test, $\kappa = 0.5$. Here, $\delta_{er}$ accounts for the statistical error in the measurement.

\begin{algorithm}[ht!]
\SetAlgoLined
\textbf{Description:} Let $S_N = \{0,1\}^N$ be the input $N$-bit string. Let $P=\{i_k\}^{N/2}_{k=1}$ be the set of indices showing the rounds of the protocol where $b=1$. Algorithm consists of two tests, \texttt{test1} and \texttt{test2} as follows:\\
   \texttt{test1:}\\
   \ForAll{$i$ in P}{
    \If{$s_{i} = 0$}{
        $count\gets count+1$\;
    }}
    \eIf{$count = \frac{N}{2}$}{
        \Return 1\;
    }{\Return 0\;}

\hrulefill\\
\texttt{test2:}\\
    \eIf{\texttt{test1} = 0}{
        \Return 0\;
    }
    {
    \ForAll{$i$ not in P}{
    \If{$s_{i} = 1$}{
        $count\gets count+1$\;
    }}
    \eIf{$\lvert count - \delta\frac{N}{2} \rvert \leqslant \delta_{er}$}{
        \Return 1\;
    }{\Return 0\;}
    }
 \caption{\texttt{cVer} algorithm}
\end{algorithm}\label{alg:cver}

\subsection{Verification using SWAP test and \texttt{cVer} algorithm}

Here we explicitly describe and calculate the completeness and soundness probabilities of the $\qp$ protocol which employs the verification algorithm involving Bob's SWAP test, followed by Alice's \texttt{cVer} algorithm. This allows Alice to efficiently identify the valid qPUF device even though the SWAP test algorithm has been delegated to Bob. A single instance of Bob's SWAP test requires a single copy of the response state received from Alice (either the valid qPUF response state or the trap state) and the response state that Bob generates by querying Alice's challenge state in his qPUF device. To obtain a desired low enough error rate in the verification algorithm, the SWAP test is performed on $N$ distinct instances of the received response state and response state generated by Bob by querying distinct challenges states. The responses of the SWAP test instances are classical bits. Thus the $N$ bit binary classical outcome string is sent to Alice who employs the algorithm \texttt{cVer} described in Algorithm~\ref{alg:cver}.
An identification protocol performed using $N$ distinct challenge states consumes a combined total of $2N$ copies of the received state and the response state generated by Alice. In the next two sections, we show that SWAP based test algorithm provides us with the desired completeness and soundness properties required in the protocol.

\begin{theorem}[\textbf{\texttt{cVer} Completeness}]\label{th:cver-comp} In absence of Eve, the probability that the $N$-bit string $S_N = \{s_1,...,s_N\}$ sent by Bob, passes the {\normalfont \texttt{cVer}($S_N$)} algorithm is,
{\normalfont 
\begin{equation}
\small  \text{Pr}[\text{Ver accept}_{\text{H}}] =  \text{Pr}\big[\texttt{cVer}(S_N) = 1\big] = 1 - 2e^{-N/4}     
\end{equation}
}
\end{theorem}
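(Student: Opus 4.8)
The plan is to split the completeness probability along the two sub-routines of \texttt{cVer}: I would show that \texttt{test1} accepts with certainty and that \texttt{test2} accepts except with probability $2e^{-N/4}$, then combine the two. Since the algorithm only reaches \texttt{test2} when \texttt{test1} returns $1$, and the two sub-tests read disjoint index sets, the overall acceptance probability factorises as $\mathrm{Pr}[\texttt{test1}=1]\cdot\mathrm{Pr}[\texttt{test2}=1\mid\texttt{test1}=1]$.

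First, \texttt{test1}. On the $N/2$ positions marked $b=1$, Alice sends the genuine response $\ket{\phi_i^1}=\kr$, and in the honest run Bob owns the valid qPUF, so the state he produces from the received challenge is $\kp=\kr$ and $F(\kp,\ket{\phi_i^1})=1$. By the SWAP acceptance probability (Eq~\ref{eq:swapaccept}) the control qubit returns the ``equal'' symbol with probability $\tfrac12+\tfrac12 F^2=1$, which in the labelling of Algorithm~\ref{alg:cver} forces $s_i=0$ at every such position. Hence the count reaches $N/2$ and \texttt{test1} outputs $1$ deterministically, contributing no error.

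Second, \texttt{test2}, which is where the exponential term originates. On the $N/2$ trap positions Alice sends $\ket{\phi_i^0}=\ket{\phi_i^{\mathrm{trap}}}$; because the qPUF is unitary and hence inner-product preserving, $\langle\phi_i^{\mathrm{trap}}|\phi_i^r\rangle=0$, so $F(\kp,\ket{\phi_i^0})=0$ and the SWAP test returns $s_i=1$ with probability exactly $\tfrac12$. As in Theorem~\ref{th:swap-comp}, the responses across different rounds are unentangled pure states, so these $N/2$ outcomes are i.i.d.\ Bernoulli$(1/2)$ and the count $X$ of $s_i=1$ is Binomial$(N/2,1/2)$ with mean $\delta\tfrac N2=\tfrac N4$ (here $\delta=\kappa=1/2$); \texttt{test2} accepts exactly when $|X-\tfrac N4|\le\delta_{er}$.

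The last step is a two-sided concentration bound: Hoeffding's inequality on the $N/2$ i.i.d.\ trap outcomes gives $\mathrm{Pr}[\,|X-\tfrac N4|>\delta_{er}\,]\le 2\exp(-4\delta_{er}^2/N)$, and taking the tolerance $\delta_{er}=N/4$ makes the exponent $N/4$, so \texttt{test2} fails with probability at most $2e^{-N/4}$. Multiplying by the deterministic $\mathrm{Pr}[\texttt{test1}=1]=1$ yields $\mathrm{Pr}[\texttt{cVer}(S_N)=1]=1-2e^{-N/4}$, as claimed. The step I expect to need the most care is not any single inequality but justifying that the trap outcomes are genuinely independent Bernoulli$(1/2)$ trials — this rests on the exact $F=0$ orthogonality guaranteed by unitarity together with the rounds carrying mutually unentangled pure states — and on fixing $\delta_{er}$ so that the two-sided Hoeffding estimate lands precisely on $2e^{-N/4}$ rather than a nearby constant, while still leaving \texttt{test2} discriminating enough to support the subsequent soundness claim.
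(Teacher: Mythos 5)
Your decomposition is exactly the paper's: \texttt{test1} passes deterministically because $F(\kp,\kr)=1$ on the $b=1$ positions, and the error term comes entirely from concentration of the trap-position count, which is Binomial$(N/2,1/2)$ by the unitarity-preserved orthogonality and the independence of unentangled rounds. All of that is sound and matches the paper's proof of Theorem~\ref{th:cver-comp}.

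The gap is in your final step, the choice $\delta_{er}=N/4$. You correctly apply Hoeffding in the absolute-deviation form, $\Pr[\,|X-\tfrac N4|>\delta_{er}\,]\le 2e^{-4\delta_{er}^2/N}$ for $X\sim\mathrm{Bin}(N/2,1/2)$, and the arithmetic $\delta_{er}=N/4$ does land on the exponent $N/4$. But with that tolerance the acceptance window $|X-\tfrac N4|\le\tfrac N4$ is the entire range $\{0,\dots,N/2\}$, so \texttt{test2} accepts \emph{every} string unconditionally: the completeness probability is then exactly $1$ (not $1-2e^{-N/4}$), and, worse, the soundness analysis of Theorem~\ref{th:cv-clattack} collapses because \texttt{test2} no longer rejects anything (e.g.\ the all-zeros string would pass \texttt{cVer}). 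You flagged this tension yourself but did not resolve it, and it cannot be resolved at $\delta_{er}=N/4$. The paper instead keeps $\delta_{er}$ a small constant and writes the tail as $2e^{-N\delta_{er}^2}$, which is the Hoeffding bound for a deviation of $\delta_{er}$ \emph{as a fraction of the $N/2$ trap rounds}, i.e.\ for the event $|X-\tfrac N4|\le\delta_{er}\cdot\tfrac N2$; read literally against Algorithm~\ref{alg:cver}, where $\delta_{er}$ is an absolute count, a constant tolerance would in fact give acceptance probability only $O(1/\sqrt N)$. The consistent reading that makes both completeness and soundness work is a tolerance that is a small constant fraction of $N$ (absolute tolerance $cN$ with $0<c<\tfrac14$, giving completeness $1-2e^{-4c^2N}$, which is $1-2e^{-N/4}$ at $c=1/4\cdot\sqrt{1/4}=1/4$ in the paper's normalisation $\delta_{er}=1/2$ of the fractional deviation). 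Your proof needs to adopt that scaling explicitly rather than the vacuous $\delta_{er}=N/4$.
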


\begin{proof}
To prove this theorem, we separately analyse the $N/2$ positions where Alice sends the valid qPUF response state to Bob (marked as $b =1$), and the remaining positions where she sends the trap state (marked as $b = 0$),
\begin{enumerate}
    \item $b = 1$ positions: When Bob prepares the response state $\kp$ by querying her qPUF device with Alice's challenge state $\kc$, then Bob's generated response state is equal to Alice's response state sent to Bob, i.e. $\kr = \kp$. This implies that $F(\kp,\kr) = 1$ for all $i \in [N]$ marked $b=1$. From Eq~\ref{eq:swapaccept}, we see that,
    \begin{equation}
    \text{Pr}\big[\text{SWAP}(\kp, \kr) = 1] = \frac{1}{2} + \frac{1}{2}F(\kp,\kr)^2 = 1, \hspace{2mm}   
    \end{equation}
    From section~\ref{sec:swap}, we see that $[\text{SWAP}(\kp, \kr) = 1]$ corresponds to the classical outcome 0. This implies that $s_i = 0$ for all $i \in [N]$ marked $b=1$ with certainty. Thus when Alice employs the \texttt{cVer} algorithm, Bob always achieves a $count = N/2$ in the \texttt{test1} and thus passes it with certainty,
    \begin{equation}
        \text{Pr}[\texttt{test1 pass}] = 1
    \end{equation}    
    \item $b = 0$ positions: These positions correspond to Alice sending the trap states $\ket{\phi_i^{\text{trap}}}$ to Bob such that Bob's generated response state $\kp$ is orthogonal to the trap state. In other words, $F(\kp, \ket{\phi_i^{\text{trap}}}) = 0$ for all $i \in [N]$ marked $b=0$. This implies that,
    \begin{equation}
    \text{Pr}\big[\text{SWAP}(\kp, \ket{\phi_i^{\text{trap}}}) = 1] = \frac{1}{2} + \frac{1}{2}F(\kp,\ket{\phi_i^{\text{trap}}})^2 = \frac{1}{2}, \hspace{2mm}
    \end{equation}
    Thus, half of the $N/2$ positions would produce the classical outcome 1 on average. When Alice employs \texttt{test2} of the \texttt{cVer} algorithm, $\mathbb{E}[count] = N/4$. Using the Chernoff-Hoeffding inequality \cite{upfal2005probability}, for any constant $\delta_{er} > 0$,
    \begin{equation}
        \text{Pr}[\texttt{test2 pass}] = \text{Pr}\Big[\Big\lvert count - \frac{N}{4}\Big\rvert \leqslant \delta_{er}\Big] \geqslant 1 - 2e^{-N\delta_{er}^2}
    \end{equation}
\end{enumerate}
From the above results and using the fact that $\delta_{er} = 0.5$ for SWAP test based algorithm,
\begin{equation}
    \begin{split}
        \text{Pr}[\text{Ver accept}_{\text{H}}] &=  \text{Pr}\big[\texttt{cVer}(s_1,\cdots, s_N) = 1\big]\\
        &= \text{Pr}\big[\texttt{test1 pass} \wedge \texttt{test2 pass}\big]\\
        &= \text{Pr}[\texttt{test1 pass}]\cdot \text{Pr}[\texttt{test2 pass}]\\
        &\geqslant 1 - 2e^{-N/4}
    \end{split}
\end{equation}
This completes the proof.
\end{proof}

The next section details the soundness proof of the {\qp} protocol.

\section{{\qp} protocol soundness}\label{sec:qp-soundness}

To characterise the soundness, we bound Eve's success probability in passing the \texttt{cVer} test. Since the verification test is reduced to a classical test, we consider the soundness in the presence of two types of Eve. The first is a \emph{classical Eve} who does not process any quantum resources. The second is a \emph{quantum Eve}, who possess QPT memory and computing capability. We separately analyse the security against both types of Eve and prove that \emph{quantum Eve} gains only exponentially small advantage compared to the \emph{classical Eve}, thus reducing the security to analysing only the classical adversary. We show that since the verification test is classical, the only way for a \emph{quantum Eve} to succeed better than a \emph{classical Eve} is to succeed at guessing the trap positions better than a random guess of \emph{classical Eve}. We utilise the unforgeability property of qPUF to prove that a \emph{quantum Eve} can have an only negligible advantage in guessing the trap positions compared to a \emph{classical Eve}, thus enabling the reduction. 

\subsection{Security against classical adversary}

\begin{theorem}[\textbf{Soundness against classical Eve}]\label{th:cv-clattack} The probability that any classical Eve produces an $N$-bit string $S_N = \{s_1,...,s_N\}$ which passes the {\normalfont \texttt{cVer}} algorithm is,
{\normalfont 
\begin{equation}
 \text{Pr}[\text{Ver accept}_{\text{Eve}}] =  \text{Pr}\big[\texttt{cVer}(S_N) = 1\big] \leqslant \mathcal{O}(2^{-N})      
\end{equation}
}
\end{theorem}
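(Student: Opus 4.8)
The plan is to analyze what a classical Eve can do against the \texttt{cVer} algorithm. A classical Eve possesses no quantum resources, so she cannot perform the SWAP test on the states she receives from Alice, nor can she distinguish the valid response states from the trap states (both being unknown quantum states sent over the channel). The crucial observation is that \texttt{cVer} first runs \texttt{test1}, which requires the bits $s_i$ at all $N/2$ positions marked $b=1$ to be exactly $0$. Since Eve cannot tell which positions carry valid responses and which carry traps, from her perspective the assignment of the $N/2$ valid-response positions among the $N$ slots is uniformly random and unknown. Her only strategy is to guess an $N$-bit string $S_N$ (possibly based on whatever limited past access she had to the qPUF, but this gives no information about the secret random bit-assignment $b$ chosen fresh by Alice in the identification phase).

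First I would formalize that to pass \texttt{test1}, Eve must output $s_i = 0$ on every one of the $N/2$ positions in the set $P$ of $b=1$ indices. Because $P$ is chosen uniformly at random by Alice and hidden from a classical Eve, I would argue that the best Eve can do is treat the problem as guessing: she commits to a string and passes \texttt{test1} only if it happens to have $0$'s on all $N/2$ secret positions. I would then bound the probability of this event. The cleanest route is to note that there are $\binom{N}{N/2}$ equally likely choices for $P$, and to count how many of them are consistent with Eve's output string having zeros in the required places; equivalently, if Eve's string has $z$ zeros, then \texttt{test1} passes only when $P$ is a subset of those $z$ zero-positions, which happens with probability $\binom{z}{N/2}/\binom{N}{N/2}$, maximized by $z = N$ (the all-zero string) giving probability $1/\binom{N}{N/2}$. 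Using the standard estimate $\binom{N}{N/2} \geqslant 2^{N}/(N+1)$ (or $\sim 2^N/\sqrt{N}$), this yields $\text{Pr}[\texttt{test1 pass}] \leqslant \mathcal{O}(2^{-N})$ up to polynomial factors.

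Then I would observe that since \texttt{test2} returns $0$ whenever \texttt{test1} returns $0$, passing \texttt{cVer} is at most as likely as passing \texttt{test1}, so the overall bound $\text{Pr}[\text{Ver accept}_{\text{Eve}}] \leqslant \mathcal{O}(2^{-N})$ follows immediately. I would remark that the all-zero string is an admissible choice but it necessarily fails \texttt{test2} (which expects roughly $N/4$ ones among the trap positions), so in fact no single deterministic string passes both tests with non-negligible probability; but for the stated upper bound it suffices to bound by \texttt{test1} alone, which is the dominant constraint.

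The main obstacle I anticipate is justifying rigorously that a classical Eve genuinely has no information correlating her output with the hidden positions $P$. One must argue that the valid-response states and trap states, as they traverse the channel, are indistinguishable to a party without quantum measurement capability matched to the unknown qPUF output basis, so that the bit-assignment $b$ is information-theoretically hidden from her. I would lean on the fact that the trap states $\ket{\phi_i^{\text{trap}}}$ and valid responses $\kr$ are both unknown states produced by the secret qPUF, and that Eve's limited pre-protocol query access (polynomial in $\lambda$) reveals only a polynomial-dimensional subspace, so by selective unforgeability she cannot reconstruct or distinguish these states; hence $P$ remains uniformly random from her viewpoint and the counting bound applies. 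A minor care point is to state the bound cleanly with the binomial estimate rather than leaving it implicit, which is routine.
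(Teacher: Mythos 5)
There is a genuine gap: your central claim that $\Pr[\texttt{test1 pass}] \leqslant \mathcal{O}(2^{-N})$ is false, and with it the whole strategy of bounding \texttt{cVer} by \texttt{test1} alone collapses. By your own formula, a string with $z$ zeros passes \texttt{test1} with probability $\binom{z}{N/2}/\binom{N}{N/2}$; at the maximizing choice $z=N$ this equals $\binom{N}{N/2}/\binom{N}{N/2}=1$, not $1/\binom{N}{N/2}$ as you wrote. In other words, the all-zero string passes \texttt{test1} with certainty (the paper makes exactly this observation when discussing an example global strategy), so \texttt{test1} on its own provides no security whatsoever. The exponential bound can only come from the interplay of the two tests: \texttt{test2} forces any accepted string to contain roughly $c_1 \approx N/4$ ones (up to $\delta_{er}$), and \emph{conditioned on that constraint} \texttt{test1} passes only if all $\sim N/4$ ones happen to land among the $N/2$ hidden trap positions, an event of probability $\binom{N/2}{N/4}\big/\binom{N}{N/4} \approx (27/64)^{N/4}$. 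This is precisely the paper's ``optimal global strategy'' analysis, and it also shows your numerical bound cannot be right: the explicit strategy of outputting a uniformly random string with exactly $N/4$ ones passes \emph{both} tests with probability about $2^{-0.31N}$, which already exceeds the $\mathrm{poly}(N)\cdot 2^{-N}$ you claim for \texttt{test1} alone.

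Two further points. First, your closing remark that ``it suffices to bound by \texttt{test1} alone, which is the dominant constraint'' directly contradicts your (correct) observation in the same sentence that the all-zero string fails only \texttt{test2}; the dominant constraint is the conjunction, not either test in isolation. Second, the paper additionally analyses an independent per-bit guessing strategy (optimal bias $\alpha=3/4$) and compares it with the global strategy, showing both are exponentially small and converge for large $N$; a complete proof should at least establish optimality of the ``correct number of ones'' strategy class, as the paper does by arguing that any string with $c_1 \notin \{N/4-\delta_{er},\dots,N/4+\delta_{er}\}$ fails \texttt{cVer} with certainty. Your discussion of why a classical Eve has no information about the hidden index set $P$ is fine and consistent with the paper's model, but it does not rescue the combinatorial core of the argument.
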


\begin{proof}
Here we provide a proof sketch. The detailed proof is provided in the Appendix~\ref{ap:lrv-sound}. We remark that any classical Eve's strategy to produce a valid $N$-bit string $S_N$ can be divided into two categories,
\begin{enumerate}
    \item \textbf{Independent guessing strategy:} Eve independently guesses each bit of the string $S_N$ that would pass \texttt{cVer} algorithm. 

    \item \textbf{Global strategy:} Eve outputs a string $S_N$ using the global properties of the \texttt{cVer} such that $S_N$ passes the verification test with maximum probability. In contrast to the previous strategy, the probability to output each bit $s_i$ is not necessarily independent.  
\end{enumerate}
We calculate the success probability of Eve in both cases and by optimising over both strategies, we obtain a higher success probability when Eve employs global strategy. 
The two strategies, however, converge in the limit of large $N$.

Under the independent strategy, when Eve guesses each bit with a probability $\{\alpha, 1 - \alpha\}$, one obtains the probability with which the resulting $n$-bit string of this strategy passes the \texttt{cVer} test. Maximising Eve's passing probability over all $\alpha \in [0,1]$, we obtain an optimal $\alpha=3/4$. The resulting Eve's accept probability is,
\begin{equation}
\small
     \text{Pr}[\text{Ver accept}_{\text{Eve,Ind}}] = (2\delta_{er} + 1) \frac{3^{\frac{3N}{4}}}{2^{2N}}\times {N/2\choose N/4} \approx \mathcal{O}(2^{-N})
\end{equation}
The second category is the global strategy where Eve optimises over all the strategies of guessing the $N$ bit string which passes \texttt{cVer} with maximum probability. In order to find the optimal global strategy, we extract out the essential properties leveraged by Eve to pass the test. We note that since the trap response positions are chosen uniformly at random by Alice, hence Eve does not have any information on the index set $P$ of Algorithm~\ref{alg:cver}. Eve, however, knows the statistics of 0's and 1's in $S_N$ to pass \texttt{cVer}. For instance, a string must have a minimum of $3N/4 - \delta_{er}$ bits which are $0$, otherwise, the string necessarily fails the \texttt{test1} or \texttt{test2} or both. Based on statistics knowledge, any global strategy for Eve should consist of optimising the number of 0's and 1's to pass \texttt{test1} and \texttt{test2}. We show that an optimal global strategy $\E_{gop}$ is the one that outputs a string $S_{gop}$ with the number of `1' bits $c_1 \in m_{valid} = \{\frac{N}{4} - \delta_{er}, \dots,  \frac{N}{4} + \delta_{er}\}$. To prove optimality of this strategy we show that any other strategy necessarily fails the \texttt{cVer} test. To calculate the success probability of Eve under $\E_{gop}$ strategy, we define the event space- the set of potentially valid strings with $c_1 \in m_{valid}$ that Eve needs to choose from to maximise her accept probability. From this, we calculate the domain size of the subset of strings that pass \texttt{test1}. If this test is passed, then \texttt{test2} is automatically passed since Eve chooses from the event space.  The resulting Eve's success probability is,
\begin{equation}
\small
 \text{Pr}[\text{Ver accept}_{\text{Eve,G}}] =
(2\delta_{er} + 1)\times \frac{(\frac{N}{2})!(\frac{3N}{4})!}{N!(\frac{N}{4})!}
\approx \mathcal{O}(2^{-N})
\end{equation}
The exponential security in $N$ comes from the fact that the event space is exponentially large compared to the subset. This is a consequence of the random hiding of traps by Alice. 

We compare the two attack strategies of Eve to find the optimal classical attack. For this comparison, we fix the accepted tolerance value $\delta_{er}=1$ although the same result holds for $\delta_{er} \neq 0$. Figure~\ref{fig:Figure3} shows the acceptance probabilities of Eve in the independent guessing strategy and global strategy as a decreasing function of the string length $N$. It is clear that the global strategy performs better than the independent strategy, although they both converge in the limit of large $N$.
\begin{figure}[h!]
\includegraphics[scale=0.6]{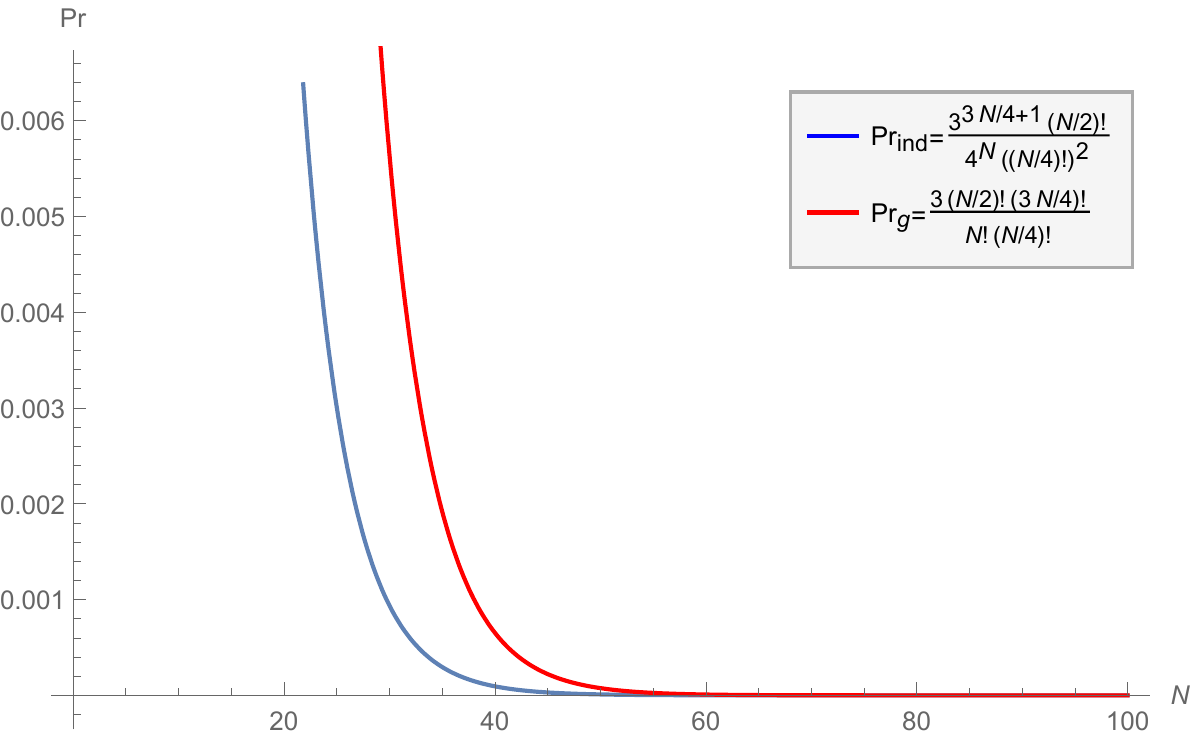}
\centering
\caption{Comparison of the acceptance probabilities of Eve in the independent guessing strategy (in blue) and global strategy (in red) as a decreasing function of the string length $N$ for the tolerance value $\delta_{er}=1$}
\label{fig:Figure3}
\end{figure}
\end{proof}

\subsection{Security against quantum adversary}

We now investigate the soundness property of the protocol against QPT Eve by modelling Eve's strategy with a completely positive trace preserving (CPTP) map that takes as input the target challenge $\kc$, the unknown state $\kb$, and ancilla qubits and outputs the classical bits which are sent to Alice for verification. This map utilises the database information created by Eve during the qPUF transition. A QPT Eve's strategy can be divided into two categories,
\begin{enumerate}
    \item \textbf{Collective attack strategy}: Eve applies an independent CPTP map on each of the $N$ rounds.
    \item \textbf{Coherent attack strategy}: Eve applies a CPTP map on the combined N distinct challenge and their corresponding response states that Alice sends to Bob.
\end{enumerate}
A collective strategy is a special case of Eve's coherent strategy. However, we show that independence in choosing the trap states by Alice reduces the coherent strategy to the collective strategy by Eve. We analyse the collective security first and then give a reduction of the coherent strategy to the collective strategy.

\subsubsection{Collective strategy:} Under this strategy, Eve optimises over all the CPTP maps that inputs Alice's states $\kc$ and $\kb$ and outputs a single bit $s_i$ to maximise the acceptance probability. 
Figure~\ref{fig:Figure4} shows Eve performing a general collective strategy. 
\begin{figure}[h!]
\includegraphics[scale=0.5]{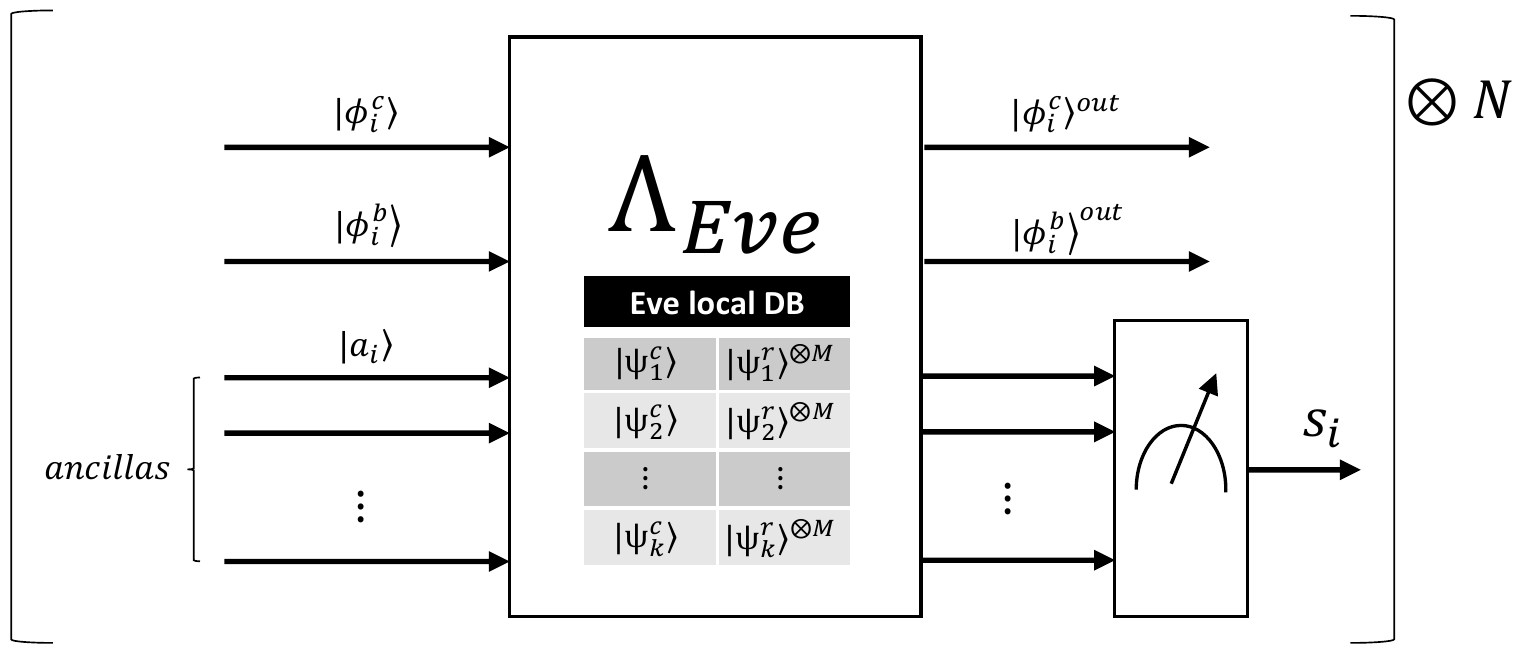}
\centering
\caption{Quantum collective attack strategy performed by Eve on {\qp} protocol by applying the same local-database-dependent CPTP map on each round of the challenge and response state $\kc$ and $\kb$ respectively. The output of the single instance of the map is a bit $S_i$.}
\label{fig:Figure4}
\end{figure}

We denote Eve's quantum map to be, 
\begin{equation}
    \Lambda_{Eve} \equiv \bigotimes_{i=1}^{N}\Lambda_i \quad.
\end{equation}
Contrary to the classical Eve who is unable to figure out the trap positions in any round with a probability higher than half, a QPT Eve, by leveraging her local database information, could be expected to do better than a random guess.
More formally, we say that the {\qp} protocol is secure against any QPT Eve who performs a CPTP map $\Lambda_i$ on the states $\kc, \kb$ for each $i \in [N]$, if the resulting success probability of correctly guessing the bit $b$ for each position differs negligibly in the security parameter from half.
%
\begin{theorem}[\textbf{Security against collective attack}] \label{th:colattack} The success probability of any QPT adversary in correctly guessing whether $\kb = \kr$ for each $i \in [N]$ differs negligibly from half,
\begin{equation}
    \text{Pr}[b \leftarrow \Lambda_i(\kc,\kb)] \leqslant \frac{1}{2} + \mathcal{O}(2^{- d}) \hspace{2mm} \forall i \in [N]
\end{equation}
where \normalfont{$d = \mathcal{O}(\text{poly} \log D)$} is the size of Eve's database and qPUF is in $\mathcal{H}^{D}$.
\end{theorem}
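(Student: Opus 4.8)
The plan is to show that correctly guessing the hidden bit $b$ is no easier than forging the qPUF response to the challenge $\kc$, and then to invoke the selective unforgeability bound $\text{Pr}[F_i^2 \geq \delta] \leq (d+1)/D$ already established for the qPUF. First I would reformulate Eve's task: her map $\Lambda_i$ receives the challenge $\kc$, the single unknown state $\kb \in \{\kr, \ket{\phi_i^{\text{trap}}}\}$, and her $d$-dimensional database, and must output a guess for $b$. Since $b$ is uniform across positions, the success probability equals $\tfrac{1}{2}\text{Pr}[\Lambda_i \to 1 \mid b=1] + \tfrac{1}{2}\text{Pr}[\Lambda_i \to 0 \mid b=0]$, which is exactly the success probability of a binary quantum hypothesis test discriminating the response $\kr = U\kc$ from the trap $\ket{\phi_i^{\text{trap}}} = U\ket{\phi_i^{\perp}}$, with $\kc$ and the database as side information.

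Next I would upper bound this discrimination probability by the Helstrom trace-distance bound $\tfrac{1}{2} + \tfrac{1}{2}\lVert \bar\rho_1 - \bar\rho_0 \rVert_1$, where $\bar\rho_b$ is the state of the $\kb$-register averaged over everything Eve cannot control, crucially over the Haar-random action of $U$ on the complement of the subspace fixed by her database. The core computation is that, for a Haar-random challenge, the component of $\kc$ lying in Eve's learned $d$-dimensional subspace carries expected weight of order $d/D$; only this component lets Eve predict $\kr$, while on the orthogonal complement the genuine response and the random, orthogonal trap are statistically identical from her viewpoint. Consequently the distinguishing advantage is governed by the overlap of Eve's best reconstruction $\ket{g_i}$ of $\kr$ with the true response, i.e. by the forgery fidelity $F_i^2 = F^2(\ket{g_i}, \kr)$.

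Finally I would substitute selective unforgeability. Because the advantage is bounded by a quantity of order $F_i^2$, and the qPUF satisfies $\text{Pr}[F_i^2 \geq \delta] \leq (d+1)/D$ for Eve's polynomial-size database (Appendix on selective unforgeability), the fidelity collapses to $\delta \to 0$ with overwhelming probability. With $D$ exponential in the security parameter and $d = \mathcal{O}(\text{poly}\log D)$, this yields $\text{Pr}[b \leftarrow \Lambda_i(\kc,\kb)] \leq \tfrac{1}{2} + \mathcal{O}((d+1)/D) = \tfrac{1}{2} + \mathcal{O}(2^{-d})$ for every $i \in [N]$, as claimed. I would also note that adaptivity, folding previous rounds into the database, only enlarges $d$ by a polynomial amount and so does not change the conclusion.

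The main obstacle is the middle step: rigorously reducing \emph{distinguishing} the response from the trap to \emph{forging} the response. Discriminating two orthogonal states is trivial when both are known, so the entire argument hinges on formalising that Eve does \emph{not} know the pair $\{\kr, \ket{\phi_i^{\text{trap}}}\}$ and that her only handle on the correct measurement basis is her provably poor reconstruction of $\kr$. Making the bound \textbf{advantage} $\lesssim F_i$ precise, in particular ruling out that a cleverly chosen POVM on the single copy of $\kb$ extracts more than the forgery fidelity, and correctly averaging over the Haar-random trap direction and the unknown part of $U$, is where the real work lies; everything after it is a direct substitution into the unforgeability bound.
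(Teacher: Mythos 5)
Your overall architecture is sound and lands on the same final ingredient as the paper --- everything reduces to the selective unforgeability bound $\text{Pr}[F_i^2 \geq \delta] \leq (d+1)/D$ --- but you take a genuinely different route at the decisive step, and it is exactly the step you flag as the obstacle. You propose to bound the distinguishing advantage via the Helstrom bound on the trace distance between the two hypothesis states averaged over the unknown part of $U$ and the Haar-random trap direction, and then to argue that this trace distance is controlled by the forgery fidelity. The paper instead makes a structural claim: without loss of generality, any attack map $\Lambda_i$ factors into a generator $gen(DB,\kc)$ that outputs a reference state $\rho_e$, followed by an equality test $\T$ on $(\kb,\rho_e)$; Eve is then boosted to the single-instance \emph{ideal} test $\Ti$ of Definition~\ref{def:ideal-test}, whose acceptance probability is exactly $F(\kb,\rho_e)^2$. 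With that in hand the reduction is immediate: a conditional success probability $\tfrac{1}{2}+\nonnegl(\lambda)$ given $b=1$ forces $F(\rho_e,\kr)^2 \geq \tfrac{1}{2}+\nonnegl(\lambda)$, which contradicts selective unforgeability. So the paper closes the gap you identify essentially by modelling fiat (the ``generate-then-test'' decomposition plus the idealised test), whereas your approach would, if completed, be the more information-theoretically rigorous one --- the Helstrom bound makes no assumption about the structure of Eve's POVM. The price is that you would actually have to carry out the Haar average and show $\tfrac{1}{2}\lVert \bar\rho_1 - \bar\rho_0\rVert_1$ is controlled by the learnable overlap, which you have not done.

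To be concrete about the remaining gap in your version: your claim that ``the advantage is bounded by a quantity of order $F_i^2$'' is the entire theorem, and as written it is asserted rather than derived. The worry you yourself raise --- that a cleverly chosen POVM on the single copy of $\kb$ might extract more than the forgery fidelity --- is precisely what the Helstrom computation must rule out, and nothing in your sketch does so. If you want to follow the paper, adopt its decomposition and the ideal-test abstraction explicitly (citing Definition~\ref{def:ideal-test}), in which case the inequality becomes definitional; if you want to keep the Helstrom route, you need the actual estimate on $\lVert \bar\rho_1 - \bar\rho_0\rVert_1$, presumably by splitting $\kc$ into its components inside and orthogonal to Eve's learned $d$-dimensional subspace and showing the orthogonal components of $U\kc$ and $U\ket{\phi_i^{\perp}}$ induce the same averaged state on Eve's register. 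Either way, your closing remark about adaptivity only enlarging $d$ polynomially is correct and matches the paper's treatment.
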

\begin{proof}
First, we use the symmetry of the problem to restrict ourselves to cases where $b=1$. We prove the theorem by contradiction i.e., suppose there exists an algorithm $W$ that wins the quantum security game for each index $i \in [N]$ with a probability non-negligibly better than a random guess. In other words, $W = 1$ if the index $b$ is correctly guessed, and $W = 0$ otherwise. Let $f(\lambda) \geqslant 0$ be a non-negligible function of the security parameter. 
The joint probabilities for all collective possible values of $b$ and $W$ can be written as,
\begin{equation}
\begin{split}
        & \text{Pr}[W=1, b=1] = \frac{1}{4} + f(\lambda)\quad \text{Pr}[W=1, b=0] = \frac{1}{4} - f(\lambda) \\
        & \text{Pr}[W=0, b=0] = \frac{1}{4} + f(\lambda)\quad \text{Pr}[W=0, b=1] = \frac{1}{4} - f(\lambda)
\end{split}
\end{equation}
where the joint probabilities are higher when $W$ correctly guesses $b$, and is lower otherwise. From this, we can define the following conditional probability of winning for cases where $b=1$ as follows:
\[Pr[W=1 | b=1] = \frac{Pr[W=1, b=1]}{Pr[b=1]} = \frac{1}{2} + f'(\lambda)\]
Where $f' = 2f$ is again a non-negligible function in the security parameter $\lambda$. This is the same probability of winning when $b = 0$ i.e. $\text{Pr}[W = 0|b = 0]$.

Now we show that the success probability of Eve in successfully guessing whether $\kb = \kr$ reduces to finding a CPTP map $\Lambda_i$ which performs an optimal quantum test to distinguish the response state $\kb$ with the reference state $\ke$. As Eve has no access to the actual response $\kr$, the reference state $\ke$ should be generated within the $\Lambda_i$ itself. Thus without loss of generality, any attack map $\Lambda_i$, consists of two parts. The first part uses a generator algorithm \emph{gen} to generate a reference state $\ke$, or more generally a mixed state $\rho_e$ by using the local database and the input challenge state $\kc$, and the second part performs a test algorithm $\T$ on $\kb$ and $\rho_e$,

\begin{equation}
    \Lambda_i \equiv \T(\kb, \rho_e \leftarrow gen(DB,\kc))
\end{equation}

where $DB$ is the local database of Eve generated in the  \emph{setup phase}. To further provide the capability to Eve, we assume that her test $\T$ is an optimal test equality test algorithm, also referred as ideal test algorithm in  Definition~\ref{def:ideal-test}, i.e. $\T = \Ti$. Note that $\Ti$ is the optimal test allowed by quantum mechanics where the probability of succeeding in the equality test is proportional to the square of the fidelity distance of the two states. Now we state the following contraposition:
Let us assume that there exists a winning algorithm $W$ running $\Lambda = \Ti(\kb, \rho_e)$ such that,
\begin{equation}
\text{Pr}[1 \leftarrow \Lambda(\kc,\kb)| b = 1] \leqslant \frac{1}{2} + \nonnegl(\lambda)   
\label{eq:nonnegl}
\end{equation}

From Definition~\ref{def:ideal-test}, we see that $\Ti$ outputs $1$ with probability $p = F(\kb, \rho_e)^2$. In other words,
\begin{equation}
     \text{Pr}[1 \leftarrow \Lambda(\kc,\kb)| b = 1] = \text{Pr}[1 \leftarrow \Ti] = F(\kb, \rho_e)^2 \leqslant \frac{1}{2} + \nonnegl(\lambda)
\end{equation}

This implies that if an algorithm $W$ exists for Eve, then she is able to generate the state $\rho_e$ with non-negligible fidelity with the valid qPUF response (for b=1), and similarly with trap states (for b = 0). And this would hold for all $i \in [N]$. 
But this contrasts with the selective unforgeability of the qPUF which states that the success probability of any QPT adversary having polynomial-size access to the qPUF is bounded as $\frac{d_e+1}{D}$ where $d_e = poly(\lambda) = \text{poly} \log(D)$ is the dimension of subspace that Eve has learnt from $\HilD$ \cite{arapinis2021quantum}. Thus such $\Lambda$ cannot exist even with the most efficient test $\Ti$. This proves the theorem.
\end{proof}

\subsubsection{Coherent Strategy:} The collective strategy is restricted to Eve applying individual unentangled maps in each round. A more generalised strategy, the coherent strategy, involves applying a CPTP map collectively on all the rounds thus potentially leveraging entanglement capabilities across rounds. Such a strategy takes as input the $N$ challenge states $\otimes_{i=1}^{N}\kc$, the $N$ response state $\otimes_{i=1}^{N}\kb$ and the ancilla qubits, and outputs a $N$ bit string $S_N$ which is sent to Alice for verification.  Figure~\ref{fig:Figure5} depicts this strategy.  Eve's objective is to produce the $S_N$ which maximises the \texttt{cVer} passing probability. We denote Eve's quantum map to be,
\begin{equation}
    \Lambda_{Eve} \equiv \Lambda^{N}
\end{equation}
\begin{figure}[h!]
\includegraphics[scale=0.55]{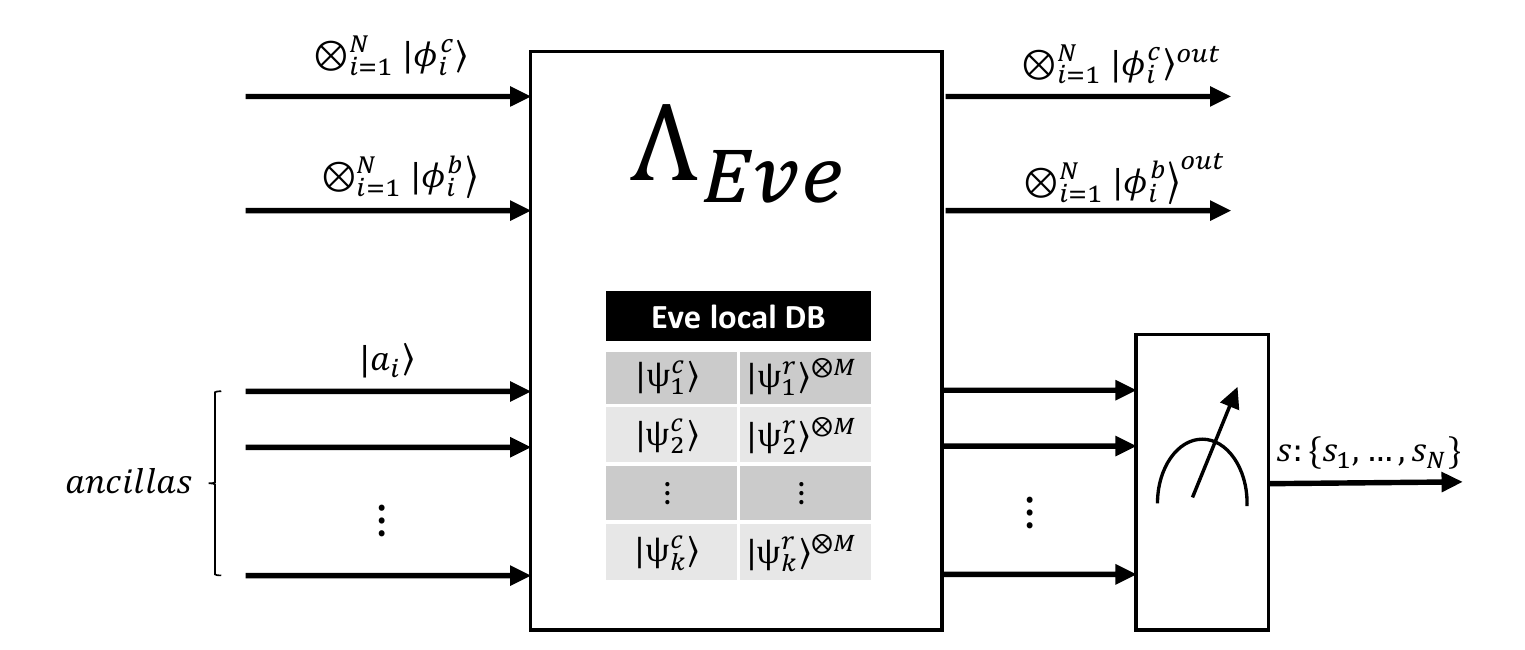}
\centering
\caption{Quantum coherent attack strategy performed by Eve on {\qp} protocol by applying the general local-database-dependent, CPTP map on the combined $N$ challenges and response states $\kc$ and $\kb$ respectively. The output is the $N$ bit string $s: \{ s_1,\cdots,s_N\}$.}
\label{fig:Figure5}
\end{figure}

We say that the {\qp} protocol is secure against any QPT Eve who performs the map $\Lambda^N$ if the resulting success probability of correctly guessing the $b$ value for all the $N$ positions is negligibly small in the security parameter.

\begin{theorem}[\textbf{Security against coherent attack}] \label{th:cohattack} The success probability of any QPT adversary in correctly guessing the $b$ values for all the $N$ positions, denoted by $\{b_1,\cdots,b_N\}$ is,
{\normalfont 
\begin{equation}
    \text{Pr}[\textbf{b} \leftarrow \Lambda^N(\ket{\phi^{\textbf{c}}},\ket{\phi^{\textbf{b}}})] \leqslant \Big(\frac{1}{2} + \mathcal{O}(2^{-d})\Big)^N 
\end{equation}
}

where $\textbf{b}: \{b_1,\cdots,b_N\}$ are the bits corresponding to correct $b$ values, $\ket{\phi^{\textbf{c}}} = \otimes_{i=1}^{N}\kc$, $\ket{\phi^{\textbf{b}}} = \otimes_{i=1}^{N}\kb$ and \normalfont{$d = \mathcal{O}(\text{poly} \log D)$} is the size of Eve's database.
\end{theorem}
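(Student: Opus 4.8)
The plan is to reduce the coherent strategy to the collective strategy of Theorem~\ref{th:colattack}, using that Alice draws every challenge independently from the Haar measure. Although $\Lambda^N$ may act jointly on all $N$ rounds and entangle them, the state it receives carries no cross-round quantum correlation: the input $\ket{\phi^{\textbf{c}}}\otimes\ket{\phi^{\textbf{b}}}$ factorises across rounds, with the $i$-th pair of factors depending only on the independently chosen challenge $\kc$ and on the single bit $b_i$ (through whether $\kb=\kr$ or $\kb$ is the orthogonal trap). Guessing the whole string $\textbf{b}$ correctly is exactly the conjunction of correctly deciding, in each round, whether $\kb$ is the valid response or the trap.

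First I would make the product structure precise. Conditioned on the bit string $\textbf{b}$, the joint state handed to Eve is a tensor product $\bigotimes_i \sigma_i^{b_i}$, where $\sigma_i^{1}$ involves $\kr$ and $\sigma_i^{0}$ involves the trap, and the two live in subspaces fixed by the independent Haar-random challenge $\kc$. This is the step where Alice's independent encoding is essential, as it is what prevents any genuine quantum coupling between distinct rounds.

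Next comes the reduction itself, which I expect to be the main obstacle. For discriminating a tensor product of independent ensembles under the conjunctive ``all rounds correct'' figure of merit, an entangled map offers no advantage over processing each factor separately: entangling the rounds cannot create information about $b_i$ that is absent from the $i$-th factor. Hence Eve's optimal coherent success probability is governed, round by round, by the same per-round distinguishing task analysed in the collective case, and the outcomes become conditionally independent given $\textbf{b}$. The one point requiring care is the classical correlation introduced by Alice fixing exactly $N/2$ traps — the same correlation exploited by the classical global strategy; restricting the prior from all $2^N$ strings to the $\binom{N}{N/2}$ balanced strings only lowers Eve's success probability relative to the fully independent prior, so it does not affect the exponential bound.

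Finally I would invoke Theorem~\ref{th:colattack} to bound each per-round success probability by $\tfrac{1}{2}+\mathcal{O}(2^{-d})$ — a bound that rests on the selective unforgeability of the qPUF, since Eve cannot generate a reference state of non-negligible fidelity with either $\kr$ or the trap. Combining the factorisation with this per-round bound yields
\begin{equation}
\text{Pr}[\textbf{b} \leftarrow \Lambda^N(\ket{\phi^{\textbf{c}}},\ket{\phi^{\textbf{b}}})] \leqslant \Big(\frac{1}{2}+\mathcal{O}(2^{-d})\Big)^N,
\end{equation}
which is negligible in the security parameter and completes the argument.
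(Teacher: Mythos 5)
Your proposal follows essentially the same route as the paper: both reduce the coherent map $\Lambda^N$ to a tensor product of per-round maps by arguing that the independence of Alice's per-round choices leaves no cross-round structure for entanglement to exploit (the paper phrases this as the optimal generator producing $\rho_{\max}^{\otimes N}$, so that $\Lambda^N$ collapses to $\Lambda_{ind}^{\otimes N}$), and then both invoke the per-round bound of Theorem~\ref{th:colattack} to obtain the product bound. One correction to your side remark about Alice fixing exactly $N/2$ traps: conditioning on a balanced string does not \emph{lower} Eve's whole-string guessing probability relative to the independent prior, it \emph{raises} it by a factor of order $\sqrt{N}$, since a blind guess now succeeds with probability $\binom{N}{N/2}^{-1}\approx\sqrt{N}\,2^{-N}$ rather than $2^{-N}$; this polynomial factor does not threaten negligibility, and the paper sidesteps the issue entirely by treating the $b_i$ as independent, but your stated direction of the inequality is wrong.
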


\begin{proof}
To prove this theorem, we notice that Eve applies a generalised map $\Lambda^N$ on the challenge and the response states of Alice to be able to correctly distinguish whether the response states are $\kb = \kr$ for all $i \in [N]$. Thus the probability to correctly guess $\textbf{b}$ reduces to Eve applying a CPTP map $\Lambda^N$ to perform an optimal test to distinguish the response state $\ket{\phi^{\textbf{b}}}$ with her reference  state $\rho_e^N$, where $\rho_e^N$ is the generalised entangled state. Thus without loss of generality, any attack map $\Lambda^N$, consists of two parts. The first part uses a generator algorithm $gen_N$ to generate a reference state $\rho_e^N$ by using the local database and the input challenge state $\ket{\phi^{\textbf{c}}}$, and the second part performs a test algorithm $\T$ on $\ket{\phi^{\textbf{b}}}$ and $\rho_e^N$,
\begin{equation}
    \Lambda^N = \T(\ket{\phi^{\textbf{b}}}, \rho_e^N \leftarrow gen(DB,\ket{\phi^{\textbf{c}}}))
\end{equation}
where $DB$ is the local database of Eve generated in the  \emph{setup phase}. Similar to the collective strategy proof, we assume Eve's testing algorithm $\T$ is the optimal test equality test algorithm, also referred as ideal test algorithm in  Definition~\ref{def:ideal-test}, i.e. $\T = \Ti$. Here $\Ti$ again relates to the fidelity distance between the two states,
\begin{equation}
     \text{Pr}[1 \leftarrow \Lambda^N(\ket{\phi^{\textbf{c}}},\ket{\phi^{\textbf{b}}})] = \text{Pr}[1 \leftarrow \Ti] = F(\ket{\phi^{\textbf{b}}}, \rho_e^N)^2 
\end{equation}
Since each $b$ across the $N$ positions are chosen independently and randomly, this implies at entangling the map across different rounds does not help Eve in any way. Thus to correctly guess the $b$ values for all the $N$ positions, the optimal attack strategy of Eve is to generate the reference state $\rho_{max}^{\otimes N}$, such that,
\begin{equation}
    F(\rho^{\text{max}}, \kr) = \sqrt{\bra{\phi^b_i}\rho^{\text{max}}\kb} \geqslant \sqrt{\bra{\phi^b_i}\rho_{i}\kb \hspace{2mm}} \forall{i \in [N]} 
\end{equation}
where $\rho_{i} = \text{Tr}_{\{1\cdots N/i\}}(\rho_e^N)$, i.e. $\rho_{i}$ is obtained by tracing out the N-1 instances $\{1,\cdots N/i \}$. 

This further implies that attack map $\Lambda^N$ is reduced to $\Lambda_{ind}^{\otimes N}$, where the map $\Lambda_{ind}^{\otimes N}$ involves a generator algorithm that produces the state $\rho^{max}$ which maximises the average fidelity with Alice's response state across all the $N$ rounds. This implies that,
\begin{equation}
    \begin{split}
        \text{Pr}[\{b_1,\cdots,b_N\} \leftarrow \Lambda^N(\ket{\phi^{\textbf{c}}},\ket{\phi^{\textbf{b}}})] &=  \prod_{i=1}^{N}\text{Pr}[b_i \leftarrow \Lambda_{ind}(\kc,\kb)] \\
        &\leqslant \Big(\frac{1}{2} + \negl(\lambda)\Big)^N 
    \end{split}
\end{equation}
where we used the result of theorem~\ref{th:colattack} after the reduction from coherent to the collective attack. This completes the proof.
\end{proof}

\subsubsection{Comparing Classical and Quantum Strategies:}
Using the above theorems~\ref{th:colattack} and ~\ref{th:cohattack} we show that a QPT Eve does not have any non-negligible advantage in passing the \texttt{cVer} verification test compared to the purely classical Eve. Thus, we can bound the success probability of a general QPT Eve which the success probability of the classical Eve from the thereom~\ref{th:cv-clattack},
\begin{equation}
\text{Pr}[\text{Ver accept}_{\text{QPT Eve}}]  \leqslant \text{Pr}[\text{Ver accept}_{\text{Classical Eve}}] + \mathcal{O}(2^{-N}) \approx \mathcal{O}(2^{-N})
\end{equation}
%

\subsection{Protocol generalisation to arbitrary distribution of traps}\label{sec:qp-general}
In the original {\qp} protocol, Alice randomly picks half of the $N/2$ positions, and marks them $b = 1$. The rest is marked $b = 0$.  Here, even though an adversary Eve does not know the locations of valid qPUF response states and the trap states, she knows that half of the positions are traps. In this section, we generalise the {\qp} protocol to further hide the number of traps information from Eve. This is done with the hope that hiding the number of trap and good response states could further decrease the probability of Eve passing the \texttt{cVer} test especially against a fully classical Eve who only uses the statistics information to attack the protocol. Here Alice chooses an arbitrary number of trap positions. In other words, she randomly pics a value $p \in [0,1]$, then randomly picks $pN$ locations out of $N$ and marks them b= 1 (valid response states). The rest of $(1-p)N$ positions are assigned $b = 0$ (trap positions). 
One can observe that the protocol on Bob's side does not depend on this value $p$, hence Alice is not required to make the $p$ value public. 
We note that $b=1$ positions must all have bits valued 0, and $b = 0$ positions must have half bits valued 0 and the rest are valued 1 (assuming $\delta_{er} = 0$ for simplicity) if the $N$ bits have to pass the classical verification algorithm \texttt{cVer}. Now, upon running the {\qp} protocol, there are in total $N(1+p)/2$ number of  0 bits and $N(1-p)/2$ number of `1' bits in the desired bit-string $S_N$ which can pass the verification. 
Changing the tolerance value $\delta_{er}$ will not affect the result as we have seen in the previous section that by having a $\delta_{er}$ much smaller than $N$ the probability only multiplies to a constant factor.
We follow the same argument as in the proof of Theorem~\ref{th:cv-clattack}, for finding the optimal success probability of Eve generating successful bit-strings for the new classical verification. We say that the optimal strategies are the ones where their string space consists of exactly $c_1$ bits that are 1, where here $c_1=N(1-p)/2$. For the specific case of $p=0.5$, we have proven the optimality of such strategies. Hence in this specific case, we can refer to the same proof. In the generalised setting, the $p$ value is unknown, and as a result $c_1$ is unknown to Eve as well. Therefore the overall winning probability of Eve will depend on first guessing the correct values of $c_1$ and then the probability of such strings passing both tests. Also, we know that the probability of any strings with incorrect $c_1$ is necessarily 0, hence we can write the probability that Eve passes the verification test as follows,
\begin{equation}
\small
\text{Pr}[\text{Ver accept}_{\text{Eve}}] = \text{Pr}[\text{guess } c_1]\times \text{Pr}[\text{Ver accept}_{\text{Eve},S_{gop}} | c_1=\frac{N(1-p)}{2}] = \text{Pr}[\text{guess } c_1]\times \frac{{N-Np\choose\frac{N-Np}{2}}}{{N\choose\frac{N-Np}{2}}}    
\end{equation}

Let us assume that Alice, in order to maximize the randomness over the correct choice of $c_1$, picks $p$ completely uniformly from $[0,1]$. In this case, the number of trap responses can be any number between 0 (for $p=1$) and N (for $p=0$). Consequently, $c_1 \in \{0, 1,\dots, \frac{N}{2}\}$ and if any of these values occur with equal probability, then Eve can guess $c_1$ with the following probability:
\[\text{Pr}[\text{guess } c_1] = \frac{1}{\frac{N}{2} + 1}\]

Now one can calculate the average wining probability of Eve over p:
\begin{equation}
\underset{p}{\text{Pr}[\text{Ver accept}_{\text{Eve}}]} = \int_{0}^{1} \frac{2}{N + 2}\frac{(N-Np)!(\frac{N+Np}{2})!}{N!(\frac{N-Np}{2})!} dp    
\end{equation}

In the Appendix~\ref{ap:avgprob}, we have shown that the above integral converges to the following value:
\begin{equation}
\underset{p}{\text{Pr}[\text{Ver accept}_{\text{Eve}}]} \approx \overline{Pr_{win}} = \frac{2}{N + 2} \sum^{N}_{k=0} \frac{(N-k)!(\frac{N+k}{2})!}{N!(\frac{N-k}{2})!} \approx \frac{6}{N(N+2)} = \mathcal{O}(\frac{1}{N^2})    
\end{equation}

This means that by choosing the $p$ form a uniform distribution, the average success probability of the adversary becomes polynomially small in $N$ which reduces the security of the protocol to polynomial. This may seem a surprising result although the reason is that the probability function for $p=0$ and $p=1$ is 1.
On the other hand, from the security result for $p=\frac{1}{2}$, we know that the probability function's behaviour can be inverse exponential. This gives rise to the interesting question of whether one can find a boundary for $p$ in which $\text{Pr}[\text{Ver accept}_{\text{Eve}}]$ is negligible. Before addressing this problem, it is worth mentioning that by hiding $p$, one can hope the protocol's security to be boosted by at most a polynomial factor ($\frac{1}{\mathcal{O}(N)}$) as Eve's probability of guessing the correct $c_1$ depends only on the different number of 1's in the string that results from different choices of $p$. Even though for large $N$ this polynomial factor can be ignored, assuming that Alice has a good choice of $p$ which leads to the exponential security, in relatively smaller $N$ the hiding can practically boost the security of the identification.

Now to be able to analyse the $\text{Pr}[\text{Ver accept}_{\text{Eve}}]$, we rewrite the factorials with Gamma function and we define $z=\frac{N-Np}{2}$ where $z \in \{0,1,\dots,\frac{N}{2}\}$. Considering that $\Gamma(z+1)=z\Gamma(z)$, the probability is,
\[\text{Pr}[\text{Ver accept}_{\text{Eve}}] = \frac{(N-Np)!(\frac{N+Np}{2})!}{N!(\frac{N-Np}{2})!} = \frac{\Gamma(2z+1)\Gamma(N-z+1)}{N!\Gamma(z+1)} = \frac{2}{N!}\times\frac{\Gamma(2z)\Gamma(N-z+1)}{\Gamma(z)}\]

Using properties of Gamma functions we have that $\frac{\Gamma(2z)}{\Gamma(z)} = \frac{2^{2z-1}}{\sqrt{\pi}}\Gamma(z+\frac{1}{2})$. Thus we can simplify the function to be:
\begin{equation}
\text{Pr}[\text{Ver accept}_{\text{Eve}}] = \frac{2}{\sqrt{\pi}}\times\frac{2^{2z-1}}{N!}\Gamma(z+\frac{1}{2})\Gamma(N-z+1) \approx \frac{2^{2z-1}}{N!}\Gamma(z+\frac{1}{2})\Gamma(N-z+1)    
\end{equation}

\begin{figure}[t]
    \includegraphics[width=1\textwidth]{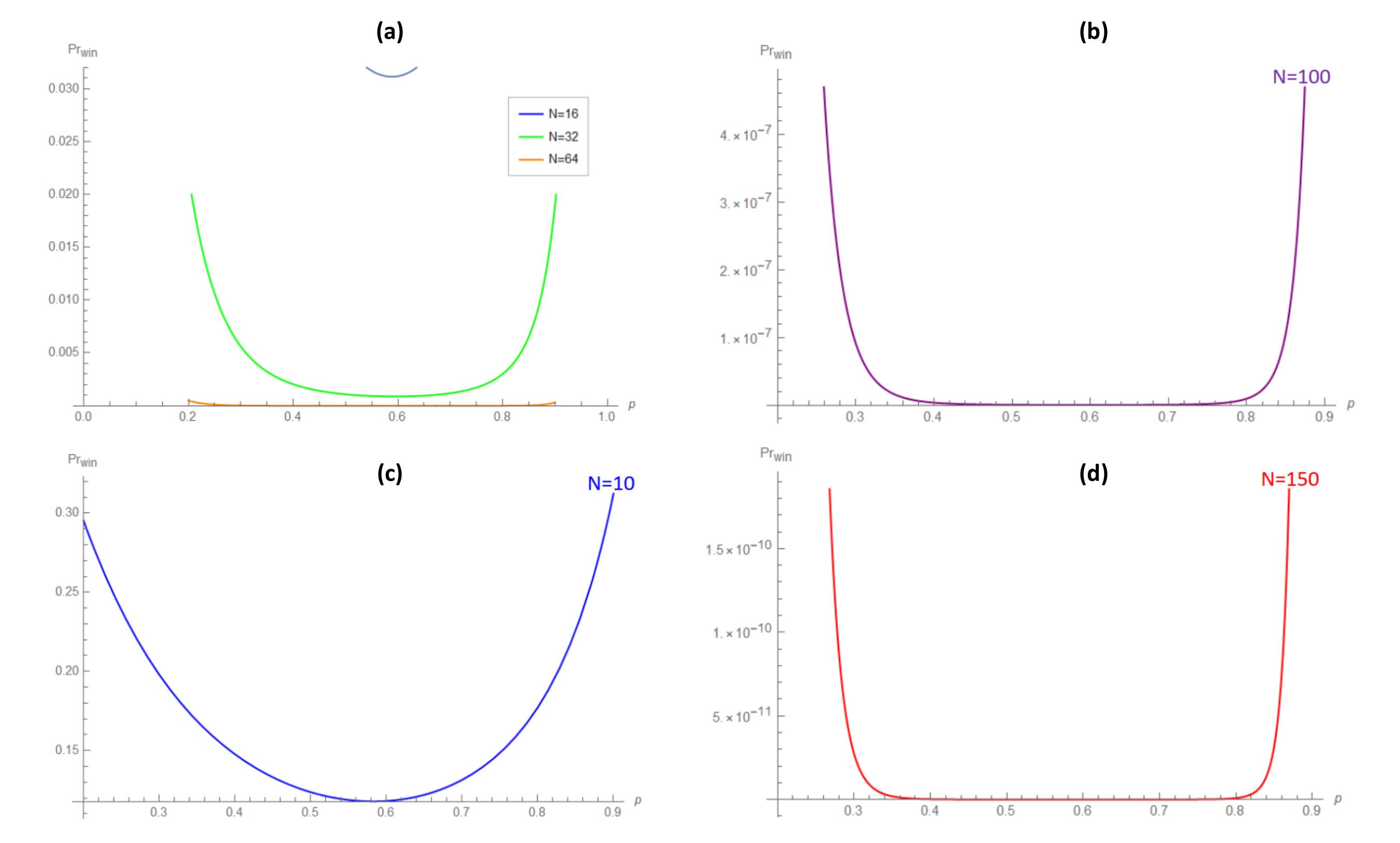}
    \caption{Behaviour of Eve's success probability $\text{Pr}[\text{Ver accept}_{\text{Eve}}]$ as a function of $p$ (corresponding to number of valid qPUF responses), for different values of $N$.}\label{fig:prob-4pic}
\end{figure}

For a large enough fixed $N$, the factor $\frac{2^{2z-1}}{N!} \ll 1$. However it is an increasing function in $z$ and $\Gamma(z+\frac{1}{2})\Gamma(N-z+1)$ is a large factor which quickly decreases with $z$. Also at the beginning and the end of the period where $z=0, z=\frac{N}{2}$, the probability is 1, and it reduces to a small value for certain $z$. Thus it can be deduced that the function will necessarily have a minimum for any $N$. The Figure~\ref{fig:prob-4pic}, different $\text{Pr}[\text{Ver accept}_{\text{Eve}}]$ for different $N$ has been shown. We have renormalised the probabilities as a function of $p$ to be able to compare them. As can be seen, the function for all the different values of $N$ falls exponentially in a minimum region where there are the desirable values of $p$. As $N$ grows, the range of desirable $p$ expands, which can be seen in the top right plot where we compare the probability for $N=16, N=32$ and $N=64$. Also by comparing the probability range for $N=10, N=100, N=150$ one can see how the exponential security is achieved for a $p$ which has been chosen in the \emph{good} region. This specification of the success probability would be useful for Alice to be able to optimise the protocol based on her resources. Moreover, the freedom of choosing traps according to desired distribution, conditioning that it bounds the value of $p$ to the minimum region, enables the protocol to be useful in other scenarios.

\section{Resource comparison of protocols} \label{sec:comparison}
\begin{table}[h!]
\centering
\resizebox{\textwidth}{!}{
\begin{tabular}{|c|c|c|c|c|c|c|c|c|}
\hline
Protocol & \multicolumn{2}{l|}{Security} & \multicolumn{2}{l|}{Quantum Memory} & \multicolumn{2}{l|}{Verification computing ability} & \multicolumn{2}{l|}{Communication round} \\ \hline
  &                    &  & Verifier & Prover & Verifier & Prover & Quantum & Classical \\ \hline
{\cps} & \multirow{3}{*}{$\epsilon$} &  = $2^{-MN}$ & $\log 1/\epsilon$      & 0     & $poly \log D$       &   0   & $\log 1/\epsilon$      & 0        \\ \cline{1-1} \cline{3-9} 
{\cpg} &                    &  = $(M+1)^{-N}$ & $\frac{M}{\log M+1}\log 1/\epsilon$       & 0     & $poly \log MD$       & 0     & $\frac{1}{\log M+1}\log 1/\epsilon$      & 0        \\ \cline{1-1} \cline{3-9} 
{\qp} &                    & = $2^{-N}$ & $\log 1/\epsilon$       & 0     & 0       & $poly \log D$     & $\log 1/\epsilon$      & 1        \\ \hline
\end{tabular}}
\caption{Comparison of different qPUF-based identification protocols in terms of security ($\text{Pr}[\text{Ver accept}_{\text{Eve}}] = \epsilon$) against any QPT adversary and the three resource categories of the verifier and the prover: quantum memory, computing ability and number of communication rounds. Here all the resources are in $\mathcal{O}(.)$. All our proposed protocols exhibits $\epsilon$  exponential security with polynomial sized resource $\mathcal{O}(\log 1/\epsilon)$ memory/communication and $\mathcal{O}(poly \log D)$ computing ability in both the parties. Here $D$ is the size of qPUF.}
\label{table:comp}
\end{table}
The two proposed qPUF-based identification protocols differ a great deal in terms of the type and amount of resources available to the concerned parties. We divide the resources into three categories: quantum memory, quantum computing ability, and the number of communication rounds required to achieve identification. Here, quantum memory is quantified by the number of quantum states stored in a register, and the computing ability resource is quantified in terms of the number of quantum gates required to implement a specific quantum circuit.

Table~\ref{table:comp} compares the resources of the two protocols that we have introduced. For a fair comparison between the above protocols, we fix the maximum acceptance probability for any QPT adversary, $\text{Pr}[\text{Ver accept}_{\text{Eve}}]$, to be $\epsilon$, and compute the number of resources required to achieve that desired acceptance probability. In all the protocols, we assume that during one identification, $N$ copies of different states, each with $M$ identical copies are used. For the specific case of {\qp} protocol, $M = 1$. For the {\cps} protocol, where the quantum verification is via the SWAP test circuit, the adversary's acceptance probability is $\epsilon = \mathcal{O}(2^{-MN})$. In this protocol, the verifier requires $MN = \mathcal{O}(\log 1/\epsilon)$ size quantum memory and computing ability of $\mathcal{O}(poly \log D)$ quantum gates, where $D$ is the size of qPUF. The prover, on the other hand, requires no quantum memory and computing ability. The number of communication rounds required to achieve the desired security is $MN = \mathcal{O}(\log 1/\epsilon)$. The protocol {\cpg}, where the quantum verification is via the GSWAP test circuit, the adversary's acceptance probability is $\epsilon = \mathcal{O}((M+1)^{-N})$. In this protocol, the verifier requires $MN = \mathcal{O}(\frac{M}{\log M+1} \log 1/\epsilon)$ size quantum memory and a computing ability of $\mathcal{O}(poly \log MD)$ quantum gates. Similar to {\cps}, the prover requires no quantum memory and computing ability. The number of communication rounds required to achieve the desired security is $N = \mathcal{O}(\frac{1}{\log M+1} \log 1/\epsilon)$. Thus for large $M$ values, the verifier's quantum memory requirement is less while using SWAP compared to GSWAP, but the number of communication rounds is higher using the SWAP test.

Now for the {\qp} protocol, the protocol with the low-resource verifier, the adversary's acceptance probability is $\epsilon = \mathcal{O}(2^{-N})$. In this protocol, the verifier requires $N = \mathcal{O}(\log 1/\epsilon)$ size quantum memory. Since the verifier performs classical verification, hence she does not require a quantum computing ability. The prover here requires no quantum memory but since he performs the SWAP test circuit, his computing ability is required to be $\mathcal{O}(poly \log D)$. The number of quantum communication rounds required to achieve the desired security is $N = \mathcal{O}(\log 1/\epsilon)$. This protocol also requires a single round of classical communication transmitting $N$ bits.

Figure~\ref{fig:all-resources} demonstrates the graphical comparison of different resources among the three qPUF-based identification protocols. The plots show a tradeoff in resources between different protocols to achieve the desired success probability of $\epsilon$. We choose the $\epsilon$ to range from $10^{-6}$ to $10^{-1}$. Since the computing ability resource depends on the qPUF size $D$, we choose $D = 1/\epsilon$ for comparison. 

We identify that the difference in resources primarily comes about due to the different requirements of SWAP and GSWAP tests. To illustrate this graphically, we provide density plots in
Figure~\ref{fig:swap-gswap-density} to showcase the trade-off between the success probability $\epsilon$ and the memory and communication round resources required for different $M$ ad $N$'s for protocols based on SWAP vs GSWAP tests. 

\begin{figure}[t]
\includegraphics[width=0.98\textwidth]{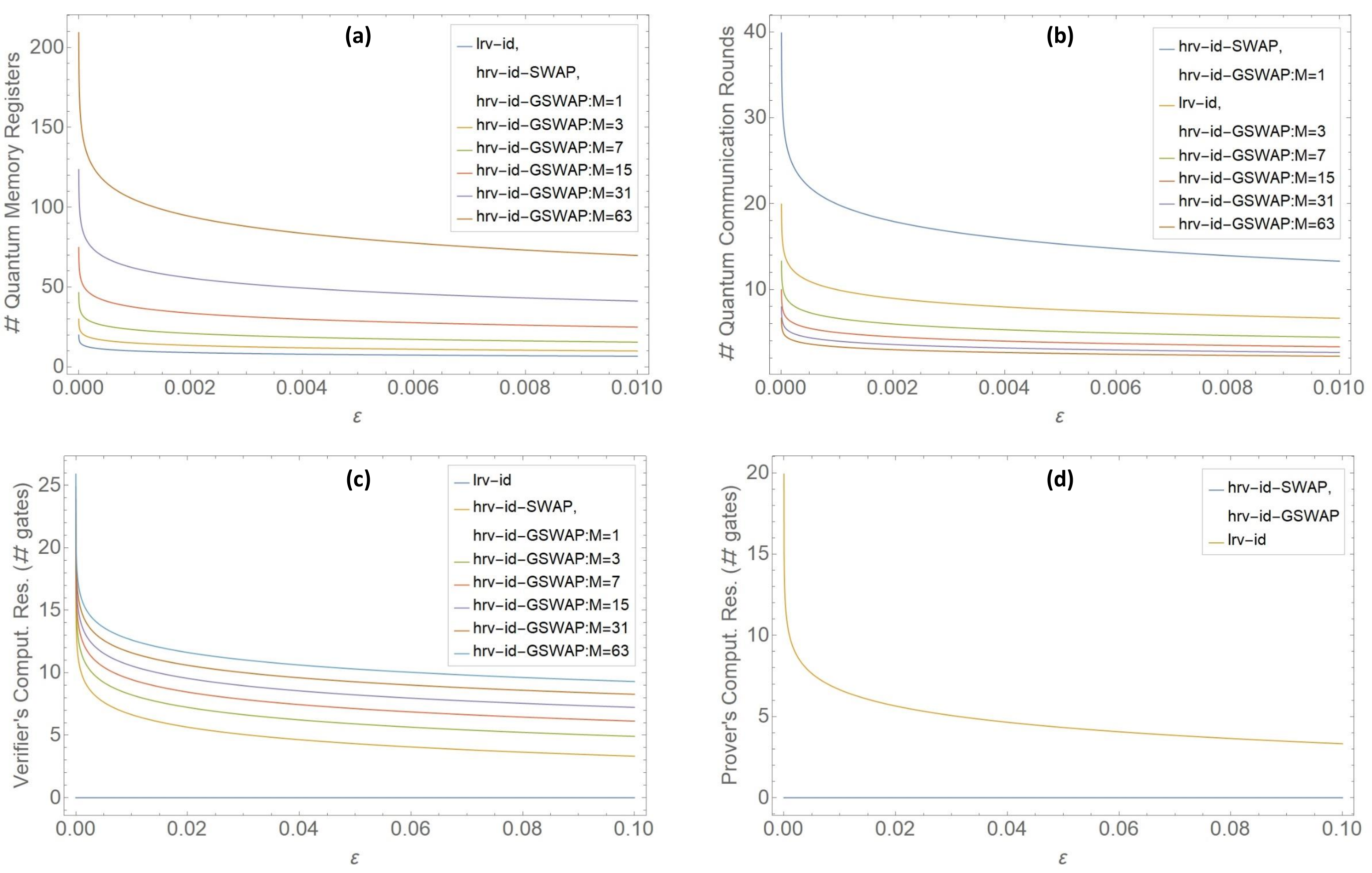}
\centering
\caption{Comparison of the resources required by the prover and verifier in the three qPUF-based identification protocols (\cps, \cpg, and \qp) for varying security values $\epsilon$. We choose the $\epsilon$ to range from 
$10^{-6}$ and $10^{-2}$ for the top row and between $10^{-6}$ and $10^{-1}$ for the bottom row. 
Plot~top left compares the verifier's quantum memory resource vs $\epsilon$ for the three protocols. The plot shows that the least memory requirement is minimum in {\cps} and {\qp} protocols while it increases by increasing the number of local copies $M$ required in the GSWAP test for {\cpg} protocol. We note that the prover's memory requirement is 0 in all the three protocols. Plot~top right similarly compares the number of quantum communication rounds in the three protocols. The requirement is minimum in the {\qp} while it increases with $M$ in the {\cpg}. The communication round in {\cpp} is double to the {\qp} requirement to indicate the two-way quantum communication instead of one way in the latter. 
Plots~bottom left and bottom right compares the computational resource vs $\epsilon$ for the verifier and prover respectively. Here we have taken $D = 1/\epsilon$ for comparison. 
}  
\label{fig:all-resources}
\end{figure}

\begin{figure}[t]
\includegraphics[width=0.98\textwidth]{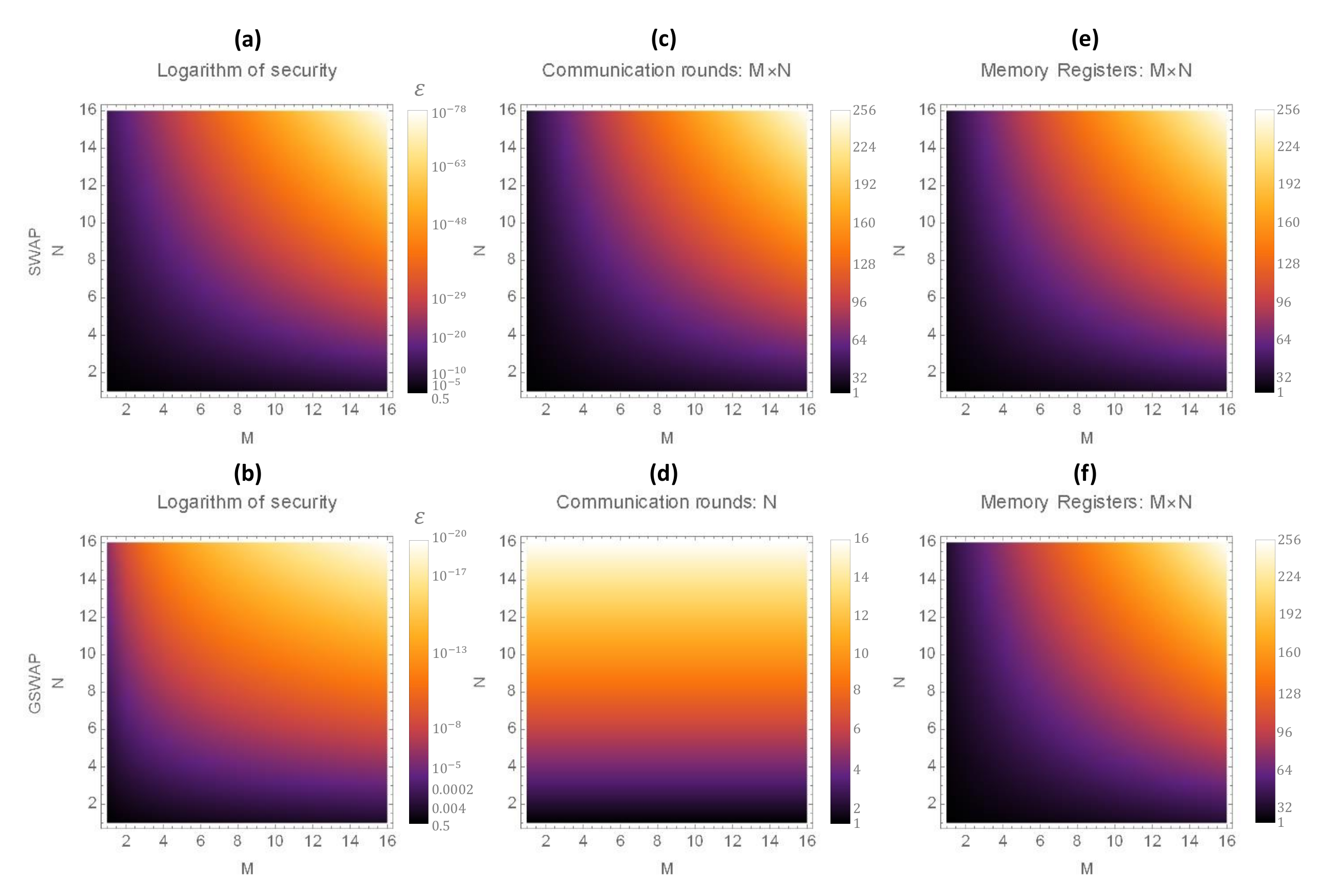}
\centering
\caption{Comparison of verification based on SWAP and GSWAP for identification protocols. The top row is associated with SWAP and the button row to GSWAP. The x-axis of the plots are all $M$ (the number of local copies) and the y-axis are all $N$ (the number of different states) and the security, quantum memory and quantum communications have been shown with colour spectrum. The left column shows the security $\epsilon$ where we have $\epsilon = 2^{-MN}$ for SWAP and $\epsilon = (M+1)^{-N}$ for GSWAP, in a logarithmic scale for more visibility. The middle column shows the required communication where we see that for GSWAP the communication rounds are independent of $M$ and only linearly growing with $N$ while as for SWAP the communication rounds grows also linear by increasing the number of local copies. The right column shows the memory which has been fixed for both SWAP and GSWAP to $M\times N$. The comparison between security and communication plots shows a trade-off between SWAP and GSWAP as the quantum verification algorithm.}  
\label{fig:swap-gswap-density}
\end{figure}

\section{Discussion} \label{sec:discussion}

We have proposed two qPUF based identification protocols which provide exponential security against any QPT adversary by solely utilising hardware-based qPUF property instead of other cryptographic properties of the device. Our primary classification in the two protocols have come about from the practical scenarios in a network, i.e. parties with varying capabilities should be able to efficiently run a secure identification protocol.
Our first protocol, {\cpp}, is proposed to be suited more in the mobile-like device settings i.e. provers having low resource would want their device to be correctly identified by a high resource verifier. Since the identification protocol requires a multi-round communication between the prover and the verifier, we have proposed efficient quantum equality-testing based verification tests to reduce the communication overhead requirement.

Our second protocol, {\qp}, is suited in the mobile-like verification setting i.e. a low-resource almost classical-like verifier would want to verify the device of a high resource quantum prover. The advantage of this protocol is that a purely classical verification algorithm is sufficient to verify the prover's device with provable security. 
{\qp} is based on the idea of verifier inserting random trap states in between the communication rounds which facilitates a secure delegation of the quantum testing to the prover. This allows the verifier to simply run a classical algorithm on the quantum test outcomes to perform successful identification. 

An interesting extension of {\qp} protocol that we have shown is the generalisation to the arbitrary distribution of traps instead of randomly inserting them in half the positions as proposed in our current protocol. With this generalisation on hiding the trap distribution, one hopes for further enhancement in security against a QPT adversary. We draw some non-trivial conclusions from this generalisation including the worsening of security to polynomial in the number of communication rounds (instead of exponential as our current protocol) when the number of trap positions is chosen uniformly over the total positions. We also remark that some distributions provide a polynomial enhancement over the current exponential security bound, thus justifying the need for hiding the number of trap positions.

\section*{Acknowledgement}

We acknowledge the support of the European Unions Horizon 2020 Research
and Innovation Programme under Grant Agreement No. 820445 (QIA) and the UK Engineering and Physical Sciences Research Council grant  EP/N003829/1.

\section*{Author Contributions}

M.Do. and N.K. did the proofs. M.Do. did the plots and simulations. All authors contributed to preparing the manuscript.  

\section*{Competing Interests}

The authors declare no competing interest.


\bibliographystyle{ACM-Reference-Format}
\bibliography{clientserverqpuf-acm}

\appendix
\section{Appendix}

\subsection{ Selective unforgeability of unknown unitary qPUFs}\label{sec:sel-unf}
We take the results in ~\cite{arapinis2021quantum} for the selective unforgeability of the qPUF to be able to use it in our security proofs. First, we restate a theorem which implies that the success probability of any QPT adversary to output the response of a Haar random challenge state $\rho \in \HilD$ with non-negligible fidelity is bounded.

\begin{theorem}\label{th:sel-qPUF}\textbf{[restated from~\cite{arapinis2021quantum}]} For any unitary evolution $U$, picked from an unknown unitary family, for any non-zero $\delta$ and any state $\rho \in \HilD$ randomly picked from Haar measure over $\HilD$, the success probability of any QPT algorithm $\A$ having a $polylog(D)$-size pre-challenge access to $U$, in outputting a state closer than $\delta$ in the fidelity distance to state $U\rho U^{\dagger}$ is bounded as follows:
\begin{equation}
    \underset{\rho\in\HilD}{\text{Pr}}[F(\A(\rho),U\rho U^{\dagger})^2 \geqslant \delta] \leqslant \frac{d + 1}{D}    
\end{equation}
where $D$ is the dimension of the Hilbert space that the challenge quantum state is picked from, and $0\leq d \leq D-1$ is the dimension of the largest subspace of $\HilD$ that the adversary can span during the learning phase.
\end{theorem}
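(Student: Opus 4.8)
The plan is to reduce the unforgeability bound to a Haar-averaging computation over the challenge space. First I would restrict attention to pure challenge states $\ket{\psi}\in\HilD$; the mixed-state case follows by convexity of fidelity together with the robustness property, so the essential content lives in the pure-state analysis. The adversary's learning phase consists of $\mathrm{polylog}(D)$ (possibly adaptive) queries to $U$; by linearity of the unitary, the responses it can reconstruct are exactly the image of the subspace $\mathcal{V}\subseteq\HilD$ spanned by the input states it has queried, and by hypothesis $\dim\mathcal{V}=d$. I would therefore write any challenge as $\ket{\psi}=\cos\theta\,\ket{\psi_\parallel}+\sin\theta\,\ket{\psi_\perp}$, with $\ket{\psi_\parallel}\in\mathcal{V}$ and $\ket{\psi_\perp}\in\mathcal{V}^{\perp}$, so that $U\ket{\psi}=\cos\theta\,U\ket{\psi_\parallel}+\sin\theta\,U\ket{\psi_\perp}$ splits into a part the adversary knows and a part it does not.

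The core step is to bound the expected squared fidelity $\mathbb{E}_{\psi}\!\left[F(\A(\psi),U\ket{\psi}\!\bra{\psi}U^{\dagger})^2\right]$. Because $U$ is drawn from an unknown unitary family, its restriction to $\mathcal{V}^{\perp}$ is, conditioned on everything the adversary has learned, an isometry whose image direction is uniformly (Haar) distributed in the orthogonal complement of $U\mathcal{V}$ and independent of the learned data. Consequently the best the adversary can do is output the known component $U\ket{\psi_\parallel}$ together with a single fixed best guess for the unknown image $U\ket{\psi_\perp}$. Taking the Haar average of the squared overlap, the contribution of the projection onto the $d$-dimensional known subspace evaluates, by the standard symmetry $\mathbb{E}_\psi[\,\ket{\psi}\!\bra{\psi}\,]=\mathbf{I}/D$, to $d/D$, while the best-guess term for the single unknown direction contributes at most $1/D$. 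This yields $\mathbb{E}_{\psi}[F^2]\leqslant (d+1)/D$.

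Finally I would pass from the expectation to the tail statement by Markov's inequality, $\underset{\rho\in\HilD}{\mathrm{Pr}}[F(\A(\rho),U\rho U^{\dagger})^2\geqslant\delta]\leqslant \mathbb{E}_\psi[F^2]/\delta\leqslant (d+1)/(\delta D)$, which for a non-negligible threshold $\delta$ is of order $(d+1)/D$; with $d=\mathrm{polylog}(D)$ this is $\negl(\log D)$, giving exactly the claimed selective-unforgeability bound. The main obstacle is the second paragraph: making precise, for an adaptive adversary using entangled quantum queries and arbitrary BQP post-processing, that (i) the reconstructible information really is confined to the $d$-dimensional span $U\mathcal{V}$ and (ii) the action of $U$ on $\mathcal{V}^{\perp}$ remains Haar-random and independent given the full transcript. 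Establishing this independence, rather than the subsequent averaging, which is routine, is where the unknown-unitary-family assumption must be invoked carefully, and it is the crux on which the whole bound rests.
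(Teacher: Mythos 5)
There is a genuine gap, and it sits in your final step. Your plan bounds the \emph{expected} squared fidelity by $(d+1)/D$ and then invokes Markov's inequality, which yields $\Pr[F^2\geqslant\delta]\leqslant (d+1)/(\delta D)$. That is strictly weaker than the theorem, which asserts the bound $\frac{d+1}{D}$ \emph{uniformly for any non-zero $\delta$}; your bound blows up as $\delta\to 0$ (e.g.\ for $\delta=1/D$ it becomes trivial), and the downstream uses of this theorem in the paper (Theorems~\ref{th:swap-sound}, \ref{th:gswap-sound}, \ref{th:colattack}) explicitly rely on the $\delta$-independent form, taking $\delta$ arbitrarily small to conclude $F_i^2\to 0$ with overwhelming probability. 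An in-expectation statement plus Markov cannot recover this. The paper's proof instead establishes something stronger than a first-moment bound: it partitions the Haar-random challenge according to whether it lies in $\Hildperp$, concedes the complementary event entirely (which by the dimension-ratio lemma occurs with probability $d/D$), and then shows that for challenges in $\Hildperp$ the adversary's output has \emph{exactly zero} fidelity with $U\rho U^{\dagger}$ except with probability at most $\frac{1}{D-d}$ --- because the correct response lies in $\Hildperpo$ while the adversary's choice of basis there is independent of the Haar-random target. Since $\{F^2\geqslant\delta\}\subseteq\{F^2\neq 0\}$ for every $\delta>0$, this gives $\Pr_{success}\leqslant \frac{1}{D-d}\cdot\frac{D-d}{D}+\frac{d}{D}=\frac{d+1}{D}$ with no $\delta$-dependence.

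Your decomposition $\ket{\psi}=\cos\theta\,\ket{\psi_\parallel}+\sin\theta\,\ket{\psi_\perp}$ is in the same spirit as the paper's partition, and your identification of the crux --- that $U$ restricted to $\mathcal{V}^{\perp}$ must remain Haar-independent of the transcript even for an adaptive adversary --- is exactly the point the paper delegates to the unknown-unitary assumption of \cite{arapinis2021quantum}. But to close the argument you must replace the averaging-plus-Markov step with a probability-of-nonzero-overlap argument on the orthogonal complement: show that a fixed (transcript-dependent) output direction in $\Hildperpo$ has zero overlap with a Haar-random target in that $(D-d)$-dimensional subspace except with probability $\frac{1}{D-d}$. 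Without that, the claimed bound does not follow.
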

\begin{proof}
Here we state the proof of above theorem according to the original proof given in~\cite{arapinis2021quantum}. Let $\A$ create an input and output database from querying $U$ namely $\Sin$ and $\Sout$, both with size $k$. Also, Let $\Hild$ be the $d$-dimensional Hilbert space spanned by elements of $\Sin$ and $\Hildout$ be the Hilbert space spanned by elements of $\Sout$ with the same dimension. As $U$ is an unknown unitary $\A$ only learns $U$ through queries thus $d \leq k$. $\A$ receives an unknown quantum state $\rho$ as a challenge and tries to output a state $\omega$ or its purification $\ket{\omega}$ as close as possible to $\rho^o = U\rho U^{\dagger}$. The objective is to bound the average probability of $\A$'s output state $\ket{\omega}$ to have a fidelity larger or equal to $\delta$ such that for any $\delta \neq 0$ the success probability will be negligible. The average probability is over all the possible states of $\rho \in \HilD$ picked at random from a uniform distribution (Haar measure). Thus we are interested in the following probability:
\begin{equation}
\begin{split}
        Pr_{success} =& \underset{\rho\in\HilD}{Pr}[F(\A(\Sin, \Sout, \rho), U\rho U^{\dagger})^2 \geq \delta] =\\
        & \underset{\rho\in\HilD}{Pr}[|\bra{\omega}\rho^o\ket{\omega}|^2 \geq \delta].
\end{split}
\end{equation}
The $\Hild$ is a known subspace for $\A$ as by picking the input states $\A$ can have the classical descriptions. Although the $\Hildout$ is an unknown subspace. Here we boost $\A$ to be a stronger adversary by assuming that $\A$ gets access to the complete set of basis of $\Hild$ and $\Hildout$ or in other words the complete description of the map in the subspace.
Let $\{\ket{e^{in}_i}\}^d_{i=1}$ and $\{\ket{e^{out}_i}\}^d_{i=1}$ be the sets of orthonormal basis of the input and output subspaces. 
Now, we partition the set of all the challenges to two parts: the challenges that are completely orthogonal to $\Hild$ subspace, and the rest of the challenges that have non-zero overlap with $\Hild$. We denote the subspace of all the states orthogonal to $\Hild$ as $\Hildperp$. In other words, we will analyse the target probability $Pr_{success} = \underset{\rho\in\HilD}{Pr}[|\bra{\omega}\rho^o\ket{\omega}|^2 \geq \delta]$ in terms of the partial probabilities
\begin{equation}
\begin{split}
        &\underset{\rho\in\HilD, \rho\in\Hildperp}{Pr}[|\bra{\omega}\rho^o\ket{\omega}|^2 \geq \delta] \quad \text{and}\\
        &\underset{\rho\in\HilD, \rho\not\in\Hildperp}{Pr}[|\bra{\omega}\rho^o\ket{\omega}|^2 \geq \delta].
\end{split}
\end{equation}
Because the probability of $\rho$ being in any particular subset is independent of the adversary's picked subspace, the success probability can be written as:
\begin{equation}
\begin{split}
        Pr_{success} = & \underset{\rho\in\Hildperp}{Pr}[|\bra{\omega}\rho^o\ket{\omega}|^2 \geq \delta]\times Pr[\rho \in \Hildperp] + \\
        & \underset{\rho\not\in\Hildperp}{Pr}[|\bra{\omega}\rho^o\ket{\omega}|^2 \geq \delta]\times Pr[\rho \not\in \Hildperp]
\end{split}
\end{equation}
where $Pr[\rho \in \Hildperp] = 1-Pr[\rho \not\in \Hildperp]$ denotes the probability of $\rho$ picked accorging to Haar measure, being projected into the subspace of $\Hildperp$. Now we refer to Lemma 1 in~\cite{arapinis2021quantum} stating that this probability for any subspace, is equal to the ratio of the dimensionalities. As $\Hildperp$ is a $D-d$ dimensional subspace, $Pr[\rho \in \Hildperp] = \frac{D-d}{D}$ and respectively $Pr[\rho \not\in \Hildperp] = \frac{d}{D}$. Also the probability is upper-bounded by the cases that the adversary can always get a good fidelity for $\rho \not\in \Hildperp$:
\begin{equation}
    Pr_{success} \leq \underset{\rho\in\Hildperp}{Pr}[|\bra{\omega}\rho^o\ket{\omega}|^2 \geq \delta]\times(\frac{D-d}{D}) + \frac{d}{D}
\end{equation}
Finally it only remains to bound the success probability of $\A$ over the subspace completely orthogonal to the learnt one. Any state $\ket{\omega}$ produced by $\A$ can be written in the following form
\begin{equation}
    \ket{\omega} = \sum^d_{i=1}\beta_i\ket{e^{out}_i} + \sum^D_{i=d+1}\gamma_i\ket{q_i} 
\end{equation}
where the first part is spanned by the basis of learnt output subspace and the second part has been produced in $\Hildperpo$ with $\{\ket{q_i}\}^{D-d}_{i=1}$ being a set of bases for $\Hildperpo$. For all $\rho\in\Hildperp$ as the unitary preserve the inner product the output $\rho^o$ is also orthogonal to $\Hildout$. Thus the first part of the state $\ket{\omega}$ always gives a $0$ fidelity and for $\A$ to optimise the probability all $\beta_i$ should be zero. This leads to all adversaries states be in the form of $\sum^{D-d}_{i=1}\gamma_i\ket{q_i} \in \Hildperpo$ where the normalisation condition is $\sum^{D-d}_{i=1}|\gamma_i|^2 = 1$. Now according to the argument given in~\cite{arapinis2021quantum}, the selection of this $\ket{q_i}$ basis is completely independent to the actual basis of $\rho$ as it has been randomly picked from Haar measure over $\HilD$. More precisely, one needs to bound the probability of the average fidelity being greater than $\delta$ for this subspace. By using the symmetry of the fidelity and Haar distributed states, it can be shown that the average can be taken over both $\rho^o$ and $\ket{\omega}$:
\begin{equation}
\small
\begin{split}
    \\ & \underset{\rho^o\Hildperpo}{\int}|\bra{\omega}\rho_x^o\ket{\omega}|^2d\mu_x = \underset{\rho^o\Hildperpo}{\int}|\sum^{D-d}_{i=1}\overline{\gamma_i}|\bra{q_i}\rho_x^o\ket{q_i}|^2d\mu_x = \\
    & \underset{\rho^o\in\Hildperpo}{\int}|\sum^{D-d}_{i=1}\overline{\gamma_{i_{x}}}|\bra{q_i}\rho^o\ket{q_i}|^2d\mu_x = \underset{\ket{\omega}\in\Hildperpo}{\int}|\bra{\omega_x}\rho^o\ket{\omega_x}|^2d\mu_x
\end{split}
\end{equation}
where $d\mu$ denotes the Haar measure. According to our uniformity assumption, the $d\mu$ here is the Haar measure. Note that $\ket{\omega}$ can be different for any new challenge. Now instead of bounding this average with $\delta$, a more general case can be considered in which this average is any non-zero quantity. As it has been shown~\cite{arapinis2021quantum}, the probability of being zero i.e. $\underset{\ket{\omega}\in\Hildperpo}{Pr}[|\bra{\omega}\rho^o\ket{\omega}|^2 = 0]$ is greater than the probability of being projected into a $D-d-1$ dimensional subspace hence we have:
\begin{equation}
\begin{split}
    & \underset{\ket{\omega}\in\Hildperpo}{Pr}[|\bra{\omega}\rho^o\ket{\omega}|^2 = 0] \geq \\
    & \underset{x}{Pr}[(\sum^{D-d}_{i,j=1}|\overline{\gamma_{i_{x}}}\alpha_j|^2|\bra{q_{i}}\Pi_j\ket{q_{i}}|) = 0] = \frac{D-d-1}{D-d}.
\end{split}
\end{equation}
Here $\alpha_i$ are coefficients for the expansion of $\rho^o$. Consequently,
\begin{equation}
\underset{\rho^o\in\Hildperpo}{Pr}[|\bra{\omega}\rho^o\ket{\omega}|^2 \neq 0] \leq \frac{1}{D-d}
\end{equation}
which also holds for any non-zero delta. Substituting this into the success probability the result will be
\begin{equation}
    Pr_{success} \leq \frac{1}{D-d}\times(\frac{D-d}{D}) + \frac{d}{D} = \frac{d+1}{D}
\end{equation}
and the theorem has been proved.
\end{proof}
\subsection{Quantum Equality Tests}\label{sec:test}
Distinguishing two unknown quantum states is a central ingredient in quantum information processing. This task is often referred to as the `state discrimination task'. The celebrated Holevo-Helstrom bound \cite{holevo1973bounds} relates the optimal state distinguishability of two unknown states with the trace distance between the states. This implies that unless the states are the same (up to a global factor), it is impossible to deterministically distinguish the two states. An important application of state discrimination is the task of Equality testing \cite{buhrman2001quantum, barenco1997stabilization, xu2015experimental}. This is an extremely simple task but a building block for lots of complicated quantum protocols. The objective of Equality testing, one that we consider in our work, is to test whether two \emph{unknown} quantum states are the same. This is a well-studied topic and we describe the optimal quantum protocols for Equality testing.

\subsubsection{SWAP test}\label{sec:swap}

Given a single copy of two unknown quantum states $\rho$ and $\sigma$, is there a simple test to optimally determine whether the two states are equal or not? This question was answered in affirmative by Buhrman et al \cite{buhrman2001quantum} when they provided a test called the SWAP test. This test was initially used by the authors to prove an exponential separation between classical and quantum resources in the simultaneous message passing model. Since then it has been used as a standard tool in the design of various quantum algorithms \cite{buhrman2010nonlocality,kumar2017efficient}. A SWAP test circuit takes as an input the two unknown quantum states $\rho$ and $\sigma$ and attaches an ancilla $\ket{0}$. A Hadamard gate is applied to the ancilla followed by the control-SWAP gate and again a Hadamard on the ancilla qubit. Finally, the ancilla is measured in the computational basis and we conclude that the two states are equal if the measurement outcome is `0' (labelled accept). Figure~\ref{fig:swap} illustrates this test in the special case when the state $\sigma$ is a pure state and shown by $\ket\psi$.
\begin{figure}[ht!]
    \centering
    \[    \Qcircuit @C=2em @R=1.4em {
       & \lstick{\ket{0}} & \gate{H} & \ctrl{1} & \gate{H} & \meter \\
       & \lstick{\rho} & \qw & \multigate{1}{\text{SWAP}} & \qw & \qw \\
       & \lstick{\ket{\psi}} & \qw & \ghost{\text{SWAP}} & \qw & \qw
    }\]
    \caption{The SWAP test circuit}
    \label{fig:swap}
\end{figure}
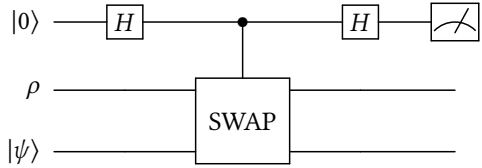

It can be shown that the probability the SWAP test accepts the states $\rho$ and $\sigma$ is \cite{kobayashi2003quantum},
\begin{equation}
    \text{Pr}[\text{SWAP accept}] = \frac{1}{2} + \frac{1}{2}\text{Tr}(\rho\sigma)
\end{equation}

In the special case of when at least one of the states (let's say $\sigma$) is a pure state $\sigma = \ket{\psi}\bra{\psi}$, the probability of acceptance is,
\begin{equation}
     \text{Pr}[\text{SWAP accept}] = \frac{1}{2} + \frac{1}{2} |\bra{\psi}\rho\ket{\psi} = \frac{1}{2} + \frac{1}{2}F^2(\rho, \ket{\psi}\bra{\psi})
     \label{eq:swapaccept}
\end{equation}

Thus when at-least one of the two states is a pure state, the acceptance probability is related to the fidelity between the states. This implies when the states are the same, the probability of acceptance is 1. However, when the states are different then if the SWAP test accepts the states, this implies an error. Thus the error in the SWAP test when the states are different (also called the one-sided error) is $\text{Pr}[\text{accept}]$. This error can, however, be brought down to any desired error $\epsilon > 0$ by running multiple instances of the SWAP test circuit. The number of instances required to bring down the error probability to a desired $\epsilon$ is,

\begin{equation}
\begin{split}
    \text{Pr}[\text{SWAP error}] & = \prod^{M}_{j=1}\text{Pr}[\text{SWAP accept}]_j = (\frac{1}{2} + \frac{1}{2}F^2)^M = \epsilon \\
    & \Rightarrow M(\log(1+F^2)-1) = \log(\epsilon) \Rightarrow M\approx \mathcal{O}(\log(1/\epsilon))
\end{split}
\end{equation}
where $F = F(\rho, \ket{\psi}\bra{\psi}) = \sqrt{\bra{\psi}\rho\ket{\psi}}$ and we use the fact that fidelity is independent of $\epsilon$. 

\subsubsection{Generalised SWAP test}\label{sec:gswap}
The above SWAP test is optimal in Equality testing (in a single instance) of two unknown quantum states when one has a single copy of the two states. However, there are certain quantum protocols where one has access to multiple copies of one unknown state $\ket{\psi}$ and only a single copy of the other unknown state $\rho$ and the objective is to provide an optimal Equality testing circuit. Considering this scenario, Chabaud et al.~\cite{chabaud2018optimal} provided an efficient construction of such a circuit, generalised SWAP (GSWAP) test circuit. A GSWAP circuit takes as an input a single copy of $\rho$, M copies of $\ket{\psi}$ and $\ceil[\big]{\log M+1}$ copies of the ancilla qubit $\ket{0}$. The generalised circuit is then run on the inputs, and the ancilla qubits are measured in the computational basis. Figure~\ref{fig:gswap} is a generic illustration of such a circuit. For more details on the circuit refer to the original work \cite{chabaud2018optimal}.
\begin{figure}[ht!]
\includegraphics[scale=0.25]{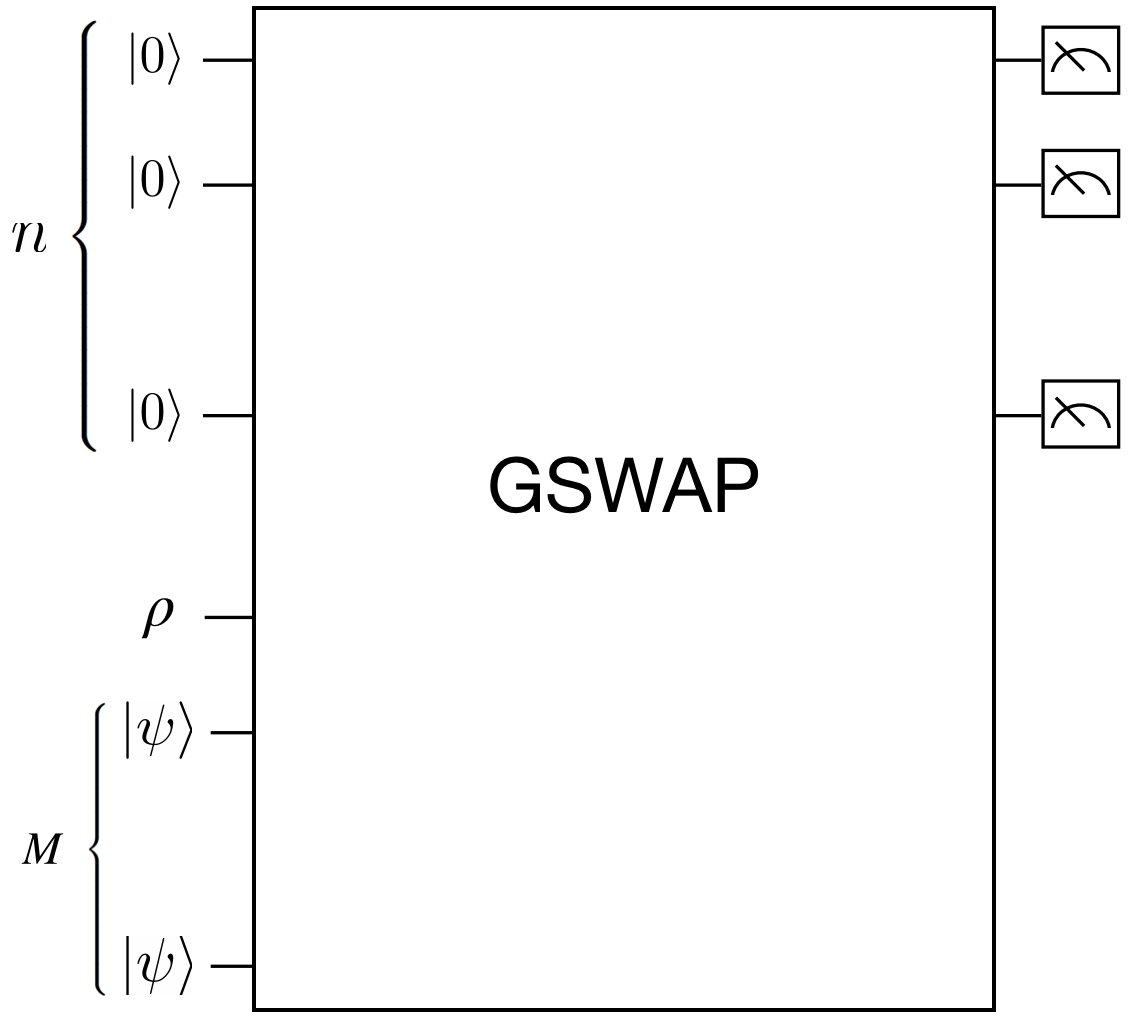}
    \centering
    \caption{GSWAP: A generalisation of the SWAP test with a single copy of $\rho$ and $M$ copies of $\ket{\psi}$. The circuit also inputs $n = \ceil[\big]{\log M+1}$ ancilla qubits in the state $\ket{0}$. At the end of the circuit, the ancilla states are measured in the computational basis.}
    \label{fig:gswap}
\end{figure}
It can be shown that the probability the GWAP circuit accepts two quantum states $\rho$ and $\ket{\psi}$ is,
\begin{equation}
     \text{Pr}[\text{GSWAP accept}] = \frac{1}{M+1} + \frac{M}{M+1} \bra{\psi}\rho\ket{\psi} = \frac{1}{M+1} + \frac{M}{M+1}F^2
     \label{eq:gswap}
\end{equation}
where $F = F(\rho, \ket{\psi}\bra{\psi})$. We note that in the special case of $M=1$, the GSWAP test reduces to the SWAP test. Also in a single instance, GSWAP provides a better Equality test compared to the SWAP test since it reduces the one-sided error probability. In the limit $M \rightarrow \infty$, we obtain the optimal acceptance probability of $\text{Pr}[\text{accept}] = F^2 = \bra{\psi}\rho\ket{\psi}$. Another important feature of GSWAP is that it can achieve any desired success probability $\epsilon (\geqslant F^2)$ in just a single instance which is impossible to achieve using SWAP circuit. However, the number of copies required is exponentially more than the number of instances that the SWAP circuit has to run to achieve the same error probability,
\begin{equation}
\begin{split}
    \text{Pr}[\text{GSWAP error}] & = \text{Pr}[\text{GSWAP accept}] = \frac{1}{M+1} + \frac{M}{M+1}F^2 = \epsilon \\
    & \Rightarrow M\approx \mathcal{O}(1/\epsilon)
\end{split}
\label{eq:gswaperror}
\end{equation}

Hence one decides the use of either SWAP test or GSWAP test depending on the specific application.

\subsubsection{Abstract and ideal quantum Equality test}\label{sec:itest}

From the tests described above, we define an abstract and ideal version of the quantum Equality test when at-least one of the states is a pure state, and relate it to the fidelity distance as discussed in \cite{arapinis2021quantum} paper. 

\begin{definition}[Quantum Testing Algorithm]\label{def:test} Let $\rho^{\otimes \kappa_1}$ and $\ket{\psi}^{\otimes \kappa_2}$ be $\kappa_1$ and $\kappa_2$ copies of two quantum states $\rho$ and $\ket{\psi}\bra{\psi}$, respectively. A Quantum Testing algorithm $\T$ is a quantum algorithm that takes as input the tuple ($\rho^{\otimes \kappa_1}$,$\ket{\psi}^{\otimes \kappa_2}$) and generates an outcome `1'(accept) when $\rho$ and $\ket{\psi}\bra{\psi}$ are equal with the probability,
\begin{equation}
\mathrm{Pr}[1 \leftarrow \T(\rho^{\otimes \kappa_1}, \ket{\psi}^{\otimes \kappa_2})] = f(\kappa_1,\kappa_2, F(\rho, \ket{\psi}\bra{\psi}))
\end{equation}
where $F(\rho, \ket{\psi}\bra{\psi})$ is the fidelity between the two states and $f(\kappa_1,\kappa_2, F(\rho, \ket{\psi}\bra{\psi}))$ satisfies the following limits:
\begin{equation}
 \begin{cases}
    \lim_{F(\rho, \ket{\psi}\bra{\psi}) \rightarrow 1}f(\kappa_1,\kappa_2, F(\rho, \ket{\psi}\bra{\psi})) = 1  & \forall\:(\kappa_1,\kappa_2)\\
    \lim_{\kappa_1 =1,\kappa_2 \rightarrow \infty}f(\kappa_1,\kappa_2, F(\rho, \ket{\psi}\bra{\psi})) = F^2(\rho, \ket{\psi}\bra{\psi})\\
    \lim_{\kappa_1 \rightarrow \infty,\kappa_2 =1}f(\kappa_1,\kappa_2, F(\rho, \ket{\psi}\bra{\psi})) = F^2(\rho, \ket{\psi}\bra{\psi})\\
    \lim_{F(\rho, \ket{\psi}\bra{\psi}) \rightarrow 0}f(\kappa_1,\kappa_2, F(\rho, \ket{\psi}\bra{\psi})) = \epsilon(\kappa_1, \kappa_2)
  \end{cases} 
\end{equation}
with $\epsilon(\kappa_1,\kappa_2)$ is the statistical error due to the Equality test algorithm.
\end{definition}
As an example, for the GSWAP test where $\kappa_1 = 1$ and $\kappa_2 = M$, we obtain from Eq~\ref{eq:gswaperror} that the probability of acceptance in the limit ${F(\rho, \ket{\psi}\bra{\psi}) \rightarrow 1}$ is 1, while it is $\frac{1}{M+1}$ in the limit ${F(\rho, \ket{\psi}\bra{\psi}) \rightarrow 0}$. It can be inferred from the above definition that the quantum test can be idealized by forcing the $\epsilon(\kappa_1,\kappa_2)$ to be zero for any given number of copies. This implies that one can abstractly construct an ideal test in a single instance case (when one is provided with a single copy of one quantum state and multiple copies of the other state),
 
\begin{definition}[Single Instance Ideal Test Algorithm]\label{def:ideal-test} We call a test algorithm according to Definition~\ref{def:test}, a $\Ti$ test algorithm when one is provided a single copy of the state $\rho$ and multiple copies of the state $\ket{\psi}$ (or vice-versa) with fidelity $F(\rho, \ket{\psi}\bra{\psi})$ the test responds as follows:
\end{definition}
\begin{equation}
 \Ti := \mathrm{Pr}[1 \leftarrow \Ti(\rho, \ket{\psi}\bra{\psi})] = F^2(\rho, \ket{\psi}\bra{\psi})
\end{equation}

\subsection{Proof of Theorem~\ref{th:cv-clattack}}\label{ap:lrv-sound}
Before proving the above theorem, we remark that any classical Eve's strategy to produce a valid $N$-bit string $S_N$ can be divided into two categories,

\begin{enumerate}
    \item \textbf{Independent guessing strategy:} Under this strategy, Eve tries to independently guess each bit of the string $S_N$ that would pass Alice's \texttt{cVer} algorithm. This also relates to the strategy of independently finding valid response and trap positions.

    \item \textbf{Global strategy:} Here,  Eve strategy is to output a string $S_N$ using the global properties of the \texttt{cVer} such that the string passes the verification test with maximum probability. In contrast to the previous strategy, the probability to output each bit $s_i$ is not necessarily independent with the global strategy.  
\end{enumerate}

We calculate the optimal success probability of Eve in both cases and show that by optimizing over both the strategies, we obtain a higher success probability for Eve in the optimal global strategy scenario. Although, the two strategies converge in the limit of large $N$. Hence we bound Eve's success probability by the optimal global strategy.\\

\noindent \textbf{1. Independent guessing strategy:} Under this strategy, Eve independently guesses each bit with the probability,

\begin{equation}
    Pr[s_i = 0] = \alpha, \quad Pr[s_i = 1] = 1 - \alpha
\end{equation}
where $\alpha \in [0,1]$.

We denote the resulting string generated by Eve's strategy as $S_{id} = \{s_1,\cdots, s_N\}$. In order for $S_{id}$ to pass the \texttt{cVer} verification algorithm, it must simultaneously pass the \texttt{test1} and \texttt{test2}. Since Eve's strategy is guessing each bit independently, hence the probability for her to pass the \texttt{test1} and \texttt{test2} are independent. Let us look at the probability of passing the \texttt{test1} (which corresponds to checking the $N/2$ positions marked $b = 1$,

\begin{equation}
        \text{Pr}[\texttt{test1 pass}] = \text{Pr}[s_{p_1} = 0]\times\cdots\times \text{Pr}[s_{p_{\frac{N}{2}}} = 0] = \alpha^{\frac{N}{2}}
\end{equation}
where $p_i$ correspond to the $b=1$ marked positions. 

If Eve's generated string passes \texttt{test1}, then Alice runs the \texttt{test2} to check if \emph{count}, which is the number of bits that are 1 in the remaining $N/2$ bits marked with $b=0$, lies within the interval $\big \lvert count - \frac{N}{4} \big\rvert \leqslant \delta_{er}$. Eve succeeds in passing this test with the probability, 
\begin{equation}
    \text{Pr}[\texttt{test2 pass}] = \sum_{x= N/4 - \delta_{er}}^{N/4 + \delta_{er}}(1-\alpha)^{x}\alpha^{\frac{N}{2} - x}\times {N/2 \choose x} \approx (2\delta_{er}+1)(1-\alpha)^{\frac{N}{4}}\alpha^{\frac{N}{4}}\times {N/2 \choose N/4} 
\end{equation}
where the approximation holds since we assume that $\delta_{er} \ll N$. From the above results, we see that the probability that Eve's string $S_{id}$ passes the \texttt{cVer} verification algorithm is,
\begin{equation}
        \text{Pr}[\text{Ver Accept}_{\text{Eve},\alpha}] = \text{Pr}[\texttt{test1 pass}_{\alpha}]\cdot \text{Pr}[\texttt{test2 pass}_{\alpha}] \approx (2\delta_{er}+1)\alpha^{\frac{3N}{4}}(1-\alpha)^{\frac{N}{4}}\times {N/2\choose N/4}
\end{equation}
This is Eve's acceptance probability for a given $\alpha$. An optimal strategy for Eve is find the optimal value of $\alpha$ that maximises the acceptance probability. This corresponds to, 

\begin{equation}
        \frac{\partial}{\partial \alpha}\text{Pr}[\text{Ver Accept}_{\text{Eve},\alpha}] \Rightarrow \frac{\partial}{\partial \alpha}(\alpha^{\frac{3N}{4}}(1-\alpha)^{\frac{N}{4}}) = 0 \Rightarrow \alpha = \frac{3}{4}
\end{equation}

Thus the maximum acceptance probability of Eve using an independent guessing strategy is:
\begin{equation}
     \text{Pr}[\text{Ver Accept}_{\text{Eve}}] = (2\delta_{er} + 1) \frac{3^{\frac{3N}{4}}}{2^{2N}}\times {N/2\choose N/4} \approx \mathcal{O}(2^{-N})
\end{equation} \\

\noindent\textbf{2. Global strategy:} The second category of Eve's strategy is to guess the $N$ bit string which passes the \texttt{cVer} test algorithm with maximum probability. Here, Eve is not restricted to choosing each bit independently.  
To find the optimal global strategy we look at the \texttt{test1} and \texttt{test2} algorithms and extract out essential properties that can be leveraged by Eve to pass the verification test. We note that
\begin{itemize}
    \item Since the good and trap response positions corresponding to $b = 0$ and 1 are chosen uniformly randomly by Alice, hence Alice does not have any information on the index set $P$ corresponding to $b = 1$ (thus no information on $b=0$ positions too).

    \item Eve knows the statistics of 0's and 1's in the desired string to pass the \texttt{cVer}. For example, a string must have a minimum of $\approx 3N/4$ bits which are $0$, otherwise, the string necessarily fails the \texttt{test1} or \texttt{test2} or both.
\end{itemize}
Based on the above facts, any global strategy for Eve should consist of optimizing the number of 0's and 1's to pass both verification tests.

Before considering the optimal global attack strategy, we give an example of a specific (non-optimal) attack strategy to provide intuition on the kind of strategies that Eve can adopt here.\\

\noindent \textbf{Example global strategy:} The first global strategy that one might think of is to try to guess $P$, since passing the \texttt{test1} reduces to finding the strings that have bits `0' is all the $p_i$ positions i.e. positions marked $b=1$. If Eve successfully manages to guess the $b=1$ positions, then she has a deterministic strategy of winning the \texttt{test2}, since she also knows the $b=0$ trap positions. Across these positions she can deterministically assign the bits such that the \emph{count} of the number of 1 bits lie within the interval $\big \lvert count - \frac{N}{4} \big\rvert \leqslant \delta_{er}$.
  
We denotes Eve's generated string with this strategy to be $S_g$. Hence the probability of $S_g$ passing \texttt{test1} is equal to correctly guessing the $\frac{N}{2}$ positions marked $b=1$,
\begin{equation}
\text{Pr}[\texttt{test1 pass}_{S_g}] = \text{Pr}[\text{guess $b=1$ positions}] = {N\choose N/2}^{-1}
\end{equation}

Once this test passes, then test2 passes with certainty. Now the probability of passing the \texttt{cVer} verification algorithm is, 

\begin{equation}
\begin{split}
    \text{Pr}[\text{Ver accept}_{\text{Eve},S_g}] &= \text{Pr}[\texttt{test1 pass}_{S_g}\wedge \texttt{test2 pass}_{S_g}]\\
    &= \text{Pr}[\texttt{test1 pass}_{S_g}]\cdot \text{Pr}[\texttt{test2 pass}_{S_g}|\texttt{test1 pass}_{S_g}] \\
    &= {N\choose N/2}^{-1}\cdot 1\\
    &\leqslant N^{-\frac{N}{2}} 
\end{split}
\end{equation}
We show that this global strategy is not optimal and Eve can design an optimal global strategy by properly utilising the part the second part of the information.

First, we argue that maximising the number of 0's will necessarily increase the success probability of passing \texttt{test1}. Let us assume that Eve sends an all `0' string $S_g$ to Alice. Since \texttt{test1} checks only if in the $b=1$ marked positions are 0, so $S_g$ will always pass the first test. However, this string necessarily fails the \texttt{test2} since the \emph{count} for this test is $N/2$ which is much higher than the tolerated limit. 
 
Thus there always exists a global strategy with an optimal number of bits which are 1 in $S_g$ in the case of $\delta_{er}=0$, or more precisely a strategy that allows the flexibility of having a set of values for the number of `1' bits that the \texttt{test2} tolerates in case of $\delta_{er}\neq 0$. \\

\noindent \textbf{Optimal global strategy}: We say that an optimal global strategy $\E_{gop}$ is the one that outputs a string $S_{gop}$  with $c_1$ number of 1 bits, where $c_1 \in m_{valid} = \{\frac{N}{4} - \delta_{er}, \dots,  \frac{N}{4} + \delta_{er}\}$.\\

\noindent \textbf{Optimality argument}: We prove the optimality of our test by the contradiction argument. Let us assume that there is a strategy $\E_{g}$ different from above which produces a string $S_g$ that succeeds with the verification acceptance probability higher than $S_{gop}$. Now, either all the strings that $\E_g$ outputs have $c_1$ number of 1 bits, where $c_1$ lies within the optimal boundary $ m_{valid}$. In this case $\E_g$ falls within the $\E_{gop}$ strategy set. Or, there is at least one string that $\E_g$ outputs with $c_1$ number of 1 bits such that $c_1 \not\in m_{valid}$. In this case, that string will necessarily fail \texttt{test2}, even if it passes \texttt{test1}. This is because for the strategy $\E_g \not\in \E_{gop}$ to pass, the bits in $S_g$ which are 1 must necessarily appear in the positions marked $b=0$ (trap positions). And since the number of 1 bits $c_1 \not\in m_{valid}$, this implies it will fail the \texttt{test2}. Thus, $\text{Pr}[\text{Ver Accept}_{\text{Eve},\E_g \not\in \E_{gop}}] = 0$.

Note that the condition of $c_1 \in m_{valid}$ is necessary but not a sufficient condition for passing the verification algorithm $\texttt{cVer}$ i.e. any string with with $c_1 \not\in m_{valid}$, will always fail but not all strings with $c_1 \in m_{valid}$ will always pass the verification. Thus we can define the largest possible set of potentially valid strings which Eve needs to choose from to maximise her acceptance probability. As a result, we can define the optimal strategy $\E_{gop}$'s event space to be ${N\choose c_1}$. This is the set of all strings with the number of bits $c_1 \in m_{valid}$. We can now find the optimal global probability which is the probability that both the tests of \texttt{cVer} pass,

\begin{equation}
\begin{split}
    \text{Pr}[\text{Ver accept}_{\text{Eve}, S_{gop}}] &= \text{Pr}[\texttt{test1 pass}_{S_{gop}}\wedge \texttt{test2 pass}_{S_{gop}}]\\
    &= \text{Pr}[\texttt{test1 pass}_{S_{gop}}]\cdot \text{Pr}[\texttt{test2 pass}_{S_{gop}}|\texttt{test1 pass}_{S_{gop}}] 
\end{split}
\end{equation}
To calculate $\text{Pr}[\texttt{test1 pass}_{S_{gop}}]$, we need to find the number of strings $S_{gop}$ from the whole set of strings $\{0,1\}^N$ with $c_1 \in m_{valid}$ bits and which passes the first test. In other words, the string $S_g$ must have bits 0 in all the $b=1$ marked positions and the bits  1 in the $b=0$ marked positions.

Thus there are $N/2$ positions out $N$ where the bits 1 can be placed without the \texttt{test1} getting rejected.

For a specific $c_1$, the total number of such strings is equal to the possible ways of distributing $c_1$ objects (1's) in $N/2$ positions:
\begin{equation}
\#\text{(correct strings)} = {N/2 \choose c_1}    
\end{equation}

If one of these `correct strings' is picked, it will necessarily also satisfy the condition of the second test. Hence the conditional probability $\text{Pr}[\texttt{test2 pass}_{S_{gop}}|\texttt{test1 pass}_{S_{gop}}]  = 1$. The probability of passing the first test is,
\begin{equation}
    \text{Pr}[\texttt{test1 pass}_{S_{gop}}] = {N/2 \choose c_1}\Big/{N \choose c_1}
\end{equation}

The above \texttt{test1} passing probability is for a single $c_1 \in m_{valid}$. Summing over the probabilities of all the accepted $c_1$ ,
\begin{equation}
\text{Pr}[\texttt{test1 pass}_{S_{gop}}] = \sum_{c_1 \in m_{valid}} \frac{{\frac{N}{2}\choose c_1}}{{N\choose c_1}} = \sum^{\delta_{er}}_{k=-\delta_{er}} \frac{{\frac{N}{2}\choose \frac{N}{4} + k}}{{N\choose \frac{N}{4} + k}} = \frac{(\frac{N}{2})!}{N!} \sum^{\delta_{er}}_{k=-\delta_{er}} \frac{(\frac{3N}{4} - k)!}{(\frac{N}{4} - k)!}    
\end{equation}

In the limit $\delta_{er} \ll N$, the sum will converge,
\begin{equation}
\text{Pr}[\texttt{test1 pass}_{S_{gop}}] = (2\delta_{er} + 1)\cdot \frac{(\frac{N}{2})!(\frac{3N}{4})!}{N!(\frac{N}{4})!}    
\end{equation}

From the above equations, the probability that Eve passes the \texttt{cVer} algorithm  using the global strategy,

\begin{equation}
\begin{split}
\text{Pr}[\text{Ver accept}_{\text{Eve},S_{gop}}] 
&= \text{Pr}[\texttt{test1 pass}_{S_{gop}}]\cdot \text{Pr}[\texttt{test2 pass}_{S_{gop}}|\texttt{test1 pass}_{S_{gop}}] \\
&= (2\delta_{er} + 1)\times \frac{(\frac{N}{2})!(\frac{3N}{4})!}{N!(\frac{N}{4})!}\cdot 1 \\ 
&\leqslant \mathcal{O}(N^{-N/2})
\end{split}
\end{equation}

\noindent\textbf{3. Probability comparison of Independent guessing strategy and Global strategy:} To find the optimal classical attack, we compare the two categories of the attack strategies of Eve.

We fix the accepted tolerance value $\delta_{er}=1$ for the comparison. The same result holds for other fixed $\delta_{er}$ values. Figure~\ref{fig:Figure3} shows the behaviour of the acceptance probabilities of Eve in the independent guessing strategy and global strategy as an increasing function of the string length $N$.


From the simulation, we infer that the two strategies have inverse exponential form as expected. Also, they both converge for large enough $N$ values. This also confirms the fact that the optimal strategy lies in finding the correct number of 1's in the string and the difference comes from our approximation in using the frequency interpretation of the probabilities in the smaller N. Using Stirling's approximation $n! \approx \sqrt{2n\pi}(\frac{n}{e})^n$ one can check that $\frac{1}{{N\choose \frac{N}{4}}} \approx (\frac{4}{3^{3/4}})^{-N}$ which gives exactly the same bound as the independent guessing strategy. Although, in small $N$ the global strategy is slightly better.
Finally, we use Stirling's approximation ${2n\choose n} \approx \frac{2^{2n}}{\sqrt{\pi n}}$ to obtain the common factor of both probabilities we can bound the adversary's optimal success probability as,
\begin{equation}
\text{Pr}[\text{Ver Accept}_{\text{Eve}}] \approx \frac{3^{3N/4}}{2^{2N}}\times\frac{2^{N/2}}{\sqrt{\frac{\pi N}{4}}} = \frac{2}{\sqrt{N\pi}}(\frac{2^6}{{3^3}})^{-N/4} \approx \mathcal{O}(2^{-N}) \quad \text{for large enough N}    
\end{equation}

This completes the proof of Theorem~\ref{th:cv-clattack}.
%
%

\subsection{Average probability convergence}\label{ap:avgprob}
Here we approximate the following integral for the average probability that Eve wins the classical verification by performing the optimal classical strategy when $p$ is chosen to be a uniform distribution.
\[\underset{p}{\text{Pr}[\text{Ver accept}_{\text{Eve}}]} = \int_{0}^{1} \frac{2}{N + 2}\frac{(N-Np)!(\frac{N+Np}{2})!}{N!(\frac{N-Np}{2})!} dp\]

We choose $NP=k$ thus we have $Ndp=dk$ and we can rewrite the integral as:
\[\underset{p}{\text{Pr}[\text{Ver accept}_{\text{Eve}}]} =  \frac{2}{N(N + 2)}\int_{0}^{N}\frac{(N-k)!(\frac{N+k}{2})!}{N!(\frac{N-k}{2})!} dk\]

Now we can approximate the integral for discrete $k \in \{0,1,\dots,N\}$. Hence we have:
\[\underset{p}{\text{Pr}[\text{Ver accept}_{\text{Eve}}]} \approx \overline{\text{Pr}[\text{Ver accept}_{\text{Eve}}]} = \frac{2}{N(N + 2)} \sum^{N}_{k=0} \frac{(N-k)!(\frac{N+k}{2})!}{N!(\frac{N-k}{2})!}\]

The above series can be opened further as:

\begin{equation}
\begin{split}
    \sum^{N}_{k=0} \frac{(N-k)!(\frac{N+k}{2})!}{N!(\frac{N-k}{2})!} & = 1 + \frac{(N-1)!}{N!}\times\frac{(\frac{N}{2}+\frac{1}{2})!}{(\frac{N}{2}-\frac{1}{2})!} + \frac{(N-2)!}{N!}\times\frac{(\frac{N}{2}+1)!}{(\frac{N}{2}-1)!} + \dots + 1 \\
    & = 1 + \frac{1}{N}\times\frac{(\frac{N}{2} + \frac{1}{2})\cancel{(\frac{N}{2} - \frac{1}{2})!}}{\cancel{(\frac{N}{2} - \frac{1}{2})!}} + \frac{1}{N(N-1)}\times\frac{(\frac{N}{2}+1)(\frac{N}{2})\cancel{(\frac{N}{2}-1)!}}{\cancel{(\frac{N}{2}-1)!}} + \dots + 1 \\
    & \underset{N \gg 1}{\approx} 2 + \frac{\frac{N}{2}}{N} + \frac{(\frac{N}{2})^2}{N^2} + \frac{(\frac{N}{2})^3}{N^3} + \dots \\
    & = 2 + \sum^{N-1}_{i=1}(\frac{1}{2})^i \approx 2 + (1 - 2^{1-N}) \approx 3 
\end{split}
\end{equation}

where the sum has been approximated for large $N$. Thus we can write the average probability in the limit of large $N$ as follows,
\begin{equation}
\underset{p}{\text{Pr}[\text{Ver accept}_{\text{Eve}}]} \approx \overline{\text{Pr}[\text{Ver accept}_{\text{Eve}}]} = \frac{6}{N(N+2)}    
\end{equation}

\end{document}